\newcommand\bG{\mathbf{G}}
\newcommand\EE{{\mathbb E}}
\newcommand\RR{{\mathbb R}}
\newcommand\TT{{\mathbb T}}
\newcommand\ZZ{{\mathbb Z}}
\newcommand{\ri}{\mathrm{i}}
\newcommand{\mc}[1]{{\mathcal #1}}
\newcommand{\mf}[1]{{\mathfrak #1}}
\newcommand{\mb}[1]{{\mathbf #1}}
\newcommand{\bb}[1]{{\mathbb #1}}
\newtheorem{prop}{Proposition}
\newtheorem{theo}{Theorem}
\newtheorem{lemma}{Lemma}
\newtheorem{cor}{Corollary}
\newtheorem{definition}{Definition}
\newtheorem{assumption}{Assumption}
\def\sqw{\hbox{\rlap{\leavevmode\raise.3ex\hbox{$\sqcap$}}$%
\sqcup$}}
\def\cqfd{\ifmmode\sqw\else{\ifhmode\unskip\fi\nobreak\hfil
\penalty50\hskip1em\null\nobreak\hfil\sqw
\parfillskip=0pt\finalhyphendemerits=0\endgraf}\fi}
\begin{document}

\title[]{Anomalous diffusion for a class of systems with two conserved quantities}

\author{C\'edric Bernardin}
\address{
Universit\'e de Lyon and CNRS, UMPA, UMR-CNRS 5669, ENS-Lyon,
46, all\'ee d'Italie, 69364 Lyon Cedex 07 - France.
}
\ead{Cedric.Bernardin@umpa.ens-lyon.fr}

\author{Gabriel Stoltz}
\address{
Universit\'e Paris Est, CERMICS and INRIA, MICMAC project-team, 
Ecole des Ponts ParisTech, 6 \& 8 Av. Pascal, 77455 Marne-la-Vall\'ee, France
}
\ead{stoltz@cermics.enpc.fr}

\date{\today}

\ams{82C05, 82C70, 82C31}

\begin{abstract}   
We introduce a class of one dimensional deterministic models of energy-volume conserving 
interfaces. Numerical simulations show that these dynamics are genuinely super-diffusive. 
We then modify the dynamics by adding a conservative stochastic noise so that it becomes 
ergodic. System of conservation laws are derived as hydrodynamic limits of the modified dynamics. 
Numerical evidence shows these models are still super-diffusive. This is proven 
rigorously for harmonic potentials. 
\end{abstract}

\submitto{\NL}

\maketitle

\section{Introduction}

Over the last decade, transport properties of one-dimensional systems consisting 
of coupled oscillators on a lattice have been the subject of many theoretical and 
numerical studies, see the review papers~\cite{BLR,D,LLP}. 
Despite many efforts, our knowledge of the fundamental mechanisms necessary and/or sufficient 
to have a normal diffusion remains very limited. 
Nevertheless, it has been recognized that conservation of momentum plays a major 
role and numerical simulations provide a strong evidence of the fact that 
one dimensional chains of anharmonic oscillators conserving momentum are super-diffusive. 

In this paper we propose a new class of models for which anomalous diffusion is observed. 
The system under investigation presents several analogies with standard 
chains of oscillators, but, and it is our main motivation, has a simpler mathematical structure. 

\subsection{Description of the model}
\label{sec:description_model}

Let $U$ and $V$ be two potentials on $\RR$ and consider the Hamiltonian system $(\omega (t) )_{t \ge 0} = ( \, {\bm r} (t) , {\bm p} (t) \,)_{t \ge 0}$ described by the equations of motion
\begin{equation}
\label{eq:generaldynamics}
\frac{dp_x}{dt} = V'(r_{x+1}) -V'(r_x), 
\qquad \frac{dr_x}{dt} = U' (p_x) -U' (p_{x+1}), 
\qquad x \in \ZZ, 
\end{equation}
where $p_x$ is the momentum of particle $x$, $q_x$ its position and $r_x=q_{x} -q_{x-1}$ the ``deformation''. Standard chains of oscillators are recovered for a quadratic kinetic energy $U(p)=p^2 /2$. The dynamics conserves three physical quantities: the total momentum $\sum_{x} p_{x}$, the total deformation $ \sum_{x} r_{x}$ and the total energy $\sum_x {\mc E_x}$ with ${\mc E}_x= V(r_x) + U(p_x)$. Consequently, every product probability measure ${\nu}_{\beta,\lambda, \lambda'}$ defined by
\begin{equation}
\label{eq:invmeas007}
d{\nu}_{\beta,\lambda,\lambda'} (\eta) = \prod_{x \in \ZZ} {\mc Z} (\beta,\lambda,\lambda')^{-1} 
\exp\left\{ -\beta {\mc E}_x -\lambda p_{x} -\lambda' r_{x} \right\} \, d r_x \, d p_x 
\end{equation}
is invariant under the evolution. For later purposes, let us also 
introduce the short-hand notation
\begin{equation*}
{\varepsilon}_x= \left(
\begin{array}{c}
{\mc E}_x \\
p_{x}\\
r_{x}
\end{array}
\right), 
\qquad {u}= 
\left(
\begin{array}{c}
\beta\\
\lambda\\
\lambda'
\end{array}
\right), 
\qquad {\bar \varepsilon} =
\left(
\begin{array}{c}
\nu_{\beta,\lambda,\lambda'} ( {\mc E}_0 )\\
{\nu_{\beta,\lambda,\lambda'}} (p_0)\\
{\nu_{\beta,\lambda,\lambda'}} (r_0)
\end{array}
\right).
\end{equation*} 

In this paper, we are mainly interested in the case $U=V$ (except for instance in Section~\ref{sec:secgeneral}), which has the advantage of introducing more symmetries into the problem. The state of the system at time~$t$ is then more conveniently described by the variable
$\eta (t) =\left\{ \eta_{x} (t); x \in \ZZ \right\} \in \RR^{\ZZ}$ with $\eta_{2x-1}=r_x$ and $\eta_{2x}=p_x$. The dynamics can be rewritten as:
\begin{equation}
\label{eq:dyneq}
d\eta_{x} (t) =\Big(V' (\eta_{x+1}) - V' (\eta_{x-1})\Big) dt.
\end{equation}
The system can therefore also be interpreted as a fluctuating interface where the algebraic volume of the interface at site $x \in \ZZ$ is given by $\eta_x$ and the energy by $V(\eta_x)$. We focus our study on the (anomalous) diffusion of the energy. The three quantities, momentum $\sum_{x} \eta_{2x}$, deformation $ \sum_{x\in\mathbb{Z}} \eta_{2x+1}$ and energy $\sum_{x\in\mathbb{Z}} V(\eta_x)$,  are conserved but, for reasons which will become clear later, our interest lies only in the ``volume'' $\sum_{x\in\mathbb{Z}} \eta_x$ and in the energy.

\subsection{Hydrodynamic limit}

Energy transport properties depend strongly on the chosen time-scale. 
The first natural scale to consider is the hyperbolic scale where the system is followed
on long times $N t$, the space being renormalized by a factor~$N^{-1}$, 
with $N \to \infty$.

To this end, we define the empirical energy/momentum/deformation measure as follows:
\begin{equation}
\pi^N (t, dq) = N^{-1} \sum_{x \in \ZZ} \varepsilon_{x} (tN) \, \delta_{x/N} (dq), 
\qquad  q \in \RR.
\end{equation}
At time $t=0$, this measure is supposed to converge in probability to some 
macroscopic profile $\Pi_0 (q) \, dq$ which has a density w.r.t. the Lebesgue measure. 
If we assume that a \textit{local equilibrium hypothesis} holds, 
it is not difficult to show that the expected macroscopic evolution 
equation for $\Pi(t,q) \, dq =\lim_{N \to \infty} \pi^N (t,dq)$ 
is given by a triplet of compressible Euler equations:
\begin{equation}
\label{eq:syslim0}
\partial_t \Pi + \partial_q {I}(\Pi) =0, 
\qquad 
\Pi (0,\cdot)=\Pi_0 (\cdot),
\end{equation} 
where ${I} (\Pi) \in \RR^3$ is the macroscopic current whose explicit expression is not important here. Unfortunately, proving the local equilibrium hypothesis for (\ref{eq:generaldynamics}) is out of the range of the current mathematical techniques. The main difficulty comes from our inability to show that the dynamics is ergodic (in the sense of Definition~\ref{def:ergo} below). If the ergodicity is proved, in the time interval where (\ref{eq:syslim0}) has a smooth solution, the relative entropy method of Yau (see~\cite{Y}) can be adapted to show that the system has~(\ref{eq:syslim0}) as a hydrodynamic limit. Hence, the problem can be reduced to proving the ergodic behavior of the underlying dynamics. Deriving the convergence to~(\ref{eq:syslim0}) after the shocks is considerably more difficult since even the concept of a solution to a system of conservation laws is not fully understood.
 
To overcome the above mentioned lack of ergodicity of deterministic systems, 
it has been proposed to add a stochastic perturbation to the dynamics. The theory of stochastic perturbations of Hamiltonian dynamics has a long history. To our knowledge, the first paper on the ergodicity of infinite lattice models is~\cite{F00} (see also ~\cite{BBO1,BBO2,FFL, FLO,GV,LO,OVY}). The added noise must be carefully chosen in order not to destroy the conservation laws we are interested in. In the general case $U \ne V$, the Hamiltonian dynamics can be perturbed by a local noise acting on the velocities (as proposed in \cite{FFL}) but conserving the three physical invariants mentioned at the end of Section~\ref{sec:description_model} (see Section \ref{sec:secgeneral}). With such additional noises, the perturbed dynamics can be proved to be ergodic (see Theorem \ref{th:thgeneral}), so that~(\ref{eq:syslim0}) is obtained as a hydrodynamic limit. However, our motivation being to simplify as much as possible the dynamics considered in~\cite{BBO1,BBO2} without destroyin
 g the anomalous behavior of the energy diffusion, we mainly focus on the symmetric case $U=V$ with a noise conserving only the two important quantities responsible of the anomalous transport behavior, namely the the energy and the volume. Thus, we introduce a new stochastic energy-volume  conserving dynamics, which is still described by (\ref{eq:dyneq}) between random exponential times where two nearest neighbors heights $\eta_x$ and $\eta_{x+1}$ are exchanged (see Subsection~\ref{subsec:sto} for a precise definition). 
Observe that the noise still conserves the total energy and the total volume but destroys the conservation of momentum and deformation. Therefore, only two quantities are conserved and the invariant Gibbs measures of the stochastic dynamics correspond to the choice $\lambda=\lambda'$ in~(\ref{eq:invmeas007}). We denote ${\nu}_{\beta,\lambda,\lambda}$ (resp. ${\mc Z}(\beta,\lambda,\lambda)$) by $\mu_{\beta, \lambda}$ (resp. $Z (\beta,\lambda)$) and we use in the sequel the short-hand notation
\begin{equation}
\label{eq:short_hand_notation}
\xi_x= \left(
\begin{array}{c}
V(\eta_x)\\
\eta_x
\end{array}
\right), 
\qquad {w}= 
\left(
\begin{array}{c}
\beta\\
\lambda
\end{array}
\right), 
\qquad {\bar \xi} =
\left(
\begin{array}{c}
\mu_{\beta,\lambda} (V(\eta_0))\\
{\mu_{\beta,\lambda}} (\eta_0)
\end{array}
\right).
\end{equation}

The first main result of this paper is that the perturbed dynamics informally described above 
is ergodic (see Theorem~\ref{th:1}). Consequently, before the appearance of shocks, 
the stochastic energy-volume conserving dynamics has a hyperbolic system of two conservation laws as a hydrodynamic limit (see Theorem~\ref{th:hl}).

\subsection{(Super)Diffusive limit}

In the second part of the paper we investigate the diffusion of the energy at a 
longer time-scale. If the process has a diffusive behavior then the relevant time scale 
is the diffusive one, where the system is studied over long times $N^2 t$ with 
space renormalized by a factor~$N^{-1}$. We claim that the system genuinely 
displays an anomalous energy diffusion so that the diffusive scale is not the 
relevant one. Heuristically, we can interpret this anomalous diffusion as 
a consequence of the volume conservation law (see Section~\ref{subsec:linmodel}).

We start the infinite system (\ref{eq:generaldynamics}) under the equilibrium distribution $\mu_{\beta, \lambda,\lambda'}$ 
and consider first the fluctuation field in the hyperbolic time scaling:
\begin{equation}
  \label{eq:YY}
  \mathcal{Y}_N (t,\bG) =\frac{1}{\sqrt{N}} \sum_{x\in \ZZ} 
  \bG\left(\frac{x}{N}\right) \otimes \left(\varepsilon_x (tN)  - {\bar \varepsilon}\right),
\end{equation} 
where
\[
\bG(y) = \left(
\begin{array}{c}
G_{1} (y)\\
G_2 (y)\\
G_3 (y)
\end{array} 
\right)
\]
is a smooth vector valued test function with compact support. 
We expect that ${\mc Y}_N (t, \cdot)$ converges in law to ${\mc Y} (t,\cdot)$, 
where ${\mc Y}$ is solution of the linearized equation
\begin{equation}
\label{eq:lf00}
\partial_t {\mc Y} + D{I}({\bar \varepsilon})  \, \partial_q {\mc Y}=0,
\end{equation}
with $D{I} ({\bar \varepsilon})$ the differential of $I$ at ${\bar \varepsilon}$. Hence, in the hyperbolic scaling, fluctuations evolve deterministically according to
\[
{\mc Y} (t,\bG) ={\mc Y} \left(0, e^{t \, {\mc U}^*} \bG\right),
\]
where ${\mc U}= [D{I} ({\bar \varepsilon})] \partial_q$ and ${\mc U}^*= - [D{I} ({\bar \varepsilon})]^* \partial_q$  with $[D{I} ({\bar \varepsilon})]^*$ the transpose matrix of $D{I} ({\bar \varepsilon})$.

To see a nontrivial behavior of the fluctuation field, we need to look at
${\mc Y}_N$ on a longer time scale $t N^{1+\alpha}$, for some $\alpha>0$. It is expected that, after subtracting the transport term 
appearing in the hyperbolic time scale, the field
\begin{equation}
  \label{eq:696}
  {\widetilde{\mc Y}}_N (t,\bG) = {\mc Y}_N \left(t N^{\alpha}, e^{-t N^{1+\alpha}  {\mc U}^*} \bG\right)
\end{equation}
converges to some limiting field ${\widetilde {\mc Y}}$. The case $\alpha=1$ would correspond to a diffusive behavior with ${\widetilde {\mc Y}}$ the solution of the linear stochastic partial differential equation 
\begin{equation*}
\partial_t {\widetilde {\mc Y}}= \nabla \cdot \left( {\widetilde {\mc D}}^{\infty}\,  \nabla {\widetilde {\mc Y}}\right) + \sqrt{2 {\widetilde \chi} {\widetilde {\mc D}}^{\infty} }\,  \nabla \cdot W,   
\end{equation*}
where $W(x,t)$ is a standard space-time white noise. Here, ${\widetilde\chi}$ is the {\textit{compressibility}} and ${\widetilde{\mc D}}^{\infty} = \lim_{ t \to \infty} {\widetilde{\mc D}}_{\beta,\lambda,\lambda'} (t)$ the limiting {\textit{diffusivity}} (see (\ref{eq:chi0}) and (\ref{eq:diff222}) for the definitions of these quantities). We refer the reader  to \cite{Sp} for a general background reference, and to \cite{LOV} for a rigorous proof of the convergence for asymmetric simple exclusion processes.

For the class of models we consider, our conjecture is that $\alpha$ is in general strictly lower than $1$. The value of $\alpha$ and the nature of ${\widetilde {\mc Y}}$ are not expected to be universal and should depend on some specific properties of the potentials. We also expect a similar picture when the deterministic dynamics (\ref{eq:generaldynamics}) is replaced by the stochastic energy-volume conserving dynamics (we then denote by ${\mc D}_{\beta,\lambda}$ and $C_{\beta,\lambda}$ the corresponding diffusivity and current-current correlation function). This anomalous value of $\alpha$ should be reflected in the divergence of the diffusivity ${\widetilde{\mc D}}_{\beta,\lambda,\lambda'} (t)$ (${\mc D}_{\beta,\lambda} (t)$ for the stochastic dynamics) in the large time limit  $t \to \infty$.

We are not able to study theoretically this problem for the deterministic dynamics 
and we have to turn to computer simulations of nonequilibrium systems in their steady-states. 
A chain of length $2N+1$ is coupled at each extremity (left and right) to a thermal reservoir fixing the temperature ($T_\ell$ on the left, $T_r$ on the right). In the stationary state $\langle \cdot \rangle_{\rm ss}$, the average current $\langle J_N \rangle_{\rm ss}$ is measured (see Section~\ref{sec:sim} for more precise definitions). 
The quantity of interest is the divergence exponent $\delta$ of the transport coefficient 
\[
\kappa_N = \frac{\langle N J_N \rangle_{\rm ss}}{T_\ell -T_r} \asymp N^{\delta}. 
\]
Anomalous diffusion corresponds to $\delta>0$.  

For a normal transport, the link between the two situations can be seen through a Green-Kubo formula for the limiting diffusivity, which expresses the latter as a quantity proportional to the time integral of the equilibrium current-current correlation function (see~(\ref{eq:cbl}) and~(\ref{eq:diff}) for a precise definition). It is widely accepted, but not proved, that for normal diffusive systems the limiting diffusivity coincides with the transport coefficient $\kappa =\lim_{N \to \infty} \kappa_N$.  For anomalous diffusion transport, $\kappa =+ \infty$ and the current-current autocorrelation function which appears in the Green-Kubo formula is not integrable because it decays too slowly. Consequently the limiting diffusivity is infinite.

Our second main results are the following. First, we show numerically that, for generic anharmonic potentials~$V$, the dynamics (\ref{eq:dyneq}) has an anomalous diffusion. We also show that this phenomenon persists if the deterministic dynamics is replaced by the stochastic energy-volume conserving dynamics, and that the divergence exponent~$\delta$ depends on the strength of the random perturbation. Secondly, for the stochastic energy-volume conserving dynamics and a harmonic potential $V(r)=r^2/2$, we compute explicitly the equilibrium current-current correlation function $C_{\beta,\lambda} (t)$ and show that it decays as $t^{-1/2}$ for large $t$ (see Theorem~\ref{th:GK}). This implies that the diffusivity ${\mc D}_{\beta, \lambda} (t)$ diverges as~$\sqrt{t}$, which is a clear manifestation of the super-diffusion of the energy for this model.

\subsection{Organization of the paper}
We  present more precisely the model under investigation in Section~\ref{sec:models}.
We first study the hydrodynamic limit in Sections~\ref{sec:hyp_scaling} (presentation
of the general result) and~\ref{sec:erg} (proof of the fundamental ingredient for the limit
to hold, namely the ergodicity of the dynamics).
We then consider diffusion properties, starting with analytical results on the longtime
tail of the current autocorrelation function, 
which can be obtained for harmonic potentials (see Section~\ref{sec:diffusive}),
and providing then scalings of the energy current obtained by 
numerical simulations of nonequilibrium systems
in their steady-states in Section~\ref{sec:sim}. 
Some proofs are gathered in the Appendix.

\section{The models}
\label{sec:models}

\subsection{The deterministic models}

\subsubsection{Finite systems}
Consider the finite box $\Lambda_N=\{-N, \ldots,N\} \subset \ZZ$ 
(with $N \ge 1$). The product space $\RR^{\Lambda_N}$ is denoted by $\Omega_{N}$, 
and a typical element of $\Omega_N$ is $\eta=\{ \eta_x \in \RR\, ; \, x \in \Lambda_N\}$. 
The deterministic finite volume dynamics $(\eta^{N} (t))_{t \geq 0} \in \Omega_N$ is 
defined by its generator  
\begin{equation}
\label{eq:A}
\fl \qquad {\mathcal A}_{N} =\sum_{x =-(N-1)}^{N-1} 
\Big(V'(\eta_{x+1})-V'(\eta_{x-1})\Big)\partial_{\eta_x} -V' (\eta_{N -1}) \, \partial_{\eta_N} 
+ V' (\eta_{-N+1}) \, \partial_{\eta_{-N}},
\end{equation}
where $V$ is a smooth convex potential such that the partition function
\begin{equation*}
Z(\beta, \lambda) = \int_{-\infty}^{\infty} \exp\left( -\beta V(r) -\lambda r \right)\, dr
\end{equation*}
is well defined for $\beta>0$ and $\lambda \in \RR$.
The following microscopic energy-volume conservation laws hold for $x = -N+1,\dots,N-1$:
\begin{equation*}
{\mc A}_N V(\eta_x) = -\nabla \left[j^e_{x-1,x}\right], 
\qquad {\mc A}_N  \eta_x =-\nabla \left[ j^v_{x-1,x} \right], 
\end{equation*}
where $\nabla$ is the discrete gradient defined, for any function $u:\ZZ \to \RR$, 
by $(\nabla u) (x)=u(x+1)-u(x)$, and where the microscopic energy and volume 
currents are respectively
\begin{equation*}
j^e_{x,x+1} = -V'(\eta_{x+1}) V' (\eta_x), 
\qquad 
j^v_{x,x+1} =- \Big(V'(\eta_{x}) + V'(\eta_{x+1})\Big).
\end{equation*}


\subsubsection{Infinite systems.}
The dynamics in the infinite volume $\Lambda = \ZZ$, with formal generator
\[
\mathcal{A} = \sum_{x\in \ZZ} \Big(V'(\eta_{x+1})-V'(\eta_{x-1})\Big)\partial_{\eta_x}
\]
is also very important. Since the state space is unbounded, explosion problems can arise and the construction of solutions of the dynamics in infinite volume may become a technically non trivial problem. To avoid such issues, we restrict ourselves (apart from Section~\ref{sec:sim}) to the case 
$0 \leq V'' \le C$ for some positive constant $C>0$. 
Then, the construction is quite standard since the pioneering work \cite{LLL} of Lanford et al. We refer the interested reader to \cite{F1,FFL} for further precisions. 
For any $\alpha>0$, let $\Omega_{\alpha}$ be the set of configurations $\eta$ such that
\begin{equation*}
\sum_{x \in \ZZ} \eta_x^{2} \, e^{-\alpha |x|} < +\infty,
\end{equation*}
and equip $\Omega=\cap_{\alpha >0} \Omega_{\alpha}$ with its natural product topology 
and its Borel $\sigma$-field. The set of Borel probability measures on $\Omega$ is 
denoted by ${\mathcal P} (\Omega)$. A function $f: \Omega \to \RR$ is said to be \emph{local} 
if it depends of $\eta$ only through the coordinates $\{ \eta_x\, ; \, x \in \Lambda_f\}$, 
$\Lambda_f$ being a finite box of $\ZZ$. We also introduce the sets $C_0^k(\Omega)$ ($k \ge 1$) 
of bounded local functions on $\Omega$ which are differentiable 
up to order $k$ with bounded partial derivatives. 

For each initial condition $\sigma \in \Omega$ the existence and uniqueness of a solution 
to~(\ref{eq:dyneq}) can be proved by a classical fixed-point argument {\it \`a la} Picard. 
The solution $\eta(\cdot): =\eta(\cdot, \sigma)$ defines a process with continuous trajectories. Moreover each path $\eta (\cdot, \sigma)$ is a continuous and differentiable function of the initial data $\sigma$. We define the corresponding semigroup $(P_t)_{t \ge 0}$ by  $(P_t f)(\sigma) = f(\eta (t, \sigma))$ for any bounded measurable function $f$ on $\Omega$. 
The differentiability with respect to initial conditions shows that 
the Chapman-Kolmogorov equations 
\begin{equation}
\label{eq:CK}
\eqalign{
&(P_t f )(\sigma) = f(\sigma) +\int_0^t ({\mathcal A} P_s f)(\sigma) \, ds, 
\qquad 
f \in C_0^{1} (\Omega) \cr
&(P_t f )(\sigma) = f(\sigma) +\int_0^t (P_s {\mathcal A} f)(\sigma) \, ds, 
\qquad 
f \in C_0^{1} (\Omega).
}
\end{equation}
are valid.
With these equations, probability measures $\nu \in {\mathcal P} (\Omega)$ 
invariant by the deterministic dynamics are characterized by the stationary Kolmogorov equation:
\begin{equation*}
\forall f \in C_{0}^{1} (\Omega), 
\qquad 
\int ({\mathcal A} f)(\eta) d\nu (\eta) = 0.
\end{equation*}
Denoting the usual scalar product between two vectors $a,b \in \RR^2$ by $a \cdot b$,
and recalling the notation introduced in~(\ref{eq:short_hand_notation}),
it is easily seen that every product measure $\mu_{\beta,\lambda}$ defined by
\begin{equation*}
d\mu_{\beta,\lambda} (\eta) = \prod_{x \in \ZZ} Z(\beta,\lambda)^{-1} 
\exp\left\{ -w \cdot \xi_x \right\} \, d\eta_x
\end{equation*}
is invariant for the infinite dynamics. In the sequel, we 
denote the average of a function $f$ with 
respect to $\mu_{\beta,\lambda}$ by $\langle f \rangle_{\beta,\lambda}$. 

Depending on the potential $V$ at hand, the properties of the dynamics can be very 
different. In the next subsection we discuss the case $V(r)=r^2 /2$, which leads 
to a linear dynamics. A second remarkable potential is the exponential potential 
$V_{\rm KVM}(q)=e^{-q} +q -1$, corresponding to the so-called Kac-van-Moerbecke system, 
which is integrable (see~\cite{KVM}). The corresponding system is 
related to the famous Toda lattice~\cite{T}, 
\emph{i.e.} a chain of oscillators with coupling potential $V_{\rm KVM}$,  
by a simple transformation.

\subsection{The deterministic linear model}
\label{subsec:linmodel}
We consider here the specific case $V(r)=r^2/2$. The dynamics is then linear and can be solved analytically using Fourier transform. To simplify the exposition we consider the dynamics in infinite volume. We introduce the $k$th mode ${\widehat \eta} (k, \cdot)$ for $k \in {\bb T} = 
\mathbb{R}/\ZZ$, the one-dimensional torus of length $1$: 
\begin{equation*}
{\widehat \eta} (t,k) =\sum_{x \in \ZZ} \eta_x (t) \, e^{2 \ri \pi k x}. 
\end{equation*}
Then, the equations of motion are equivalent to the 
following decoupled system of first order differential equations:
\begin{equation*}
\frac{d{\widehat \eta}}{dt} (t,k)  = \ri \omega (k) \, {\widehat \eta} (t,k), 
\end{equation*}
where the dispersion relation $\omega (k)$ reads
\begin{equation*}
\omega (k) =- 2 \sin (2\pi k),
\end{equation*} 
and the group velocity $v_{\rm g}$ is
\begin{equation*}
v_{\rm g} (k)  = \omega' (k) = -4\pi \cos (2\pi k).
\end{equation*}
By inverting the Fourier transform, the solution can be written as
\[
\eta_{x} (t) = \int_{\TT} {\widehat \eta} (t,k) \, e^{- 2 \ri \pi k x} \, dk.
\]
Note also that the energy of the $k$th mode
\[
E_k (t)= \frac{1}{4\pi} |{\widehat \eta} (t,k)|^2 = E_k (0)
\]
is conserved by the time evolution, and that the total energy current 
$J^e= \sum_{x \in \ZZ} j^{e}_{x,x+1}$ takes the simple form
\begin{equation*}
J^e = \int_{\TT} v_{\rm g}(k) E_k \, dk.
\end{equation*}

We interpret the waves ${\widehat \eta} (k,t)$ as fictitious particles similar to phonons in solid state physics. In the absence of nonlinearities, they travel the chain without scattering. If the potential is non-quadratic, it may be expected that the nonlinearities produce a scattering responsible for the diffusion of the energy. Nevertheless, the conservation of the volume, which is expressed by
\begin{equation}
\label{eq:consj}
{\widehat \eta} (t, 0) = {\widehat \eta} (0,0),
\end{equation}
plays a crucial role. The identity (\ref{eq:consj}) is valid even if $V$ is not quadratic. It means that the $0$th mode is not scattered at all and crosses the chain ballistically. In fact, the modes with small wave number $k$ do not experience a strong scattering and they therefore contribute to the observed anomalous diffusion of energy.
For anharmonic chains of oscillators, a similar picture arises with ${\widehat \eta} (k)$ replaced by the phonons. As the conservation of momentum for these chains is responsible for the small scattering of phonons with small wave numbers, the conservation of the volume for the model considered in this paper is responsible for the small scattering of the waves ${\widehat \eta} (k)$ with small wave numbers.  

\subsection{Stochastic energy-volume conserving dynamics}
\label{subsec:sto}
We now consider energy-volume conserving stochastic perturbations of the deterministic dynamics generated by $\mc A$ or $\mathcal{A}_N$. 
The generator of the finite volume perturbed dynamics is written as
\begin{equation}
{\mc L}_N ={\mc A}_N  +\gamma {\mc S}_N,
\end{equation}
where ${\mc S}_N$ is the generator of the noise and $\gamma>0$ its intensity. 
The generator~${\mc S}_N$ reads
\begin{equation}
\label{eq:S}
({\mc S}_N f)(\eta) =\sum_{x=-N}^{N-1} \left[ f(\eta^{x,x+1}) -f(\eta) \right],
\end{equation} 
where $\eta^{x,x+1}$ is the configuration obtained from $\eta$ by exchanging the 
variables $\eta_x$ and $\eta_{x+1}$. 

By arguments similar to the one used to prove the well-posedness of the infinite deterministic
dynamics, it can be shown that the 
stochastic energy-volume conserving dynamics in infinite volume is also well defined (see \cite{FFL} for details). 
Its formal generator is given by  ${\mc L}={\mc A}+\gamma {\mc S}$ where
\begin{equation*}
({\mc S} f)(\eta) =\sum_{x \in \ZZ} \left[ f(\eta^{x,x+1}) -f(\eta) \right],
\end{equation*} 
and the corresponding Chapman-Kolmogorov equations (\ref{eq:CK}) are valid for this process
upon replacing $\mathcal{A}$ by $\mathcal{L}$. 
In particular the probability measures $\mu_{\beta,\lambda}$ are still invariant. 

\section{Hyperbolic scaling}
\label{sec:hyp_scaling}

We present in this section the hydrodynamic limit of the models described in the 
previous section. To this
end, we first need to define some thermodynamic quantities
useful to describe local equilibria (Section~\ref{sec:def_thermo}). We then informally describe
the expected hydrodynamic limit in Section~\ref{sec:informal_hydro_lim}, and conclude this 
section by stating precisely the convergence result (Theorem~\ref{th:hl}) in 
Section~\ref{sec:derivation_hydro}. 

\subsection{Definition of thermodynamic variables}
\label{sec:def_thermo}

Recall that the probability measures $\mu_{\beta, \lambda}$ form a family 
of invariant probability measures for the infinite dynamics defined in 
Section~\ref{sec:models}. 
The following thermodynamic relations (which are valid 
since we assumed that 
the partition function $Z$ is well defined on $(0,+\infty) \times {\mathbb R}$)
relate the chemical potentials $\beta, \lambda$ 
to the mean volume $v$ and the mean energy $e$ under $\mu_{\beta,\lambda}$:
\begin{equation}
\label{eq:tr}
\eqalign{
v(\beta,\lambda) =\mu_{\beta,\lambda} (\eta_x)= -\partial_{\lambda} \Big(\log Z(\beta,\lambda)\Big), \cr
e(\beta,\lambda)= \mu_{\beta,\lambda} (V(\eta_x))= -\partial_{\beta}\Big(\log Z(\beta,\lambda)\Big).
}
\end{equation} 
These relations can be inverted by a Legendre transform to express $\beta$ and~$\lambda$ 
as a function of~$e$ and~$v$. 
Define the thermodynamic entropy $S \, : \, (0,+\infty) \times \RR \to [-\infty,+\infty)$ as
\begin{equation*}
S(e,v)= \inf_{ \lambda \in \RR, \beta >0} \Big\{ \beta e + \lambda v + 
\log Z (\beta,\lambda) \Big\}.
\end{equation*}
Let ${\mc U}$ be the convex domain of $(0,+\infty) \times \RR$ where 
$S(e,v) >- \infty$ and $\mathring{\mc U}$ its interior. 
Then, for any $(e,v):=(e(\beta, \lambda),v(\beta,\lambda)) 
\in {\mathring{\mc U}}$, the parameters $\beta,\lambda$ can be obtained as 
\begin{equation}
\label{eq:14}
\beta = (\partial_e S ) (e,v), 
\qquad 
\lambda= (\partial_v S) (e,v).
\end{equation} 
We also introduce the tension 
$\tau(\beta,\lambda)= \mu_{\beta,\lambda} (V' (\eta_0))=-\lambda / \beta$.
Then, 
\begin{equation}
\label{eq:averages_of_currents}
\mu_{\beta,\lambda} (j_{x,x+1}^e)=-\tau^2, 
\qquad 
\mu_{\beta,\lambda} (j_{x,x+1}^v)=-2\tau.
\end{equation}
In the sequel, with a slight abuse of notation, 
we also write $\tau$ for $\tau(\beta(e,v), \lambda(e,v))$ 
where $\beta(e,v)$ and $\lambda(e,v)$ are defined by the relations~(\ref{eq:14}).

\subsection{Description of the hydrodynamic limit}
\label{sec:informal_hydro_lim}

Consider the finite \emph{closed} stochastic energy-volume dynamics with periodic boundary conditions, 
that is the dynamics generated by ${\mc L}_{N,{\rm{per}}} ={\mc A}_{N,{\rm{per}}} + \gamma {\mc S}_{N,{\rm{per}}}$ where
\begin{equation}
\label{eq:A_per}
\Big({\mathcal A}_{N,{\rm per}} f \Big)(\eta) = 
\sum_{x \in \TT_N} \left[V'(\eta_{x+1})-V'(\eta_{x-1}) \right] \partial_{\eta_x}f(\eta),
\end{equation}
and
\begin{equation*}
({\mc S}_{N,{\rm per}} f)(\eta) =\sum_{x \in \TT_N} \left[ f(\eta^{x,x+1}) -f(\eta) \right],
\end{equation*}
with $\TT_N = \RR/(N\ZZ)$ is the discrete torus of length $N$.  We choose to consider the dynamics on $\TT_N$ rather than on $\ZZ$ to avoid nontrivial technicalities.  We are interested in the macroscopic behavior of the two conserved quantities 
on a macroscopic time-scale $Nt$ as $N \to \infty$. 

We assume that the system is initially distributed according to a local Gibbs equilibrium  
state corresponding to a given energy-volume profile ${X_0} : \TT \to {\mathring{\mc U}}$: 
\begin{equation*}
X_0=\left(
\begin{array}{c}
{\mf e}_0\\
{\mf v}_0
\end{array} 
\right), 
\end{equation*}
in the sense that, for a given system size~$N$, 
the initial state of the system is described by the following product probability measure: 
\begin{equation}
\label{eq:Ges}
d\mu_{{\mf e}_0, {\mf v}_0}^N(\eta) = 
\prod_{x \in \TT_N} \frac{ \exp\left\{ -  {\mf \beta}_0 (x/N) V(\eta_x) 
-  {\mf \lambda}_0 (x/N) \eta_x \right\}}{Z ( {\mf \beta}_0 (x/N), {\mf \lambda}_0 (x/N))}\,
d\eta_x,
\end{equation} 
where $(\beta_0 (x/N), \lambda_0 (x/N))$ is actually a function of 
$({\mf e}_0 (x/N), {\mf v}_0 (x/N))$ through the relations~(\ref{eq:14}).

Starting from such a state, we expect the state of the system at time $t$ to be 
close, in a suitable sense, to a local Gibbs equilibrium measure 
corresponding to an energy-volume profile
\begin{equation*}
X(t,\cdot)=\left(
\begin{array}{c}
{\mf e}(t, \cdot)\\
{\mf v} (t,\cdot)
\end{array} 
\right),
\end{equation*}
satisfying a suitable partial differential equation with initial condition~$X_0$
at time $t = 0$. In view of~(\ref{eq:averages_of_currents}), and assuming local 
equilibrium, it is not difficult
to show that the expected partial differential equation 
is the following system of two conservation laws: 
\begin{equation}
\label{eq:syslim}
\eqalign{
\partial_t {\mf e} -\partial_q \tau^2 =0,\cr 
\partial_t {\mf v} - 2 \partial_{q} \tau=0,
}
\end{equation} 
with initial conditions ${\mf e}(0,\cdot) = {\mf e}_0(\cdot), {\mf v}(0,\cdot) = {\mf v}_0(\cdot)$.
We write~(\ref{eq:syslim}) more compactly as  
\begin{equation*}
\partial_t X + \partial_q {\mf J}(X) =0, \qquad X (0,\cdot)=X_0 (\cdot),
\end{equation*}
with 
\begin{equation}
\label{eq:JJJ}
{\mf J} (X)= \left( 
\begin{array}{c}
-\tau^2 ( {\mf e}, {\mf v})\\
-2\tau ({\mf e}, {\mf v})
\end{array}
\right).
\end{equation}

The system of conservation laws (\ref{eq:syslim}) has other 
nontrivial conservation laws. In particular, the thermodynamic entropy~$S$ 
is conserved along a smooth solution of~(\ref{eq:syslim}):
\begin{equation}
\label{eq:consentropy}
\partial_t S ({\mf e},{\mf v})= 0.
\end{equation}
Since the thermodynamic entropy is a strictly concave function 
on $\mathring{\mc U}$, the system~(\ref{eq:syslim}) is strictly 
hyperbolic on $\mathring{\mc U}$ (see~\cite{S}). The two real 
eigenvalues of $(D{\mf J})({\bar \xi})$ are $0$ 
and $-\left[ \partial_{e} (\tau^2) + 2 \partial_v (\tau)\right]$, 
corresponding respectively to the two eigenvectors
\begin{equation}
\left(
\begin{array}{c}
-\partial_v \tau\\
\partial_e \tau
\end{array}
\right), \qquad
\left(
\begin{array}{c}
\tau\\
1
\end{array}
\right).
\end{equation}

It is well known that classical solutions to systems of $n \ge 1$ conservation laws develop shocks, even when starting from smooth initial conditions. Nevertheless, the Cauchy problem is locally well-posed in the Sobolev spaces $H^{s}(\TT)$ (for $s>3/2$). If we consider weak solutions rather than classical solutions, then a criterion is needed to select a unique, relevant solution among the weak ones. For scalar conservation laws ($n=1$), this criterion is furnished by the so-called entropy inequality and existence and uniqueness of solutions is fully understood. If $n\ge2$, only partial results exist (see~\cite{S,Br}). This motivates the fact that we restrict our analysis to smooth solutions before the appearance of shocks.

\subsection{Derivation of the hydrodynamic limit}
\label{sec:derivation_hydro}

We now turn to the question of deriving the system of conservation
laws as the hydrodynamic limit of the interacting particle system
under investigation. We assume that the potential $V$ satisfies the following

\begin{assumption}
\label{ass:V}
The potential $V$ is a smooth, non-negative function such that the partition function $Z(\beta, \lambda) = \int_{-\infty}^{\infty} \exp\left( -\beta V(r) -\lambda r \right)\, dr$ is well defined for $\beta>0$ and $\lambda \in \RR$ and  there exists a positive constant $C$ such that 
\begin{equation}
\label{eq:ass-pot1}
0 < V'' (r) \le C, 
\end{equation}
and
\begin{equation}
\label{eq:ass-pot2}
\limsup_{| r| \to + \infty} \frac{r V' (r)}{V(r)} \in(0, +\infty),
\end{equation}
\begin{equation}
  \label{eq:ass-pot3}
  \limsup_{| r| \to + \infty} \frac{[V' (r)]^2}{V(r)} < +\infty.  
\end{equation}
\end{assumption}

The hypothesis~(\ref{eq:ass-pot1}) allows to define easily the dynamics in infinite volume; (\ref{eq:ass-pot2}) is needed in the proof of Theorem~\ref{th:1}; (\ref{eq:ass-pot3}) ensures that the currents of the conserved quantities are bounded by the energy. This is useful to introduce a suitable cutoff for the derivation of hydrodynamic limits (see \cite[Section~3]{BO}).   

\emph{Provided} we can prove that the infinite volume
dynamics is ergodic in a suitable sense (see
Definition~\ref{def:ergo} below), then we can rigorously prove,
using the relative entropy method of Yau, that~(\ref{eq:syslim0}) is
indeed the hydrodynamic limit in the smooth regime, \textit{i.e.}
for times~$t$ up to the appearance of the first shock (see for
example~\cite{KL,TV}).
Recall indeed that a simple computation shows that the stochastic perturbation 
does not modify the hydrodynamic
limit since the effect of the latter is observed in the diffusive scale only.
  
In most cases, the derivation of hydrodynamic
limits is performed for stochastic interacting particle systems
which are trivially ergodic by construction. For deterministic
systems on the other hand, the ergodicity is extremely difficult to
prove. We are only able to show a weaker form of such ergodicity for
the process generated by ${\mc A}$ (with the additional assumption 
that the invariant measure is exchangeable). This weaker form is nonetheless sufficient to
show that the process generated by~${\mc L}$ is ergodic (see Theorem~\ref{th:1} below). 

As argued in~\cite{TV}, it turns out that when there are more than one conservation laws, the conservation of thermodynamic entropy (\ref{eq:consentropy}) must hold for the hydrodynamic limit to be well defined. This relation is indeed fundamental for Yau's method where, in the expansion of the time derivative of relative entropy, the cancelation of the linear terms is a consequence of the preservation of the thermodynamic entropy.

Averages with respect to the empirical energy-volume measure are defined, for continuous functions $G,H : \TT \to \RR$, as 
\[
\left(\begin{array}{c} 
{\mc E}_N (t,G) \cr 
{\mc V}_N (t,H) 
\end{array} \right) 
= \left( \begin{array}{c}
\displaystyle \frac1N \sum_{x \in \TT_N} G\left(\frac{x}{N}\right) \, V(\eta_x (t) ) \cr
\displaystyle \frac{1}{N} \sum_{x \in \TT_N} H\left(\frac{x}{N}\right) \, \eta_x (t)
\end{array} \right).
\]   
We can then state the following result.

\begin{theo}
\label{th:hl}
Fix some $\gamma > 0$ and consider 
the dynamics on the torus ${\bb T}_N$ generated by ${\mc L}_{N,{\rm per}}$ where the potential $V$ satisfies Assumption~\ref{ass:V}. Assume that the system is initially distributed according to a local Gibbs state~(\ref{eq:Ges}) with smooth energy profile ${\mf e}_0$ and volume profile ${\mf v}_0$. Consider a positive time $t$ such that the solution $({\mf e}, {\mf v})$ to (\ref{eq:syslim}) belongs to~${\mathring{\mc U}}$ and is smooth on the time interval~$[0,t]$. Then, for any continuous test functions $G,H:\TT \to \RR$, the following convergence in
probability holds as $N \to +\infty$:
\[
\Big({\mc E}_N (tN, G), {\mc V}_N (tN,H)\Big) \longrightarrow 
\left(\int_{\TT} G(q) {\mf e} (t,q) dq,  \int_{\TT} H(q) {\mf v} (t,q) dq\right).
\]
\end{theo}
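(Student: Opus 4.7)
The plan is to apply Yau's relative entropy method, as the paper itself hints. Let $(\mf{e}(t,\cdot), \mf{v}(t,\cdot))$ be the smooth solution of (\ref{eq:syslim}) on the interval under consideration, and let $(\beta(t,q),\lambda(t,q))$ be the associated chemical potentials obtained via (\ref{eq:14}). I would introduce the time-dependent product local Gibbs measure $\nu^N_t$ on $\TT_N$ with weights $\exp\{-\beta(t,x/N)V(\eta_x)-\lambda(t,x/N)\eta_x\}/Z(\beta(t,x/N),\lambda(t,x/N))$, and denote by $f^N_t$ the density with respect to $\nu^N_t$ of the law of $\eta(tN)$ under the dynamics generated by $\mc{L}_{N,\mathrm{per}}$. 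Setting $H_N(t) = \int f^N_t \log f^N_t \, d\nu^N_t$, the initial condition (\ref{eq:Ges}) gives $H_N(0)=0$. The goal is to prove $H_N(t) = o(N)$, from which the announced convergence in probability follows by the standard entropy inequality applied to the empirical averages $\mc{E}_N(tN,G)$ and $\mc{V}_N(tN,H)$, combined with a truncation argument justified by Assumption~\ref{ass:V}.

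The next step is to differentiate $H_N(t)$ along the evolution. Using $\partial_t f^N_t = N \mc{L}^*_{N,\mathrm{per}} f^N_t$ in $L^2(\nu^N_t)$, one reaches the classical inequality
\[
\partial_t H_N(t) \le \int \Big( N \mc{L}^*_{N,\mathrm{per}} \mathbf{1} - \partial_t \log \nu^N_t \Big) f^N_t \, d\nu^N_t - \gamma N D_N(f^N_t),
\]
where $D_N$ is the Dirichlet form of $\mc{S}_{N,\mathrm{per}}$. A summation by parts that exploits the microscopic conservation laws rewrites the asymmetric contribution as a sum over $x \in \TT_N$ of the currents $j^e_{x,x+1}$ and $j^v_{x,x+1}$ paired with discrete gradients of $(\beta,\lambda)(t,x/N)$. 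The core of the proof is then a one-block replacement: on mesoscopic boxes of size $\ell \ll N$, the empirical averages of $j^e_{x,x+1}$ and $j^v_{x,x+1}$ are replaced by their grand-canonical expectations $-\tau^2$ and $-2\tau$ evaluated at the empirical energy and volume of the box (see~(\ref{eq:averages_of_currents})). This is precisely where the ergodicity statement of Theorem~\ref{th:1} enters, combined with the energy cutoff afforded by assumption~(\ref{eq:ass-pot3}) and implemented as in~\cite[Section 3]{BO} to truncate the unbounded currents.

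After the replacement, a Taylor expansion of the Gibbs density of $\nu^N_t$ around $(\mf{e}(t,x/N),\mf{v}(t,x/N))$ produces a leading linear term in $\xi_x - \bar\xi(t,x/N)$ whose coefficient vanishes because $(\mf{e},\mf{v})$ solves (\ref{eq:syslim}). What saves the next order — and this is the essential compatibility condition identified in~\cite{TV} — is the conservation of thermodynamic entropy~(\ref{eq:consentropy}) along smooth solutions of~(\ref{eq:syslim}), which eliminates the would-be $O(1)$ per site obstruction. One then arrives at the differential inequality $\partial_t H_N(t) \le C H_N(t) + o(N)$, and Gronwall gives $H_N(t) = o(N)$.

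The main obstacle will be the rigorous one-block estimate in this unbounded setting, since the symmetric noise $\gamma \mc{S}_{N,\mathrm{per}}$ alone does not equilibrate a frozen box — only the combined Hamiltonian-plus-noise dynamics does, which is exactly the content of the ergodicity Theorem~\ref{th:1} and its proof relying on~(\ref{eq:ass-pot2}). Controlling the replacement error requires using the entropy production $\gamma N D_N(f^N_t)$ together with a spectral gap on boxes and the energy cutoff from~(\ref{eq:ass-pot3}) to handle the unboundedness of the currents; the smoothness hypothesis on $(\mf{e},\mf{v})$ and the restriction to pre-shock times are what makes the Taylor expansion around $\nu^N_t$ quantitative.
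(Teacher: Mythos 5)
Your proposal is correct and follows essentially the same route as the paper, which does not write out the argument but explicitly defers to Yau's relative entropy method: the ergodicity of the dynamics generated by $\mc L$ (the Corollary of Theorem~\ref{th:1}) as the input to the one-block estimate, the cutoff of the unbounded currents via~(\ref{eq:ass-pot3}) as in~\cite[Section~3]{BO}, and the cancelation of the linear terms via the entropy conservation~(\ref{eq:consentropy}) as in~\cite{TV}. Your outline identifies all of these ingredients and assembles them in the intended way.
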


The derivation of the hydrodynamic limits beyond the shocks for systems of conservation laws of dimension $n \geq 2$ is very difficult and is one of the most challenging problems in the field of hydrodynamic limits. The first difficulty is of course our poor understanding of the solutions to such systems. Recently, J. Fritz proposed in~\cite{F0} to derive hydrodynamic limits for hyperbolic systems (in the case $n=2$) by some extension of the compensated-compactness approach~\cite{Mu,Ta,DIP} to stochastic microscopic models. This program has been achieved in~\cite{FT} (see also \cite{F2}), where the authors derive the classical $n=2$ Leroux system of conservation laws. In fact, to be exact, only the convergence to the set of entropy solutions is proved, the question of uniqueness being left open. It remains nonetheless the best result available at this time. The proof is based on a strict control of entropy pairs at the microscopic level by the use of logarithmic Sobolev inequ
 ality estimates. It would be very interesting to extend these methods to systems such as the ones considered in this paper. 

\section{Ergodicity}
\label{sec:erg}

We prove here the ergodicity of the stochastic dynamics, which is the
fundamental ingredient for the hydrodynamic limit.

\subsection{Definitions and notation}
In order to explain what is meant by ergodicity of the infinite volume dynamics 
we need to introduce some notation. 
For any topological space $X$ equipped with its Borel $\sigma$-algebra we 
denote by ${\mc P} (X)$ the convex set of probability measures on $X$.  
The relative entropy $H(\nu|\mu)$ of $\nu \in {\mc P} (X)$ with respect to 
$\mu \in {\mc P} (X)$ is defined as
\begin{equation}
\label{eq:ent009}
H(\nu | \mu) = \sup_{\phi} \left\{ \int \phi \, d\nu - 
\log \left( \int e^{\phi} \, d\mu \right) \right\},
\end{equation}
where the supremum is carried over all bounded measurable functions $\phi$ on $X$. 
Recall also the entropy inequality, which states that for every positive constant 
$a>0$ and every bounded measurable function $\phi$, it holds
\begin{equation}
\label{eq:enti}
\int \phi \, d\nu \le a^{-1} \, \left\{ \log \left( \int e^{a \phi} \, d\mu \right) 
+ H(\nu | \mu)  \right\}.
\end{equation}

Let $\theta_x, x \in \ZZ$, be the shift by $x$: $(\theta_x \eta)_z=\eta_{x+z}$. 
For any function $g$ on $\Omega$, $\theta_x g$ is the function such that $(\theta_xg)(\eta) 
= g(\theta_x \eta)$. For any probability measure $\mu \in {\mc P} (\Omega)$, 
$\theta_x \mu \in {\mc P} (\Omega)$ is the probability measure such that, 
for any bounded function $g: \Omega \to \RR$, 
it holds $\int_\Omega g \, d (\theta_x \mu)= \int_\Omega \theta_x g \, d\mu$. 
If $\theta_x \mu = \mu$ for any $x$ then $\mu$ is said to be translation invariant.

If $\Lambda $ is a finite subset of $\ZZ$ the marginal of $\mu \in {\mc P} (\Omega)$ 
on $\RR^{\Lambda}$ is denoted by $\mu |_{\Lambda}$. 
The relative entropy of $\nu \in {\mc P} (\Omega)$ with respect to $\mu \in {\mc P} (\Omega)$ 
in the box $\Lambda$ is defined by $H(\nu |_{\Lambda} \, | \, \mu |_{\Lambda} )$ 
and is denoted by $H_{\Lambda} (\nu| \mu)$. We say that a translation invariant probability 
measure $\nu \in {\mc P} (\Omega)$ has finite entropy density (with respect to $\mu$) 
if there exists a finite positive constant $C$ such that for any finite $\Lambda \subset \ZZ$, 
$H_{\Lambda} (\nu | \mu) \le C | \Lambda|$. In fact, if this condition is satisfied, then the limit 
\[
\overline{H} (\nu|\mu)=\lim_{|\Lambda| \to \infty} \frac{H_{\Lambda} (\nu | \mu) }{|\Lambda|}
\]
exists and is finite (see~\cite{FFL}). 
It is called the entropy density of $\nu$ with respect to $\mu$.  

We are now in position to define ergodicity.

\begin{definition}
\label{def:ergo}
We say that the infinite volume dynamics with infinitesimal generator ${\mc G}$ 
is \emph{ergodic} if the following claim is true:
If $\nu \in {\mc P} (\Omega)$ is a probability measure invariant by translation, 
invariant by the dynamics generated by ${\mc G}$ and with finite entropy density with respect to $\mu_{1,0}$, 
then $\nu$ is a mixture of the $\mu_{\beta,\lambda}, \beta>0, \lambda \in \RR$. 
\end{definition}

\subsection{Ergodicity of the stochastic dynamics}
We are not able to prove the ergodicity of the 
deterministic dynamics in general, but we can prove it under the additional 
assumption that the invariant measure is exchangeable.  

\begin{theo}
\label{th:1}
Assume that the potential $V$ satisfies~(\ref{eq:ass-pot1})-(\ref{eq:ass-pot2}). 
Let $\nu$ be a translation invariant measure with a finite local 
entropy density w.r.t. $\mu_{1,0}$ such that
\begin{equation}
\label{eq:s}
\forall f \in C^{1}_0 (\Omega), \qquad \int {\mc A} f \, d\nu =0.
\end{equation}
If $\nu$ is exchangeable then $\nu$ is a mixture of 
$\mu_{\beta,\lambda}$, $\beta>0$, $\lambda \in \RR$. 
\end{theo}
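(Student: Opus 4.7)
My strategy is to use exchangeability to reduce $\nu$ to a mixture of product measures via de Finetti, then exploit the stationarity under $\mc A$ with a carefully chosen class of two-site test functions to derive a pointwise ODE whose only solutions are the Gibbs densities. Since $\nu$ is exchangeable under all finitary permutations of $\ZZ$, the Hewitt--Savage form of the de Finetti theorem produces the representation
\[
\nu = \int_{\mc P(\RR)} \rho^{\otimes \ZZ}\, m(d\rho)
\]
for some Borel probability measure $m$ on $\mc P(\RR)$. The finite entropy density assumption against $\mu_{1,0}$, whose single-site marginal has strictly positive Lebesgue density $\propto e^{-V(\eta)}$, guarantees (via convexity of relative entropy) that for $m$-a.e.\ $\rho$ the measure $\rho$ is absolutely continuous with respect to Lebesgue with density $p = p_\rho$. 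Next, for any $f \in C_0^1(\Omega)$ the map $\rho \mapsto \int \mc A f\, d\rho^{\otimes \ZZ}$ is measurable, has zero $m$-mean, and is well-defined thanks to the integrability supplied by (\ref{eq:ass-pot3}) together with the entropy bound; picking a countable $\|\cdot\|_\infty$-dense determining class of test functions then shows that $\rho^{\otimes \ZZ}$ is itself $\mc A$-stationary for $m$-a.e.\ $\rho$. It therefore suffices to prove the theorem in the case where $\nu = \rho^{\otimes \ZZ}$ is a product measure.

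Applying the stationary equation to $f(\eta) = \phi(\eta_0, \eta_1)$ with $\phi \in C_c^\infty(\RR^2)$ and denoting $c = \int V'\, d\rho$, only the $\partial_{\eta_0}$ and $\partial_{\eta_1}$ terms of $\mc A f$ contribute after factoring out the independent coordinates, and one is left with
\[
\int\!\!\int \bigl[(V'(\eta_1)-c)\,\partial_{\eta_0}\phi - (V'(\eta_0)-c)\,\partial_{\eta_1}\phi\bigr]\, p(\eta_0)\,p(\eta_1)\, d\eta_0\, d\eta_1 = 0.
\]
An integration by parts in each variable followed by the arbitrariness of $\phi$ yields the pointwise identity
\[
(V'(\eta_0) - c)\,p(\eta_0)\,p'(\eta_1) = (V'(\eta_1) - c)\,p'(\eta_0)\,p(\eta_1).
\]
Since $V''>0$ by (\ref{eq:ass-pot1}), $V'-c$ vanishes at most at one point; on the open set where both $p>0$ and $V'(\eta)\neq c$, the variables separate and $(\log p)'(\eta)/(V'(\eta)-c) = -\beta$ for some constant $\beta$. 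Integration then gives $p(\eta) \propto e^{-\beta V(\eta) - \lambda \eta}$ with $\lambda = -\beta c$, and integrability of $p$, controlled by the growth condition (\ref{eq:ass-pot2}), forces $\beta > 0$. Hence $\rho$ is the single-site marginal of some $\mu_{\beta,\lambda}$, completing the proof.

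The main obstacle will be the rigorous disintegration step: showing that the stationary condition (\ref{eq:s}) passes to $m$-almost every product component requires controlling measurability, handling the unboundedness of the currents $V'(\eta_{x\pm1})$, and uniformly integrating $\mc A f$ against the components---this is precisely where assumptions (\ref{eq:ass-pot2})--(\ref{eq:ass-pot3}) are invoked. A secondary technicality is justifying the division by $p(\eta_0)p(\eta_1)$; this is handled by restricting the analysis to the open set $\{p > 0\}$ and observing that the exponential form obtained there extends continuously to all of $\RR$, forcing $p$ to be strictly positive everywhere.
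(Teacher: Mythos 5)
Your overall architecture — Hewitt--Savage de Finetti to write $\nu=\int\rho^{\otimes\ZZ}\,m(d\rho)$, reduction to a single $\mathcal{A}$-stationary product measure, then a separation-of-variables ODE for the one-site density — is genuinely different from the paper's route (which conditions on the $\sigma$-field of $\theta_2$-invariant sets, proves independence of the even and odd sublattices, and identifies the conditional marginals via a translation-invariance characterization of Lebesgue measure on hyperplanes). However, there is a genuine logical gap at the step you yourself identify as the crux. You argue that $\rho\mapsto\int\mathcal{A}f\,d\rho^{\otimes\ZZ}$ ``has zero $m$-mean'' and that a countable determining class of test functions then gives $\mathcal{A}$-stationarity of $m$-a.e.\ component. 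This does not follow: a signed quantity with zero mean need not vanish almost everywhere, and no amount of enlarging the class of $f$'s fixes that, since for each fixed $f$ you only ever know the average over $m$, not the value at a.e.\ $\rho$. What is actually needed is $\int_B \mathcal{A}f\,d\nu=0$ for every exchangeable event $B$, i.e.\ that~(\ref{eq:s}) survives multiplication of $f$ by exchangeable functions. The missing idea — which is precisely the content of Lemma~\ref{lem:22} — is that for an exchangeable measure every exchangeable (equivalently, shift-invariant) function coincides $\nu$-a.s.\ with a tail-measurable function, hence is annihilated by every $\partial_{\eta_k}$ and commutes with $\mathcal{A}$; only then does $\int(\mathcal{A}f)\mathbf{1}_B\,d\nu=\int\mathcal{A}(f\mathbf{1}_B)\,d\nu=0$ follow (after an approximation of $\mathbf{1}_B$ by local functions). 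Your stated obstacles (measurability, unboundedness of $V'$) are real but secondary; the conceptual obstruction is this one, and your proposal does not contain the idea that resolves it.

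Two further points, less serious but still in need of repair. First, ``convexity of relative entropy'' runs in the wrong direction: convexity bounds $H(\int\rho\,dm\,|\,\mu)$ \emph{above} by $\int H(\rho|\mu)\,dm$, whereas you need a lower bound of the mixture entropy by the average component entropy to conclude that $m$-a.e.\ $\rho$ has finite entropy (hence a Lebesgue density). The correct tool is the affinity of the specific (per-site) relative entropy on translation-invariant measures, which for an exchangeable $\nu$ gives $\overline{H}(\nu|\mu_{1,0})=\int H(\rho\,|\,\mu_{1,0}|_{\{0\}})\,dm(\rho)$; this is the analogue of the argument in Lemma~\ref{lem:alpha}. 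Second, your integration by parts presumes $p$ is differentiable and your pointwise ODE presumes enough regularity to separate variables and to rule out a disconnected support $\{p>0\}$; a priori $p$ is only an $L^1$ density, so a regularity bootstrap is required. The paper sidesteps this entirely in Lemma~\ref{lem:Gibbscar} by multiplying by $\mathrm{e}^{\sum_i(\beta V(\eta_i)+\lambda\eta_i)}$ and invoking the uniqueness of translation-invariant measures on hyperplanes; you would either need to import that lemma or carry out the elliptic-regularity argument explicitly. With these three repairs your de Finetti route would give a clean and arguably more transparent proof, but as written the reduction to product components is not established.
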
 

The proof of Theorem~\ref{th:1} is provided in Section~\ref{sec:proof_Thm2}.
This result has an interesting consequence.

\begin{cor}
  The infinite volume dynamics generated by ${\mc L}$ is ergodic.
\end{cor}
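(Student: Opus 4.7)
Given Definition~\ref{def:ergo} applied to $\mc G = \mc L$, take a translation invariant probability measure $\nu$ on $\Omega$, invariant under the semigroup generated by $\mc L$ and of finite entropy density with respect to $\mu_{1,0}$. The plan is to reduce the statement to Theorem~\ref{th:1}: what remains to establish is (a) exchangeability of $\nu$, and (b) stationarity of $\nu$ under the deterministic generator alone, namely $\int \mc A f \, d\nu = 0$ for every $f \in C^1_0(\Omega)$. The other hypotheses of Theorem~\ref{th:1} (translation invariance and finite entropy density) are already granted by assumption.

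For (a), use an entropy production argument, exploiting the decomposition $\mc L = \mc A + \gamma \mc S$. The reference product measure $\mu_{1,0}$ is invariant under $\mc L$, with $\mc A$ antisymmetric in $L^2(\mu_{1,0})$ (a direct integration by parts, using that the coefficient $V'(\eta_{x+1}) - V'(\eta_{x-1})$ does not depend on $\eta_x$, together with $\mc A \rho_{1,0} = 0$ for the density $\rho_{1,0}$ of $\mu_{1,0}$) and $\gamma \mc S$ symmetric and non-positive. Working in a finite box $\Lambda_N$ and exploiting the bound $H_{\Lambda_N}(\nu|\mu_{1,0}) \le C|\Lambda_N|$, the usual entropy production computation (as in \cite{FFL}) shows that stationarity of $\nu$ under $\mc L$ forces the $\Lambda_N$-Dirichlet form associated with $\gamma \mc S$ to be of order $|\partial \Lambda_N|$; the $\mc A$ contribution to this balance reduces to a boundary flux, controlled via the entropy inequality~(\ref{eq:enti}) together with Assumption~(\ref{eq:ass-pot3}), which bounds $(V')^2$ by $V$. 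Dividing by $|\Lambda_N|$ and sending $N \to \infty$ shows that the Dirichlet form density vanishes, forcing $\nu$ to be invariant under every nearest-neighbor swap $\eta \leftrightarrow \eta^{x,x+1}$, hence exchangeable.

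For (b), observe that exchangeability of $\nu$ immediately yields $\int \mc S f \, d\nu = 0$ for every local $f$, since each summand $f(\eta^{x,x+1}) - f(\eta)$ averages to zero under a measure invariant by the swap $\eta \leftrightarrow \eta^{x,x+1}$. Combined with the stationarity identity $\int \mc L f \, d\nu = 0$, this gives $\int \mc A f \, d\nu = 0$, which is exactly condition~(\ref{eq:s}) required by Theorem~\ref{th:1}. Applying that theorem concludes the proof that $\nu$ is a mixture of the $\mu_{\beta,\lambda}$.

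The main obstacle lies in the entropy production step (a): although the general scheme is classical for interacting particle systems, the generator here mixes an unbounded first-order differential operator $\mc A$ with the bounded exchange noise $\mc S$, and the finite entropy density hypothesis (rather than a finite total entropy) means one must work in finite volume and control boundary fluxes carefully. The necessary estimates on $V'$ in $L^2(\nu)$ come precisely from Assumptions~(\ref{eq:ass-pot1}) and~(\ref{eq:ass-pot3}), and the translation invariance of $\nu$ is what allows one to distribute the order $|\partial \Lambda_N|$ defect over the volume and deduce a genuine vanishing of the Dirichlet form density in the thermodynamic limit.
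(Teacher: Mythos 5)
Your proposal is correct and follows essentially the same route as the paper, which itself only sketches this corollary by reference to \cite{FFL} and \cite{BO}: an entropy-production argument yields separate invariance under $\mc S$ (hence exchangeability, via the vanishing of the Dirichlet form density) and under $\mc A$, after which Theorem~\ref{th:1} applies. Your reordering---deducing $\int \mc A f\, d\nu = 0$ from exchangeability plus $\mc L$-stationarity rather than establishing the two invariances simultaneously---is only a cosmetic difference.
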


The proof of this corollary is similar to the proof given in~\cite{FFL} (or~\cite{BO}). 
It is based on the fact that, if $\nu$ is invariant for ${\mc L}$, then it can be shown, 
by some entropy arguments, that $\nu$ is invariant separately for ${\mc A}$ and for ${\mc S}$. 
The invariance with respect to ${\mc S}$ implies that $\nu$ is 
exchangeable and we can then apply Theorem~\ref{th:1} to conclude. 

\subsection{Proof of Theorem~\ref{th:1}}
\label{sec:proof_Thm2}

We call ${\mc F}_{\rm inv}$ the $\sigma$-field generated 
by the $\theta_1$-invariant sets and ${\mc F}^2_{\rm inv}$ the $\sigma$-field 
generated by the $\theta_2$-invariant sets.
We denote by $\bar{\nu}$ the conditional measure $\nu(\cdot | {\mc F}^2_{\rm inv})$.

By the entropy inequality (\ref{eq:enti}), it is easy to show that 
\begin{equation*}
\nu(\eta_0) < +\infty, 
\qquad 
\nu (\eta_0 V'(\eta_0)) < +\infty.
\end{equation*}
Thus, the ergodic theorem gives the existence of the ${\mc F}_{\rm inv}^2$-measurable functions 
\begin{equation}
\label{eq:ergodic_limit}
\eqalign{
{\mc V}^1(\eta)=\lim_{\ell \to \infty} \frac{1}{2\ell +1} \sum_{|x| \le \ell} \eta_{2x+1}
= {\bar \nu}(\eta_1), \cr
\alpha^1 (\eta)= \lim_{\ell \to \infty}  \frac{1}{2\ell +1} \sum_{|x| \le \ell} 
(\eta_{2x+1} -{\mc V}^1) \, V' (\eta_{2x+1}) = 
{\bar \nu} \left[ (\eta_{1} -{\mc V}^1)\, V' (\eta_{1}) \right],
}
\end{equation}
where the convergence occurs in $L^1(\nu)$ and $\nu$-almost surely.
The random variable ${\mc V}^1$ can be considered as a constant under ${\bar \nu}$. 
Let us fix $x \in \ZZ$ and define
\[
f(\eta) = (\eta_{2x+1} - {\mc V}^1) \phi (\eta),
\]
where $\phi \in C_0^1 (\Omega)$ is a function depending 
only on the even sites $\{ \eta_{2z} \}_{z \in \ZZ}$. 
By Lemma~\ref{lem:22}, the conditional probability measure ${\bar \nu}$ is ${\mc A}$ invariant.
Therefore,
\begin{eqnarray*}
\int\mathcal{A}f \, d\bar{\nu} = 0
& = \int (V'(\eta_{2x+2}) -V'(\eta_{2x}) ) \phi \, d{\bar \nu} (\eta) \\
& \ \ + \sum_{z \in \ZZ} \int (V'(\eta_{2z+1}) -V'(\eta_{2z-1})) ({\eta_{2x+1}} -{\mc V}^1) \partial_{\eta_{2z}} \phi \, d{\bar \nu}\\
& = \int (V'(\eta_{2x+2}) -V'(\eta_{2x}) ) \phi \, d{\bar \nu} (\eta)\\
& \ \ +  \int (\eta_{2x+1} -{\mc V}^1) \left[V'(\eta_{2x+1})-V'(\eta_{2x-1}) \right] \partial_{\eta_{2x}} \phi \, d{\bar\nu}\\
& \ \ +   \int (\eta_{2x+1} -{\mc V}^1) \left[V'(\eta_{2x+3})-V'(\eta_{2x+1}) \right]\partial_{\eta_{2(x+1)}} \phi \, d{\bar\nu}\\
& \ \ +  \sum_{z \ne x,x+1} \int (\eta_{2x+1}- {\mc V}^1)  \left[ V'(\eta_{2z+1})-V' (\eta_{2z-1}) \right]\partial_{\eta_{2z}} \phi \, d{\bar \nu}.
\end{eqnarray*}
By Lemma~\ref{lem:ind}, ${\bar \nu}$ is exchangeable. 
By exchanging $\eta_{2z+1}$ and ${\eta}_{2z-1}$ in the terms appearing in the last sum above, 
we see that the sum over $z \neq x,x+1$ is actually equal to $0$.
Moreover, by Lemma \ref{lem:ind} again, the functions depending on even sites 
are independent of functions depending on odd sites under ${\bar \nu}$, so that
\begin{equation}
\label{eq:proof_Thm2}
\eqalign{
0 = \int (V'(\eta_{2x+2}) -V'(\eta_{2x}) ) \phi \, d{\bar \nu}\cr
\qquad + \left(\int (\eta_{2x+1} -{\mc V}^1)V'(\eta_{2x+1}) \, d{\bar \nu}\right)
\left(\int \left(\partial_{\eta_{2x}} -\partial_{\eta_{2(x+1)}}\right) \phi \, d{\bar\nu}\right)\cr
\qquad - \left(\int (\eta_{2x+1} -{\mc V}^1) V'(\eta_{2x-1}) d{\bar \nu} \right) 
\left( \int  \partial_{\eta_{2x}} \phi \, d{\bar\nu} \right) \cr
\qquad +  \left(\int (\eta_{2x+1} -{\mc V}^1)V'(\eta_{2x+3}) d{\bar \nu} \right)
\left( \int \partial_{\eta_{2(x+1)}} \phi \, d{\bar\nu} \right).
}
\end{equation}
By exchangeability of ${\bar \nu}$, it holds, for any $k \neq -1$:
\[
{\nu} \left( (\eta_{2x+1} -{\mc V}^1) V' (\eta_{2x -1}) | {\mc F}_{\rm inv}^2\right) 
= {\nu} \left( (\eta_{2x+2k +1} -{\mc V}^1) V' (\eta_{2x -1}) | {\mc F}_{\rm inv}^2\right).
\]
Thus,
\[
\fl {\bar \nu}  \left( (\eta_{2x+1} -{\mc V}^1) V' (\eta_{2x -1}) \right)
= \nu \left( \left. \frac{1}{2\ell} \sum_{|k| \leq \ell, \, k \neq -1} 
(\eta_{2x+2k +1} -{\mc V}^1) V' (\eta_{2x -1}) \right| {\mc F}_{\rm inv}^2\right).
\]
The $L^1(\nu)$ limit in~(\ref{eq:ergodic_limit}) then gives
\begin{equation*}
{\bar \nu}  \left( (\eta_{2x+1} -{\mc V}^1) V' (\eta_{2x -1}) \right)=0.
\end{equation*}
Similarly, it can be shown that 
${\bar \nu}  \left( (\eta_{2x+1} -{\mc V}^1) V' (\eta_{2x +3}) \right) = 0$.
Since
\[
\alpha^1 =  {\bar \nu} \left[ (\eta_{1} -{\mc V}^1)\, V' (\eta_{1}) \right]= {\bar \nu} \left[ (\eta_{2x+1} -{\mc V}^1)\, V' (\eta_{2x+1}) \right],
\]
(\ref{eq:proof_Thm2}) implies
\begin{equation}
\label{eq:final1}
\fl \left \{ \eqalign{
0 = \int (V'(\eta_{2(x+1)}) -V'(\eta_{2x}) ) \phi \, d{\bar \nu} (\eta) +
\alpha^1 \int \left(\partial_{\eta_{2x}} -\partial_{\eta_{2(x+1)}}\right) \phi \, d{\bar\nu},\\
0 = \int (\eta_{2x+1} -{\mc V}^1) \, d{\bar \nu}.
}\right.
\end{equation}

In the same way, it can be shown that
\begin{equation}
\label{eq:final2}
\fl \left\{ \eqalign{
0 = \int (V'(\eta_{2(x+1)+1}) -V'(\eta_{2x+1}) ) \phi \, d{\bar \nu} (\eta) 
+ \alpha^0 \int \left(\partial_{\eta_{2x+1}} -\partial_{\eta_{2(x+1)+1}}\right) \phi \, d{\bar\nu},\\
0 =\int (\eta_{2x} -{\mc V}^0) \, d{\bar \nu},
} \right.
\end{equation}
where $\phi = \phi ((\eta_{2z+1})_{z \in \mathbb{Z}}) \in C_0^1 (\Omega)$ is now a 
test function depending on the odd sites only, and ${\mc V}^0$, ${\alpha}^0$ 
are the ${\mc F}_{\rm inv}^2$ measurable functions defined by
\begin{equation*}
{\mc V}^0 =\nu (\eta_0 | {\mc F}_{\rm inv}^2), 
\qquad 
{\alpha}^0 = \nu \left( V' (\eta_0) (\eta_0 - {\mc V}^0) | {\mc F}_{\rm inv}^2 \right).
\end{equation*}
In fact, the $\theta_1$-invariance of $\nu$ gives
\begin{equation*}
{\mc V}^0 ={\mc V}^1 = \nu ( \eta_0 | {\mc F}_{\rm inv}):={\mc V}, 
\qquad 
{\alpha}^0 = {\alpha}^1= \nu( (\eta_0 -{\mc V}) V'(\eta_0) | {\mc F}_{\rm inv}):= \alpha.
\end{equation*}
By Lemma \ref{lem:alpha}, it holds $\alpha >0$ $\nu$-almost surely. 
Taking into account the fact that $(\eta_{2z})_{z \in \ZZ}$ and $(\eta_{2z+1})_{z \in \ZZ}$ are 
independent under ${\bar \nu}$, (\ref{eq:final1}) and (\ref{eq:final2}) 
allow to show, by Lemma \ref{lem:Gibbscar}, 
that the probability measure ${\bar \nu}$ is a product measure with marginals given by 
\begin{equation*}
  d {\bar \nu} (\eta_{x} \in [r,r+dr])= {\widetilde Z}({\mc V}, {\alpha})^{-1} \exp 
  \left( -\alpha V(r) -\lambda r \right)dr, 
\end{equation*} 
where ${\widetilde Z}({\mc V}, {\alpha})$ is a normalizing constant, 
and $\lambda:=\lambda (\mc V ,\alpha)$ is such that
\begin{equation*}
{\widetilde Z}({\mc V}, {\alpha})^{-1} \int r \, \exp \left( -\alpha V(r) -\lambda r \right)dr 
={\mc V}.
\end{equation*}

Let us summarize our results: Denoting by ${\mb P}$ the law of the random variables 
$(\alpha(\omega) , \mc V (\omega) ) \in (0, +\infty) \times \RR$ under $\nu$, 
we have proved that, for any bounded local function $f$,
\begin{equation*}
\nu (f) = \nu (\nu (f | {\mc F}_{\rm{inv}}) )= \nu \left[ \mu_{\alpha (\omega), \mc V(\omega)} (f) \right]= \int \mu_{\beta, \lambda} (f) \, d{\mb P} (\beta, {\lambda}).
\end{equation*}
This concludes the proof of Theorem~\ref{th:1}.


\begin{lemma}
\label{lem:ind}
Let $\nu$ be an exchangeable translation invariant measure with a finite entropy density 
w.r.t. $\mu_{1,0}$. Then, ${\bar \nu}(\cdot) = \nu (\cdot | {\mc F}_{\rm inv}^2)$ 
is exchangeable and under $\bar \nu$, the variables 
$(\eta_{2x})_{x \in \ZZ}$ and $(\eta_{2x+1})_{x \in \ZZ}$ are independent.
\end{lemma}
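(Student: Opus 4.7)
My plan is to identify $\bar\nu$, $\nu$-almost surely, as a pure i.i.d.\ product measure, from which both the exchangeability of $\bar\nu$ and the independence of the even and odd coordinate subfamilies under $\bar\nu$ follow immediately.

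First, since $\nu$ is an exchangeable probability measure on the Polish product space $\RR^{\ZZ}$, I would invoke the Hewitt--Savage extension of de Finetti's theorem. This provides a random probability measure $\pi$ on $\RR$, measurable with respect to the tail $\sigma$-field
\begin{equation*}
\mc T = \bigcap_{\Lambda} \sigma(\eta_x : x \notin \Lambda),
\end{equation*}
where the intersection runs over all finite $\Lambda \subset \ZZ$, such that a regular version of $\nu(\cdot \mid \mc T)$ is the i.i.d.\ product $\pi^{\otimes \ZZ}$, $\nu$-almost surely. The finite entropy density assumption w.r.t.\ $\mu_{1,0}$ ensures integrability of the relevant observables but is not strictly needed for this abstract step.

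Next, I would establish the $\sigma$-field identity $\mc F^2_{\rm inv} = \mc T$ modulo $\nu$-null sets. The inclusion $\mc T \subset \mc F^2_{\rm inv}$ is immediate, since every tail event is invariant under any shift. For the reverse inclusion, I would rely on the fact that, under any i.i.d.\ product measure $\pi^{\otimes \ZZ}$, the shift $\theta_2$ is strongly mixing and therefore ergodic; hence $\mc F^2_{\rm inv}$ is $\pi^{\otimes \ZZ}$-trivial for every $\pi$. Consequently, for any $A \in \mc F^2_{\rm inv}$, the indicator $\mathbf{1}_A$ coincides $\nu$-almost surely with the $\mc T$-measurable function $\omega \mapsto \pi(\omega)^{\otimes \ZZ}(A)$, giving $\mc F^2_{\rm inv} \subset \mc T$ modulo $\nu$-null sets.

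Combining the two previous steps yields
\begin{equation*}
\bar{\nu} = \nu(\cdot \mid \mc F^2_{\rm inv}) = \nu(\cdot \mid \mc T) = \pi^{\otimes \ZZ} \quad \nu\text{-a.s.}
\end{equation*}
Since $\pi^{\otimes \ZZ}$ is an i.i.d.\ product measure, it is invariant under every finite permutation of coordinates and its restrictions to disjoint index sets are independent. In particular, $\bar\nu$ is exchangeable, and the subfamilies $(\eta_{2x})_{x \in \ZZ}$ and $(\eta_{2x+1})_{x \in \ZZ}$ are independent under $\bar\nu$, as claimed. The main technical point will be the identification $\mc F^2_{\rm inv} = \mc T$ modulo $\nu$-null sets, which requires some care with the measurable selection of $\pi$ and with regular conditional probabilities, but reduces ultimately to the standard ergodicity of nontrivial shifts on i.i.d.\ product measures.
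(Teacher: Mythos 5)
Your argument is essentially correct, but it takes a genuinely different and more abstract route than the paper. The paper never invokes de Finetti: it proves the two claims by hand. For exchangeability, it shows that every ${\mc F}_{\rm inv}^2$-measurable bounded function is $\nu$-a.s.\ invariant under each transposition $T^{x,x+1}$, by writing such a function as an ergodic average $\lim_k\lim_\ell (2\ell+1)^{-1}\sum_j \theta_{2j}\rho_k$ of local functions and noting that a transposition affects only finitely many summands; exchangeability of $\bar\nu$ then follows by transferring the exchangeability of $\nu$ through the conditioning. For independence, it picks a finite permutation $T$ sending $\phi^0\mapsto\theta_{2k}\phi^0$ and $\phi^1\mapsto\theta_{2j}\phi^1$ (possible precisely because $\phi^0$ and $\phi^1$ live on disjoint parities), averages over $k$ and $j$, and applies the ergodic theorem. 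Your route instead identifies $\bar\nu$ outright as the i.i.d.\ product $\pi^{\otimes\ZZ}$ directed by the de Finetti measure, which is a stronger intermediate conclusion and makes both claims immediate; the paper's approach is more elementary, self-contained, and closer in spirit to the subsequent lemmas, while yours is shorter once the measure-theoretic identifications are granted. One point you must repair: the assertion that ``every tail event is invariant under any shift'' is false at the level of sets (e.g.\ the event that $\eta_{3n}\to 0$ is a tail event but is not $\theta_2$-invariant). The inclusion ${\mc T}\subset{\mc F}_{\rm inv}^2$ you need only holds modulo $\nu$-null sets, and the correct justification is that under an exchangeable law the tail $\sigma$-field is generated, modulo null sets, by the directing measure $\pi$, which admits a version (as a limit of empirical measures) that is genuinely $\theta_1$-invariant and hence ${\mc F}_{\rm inv}^2$-measurable. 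With that fix, together with the conditional Hewitt--Savage/Kolmogorov zero--one argument you give for the reverse inclusion, the identification ${\mc F}_{\rm inv}^2={\mc T}$ modulo $\nu$-null sets and hence your proof goes through.
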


\begin{proof}
For a given $x \in \ZZ$, consider the function $T^{x,x+1}: \Omega \to \Omega$ defined as
$(T^{x,x+1} \eta)=\eta^{x,x+1}$, and denote by 
${\mc T}$ be the set of local transformation $T:\Omega \to \Omega$ 
which are obtained as compositions of transformations $T^{y,y+1}$, $y \in \ZZ$.
To show that $\bar \nu$ is exchangeable, we prove that, for a given $x \in \ZZ$
and for any bounded function~$g$, 
\[
\int g \circ T^{x,x+1} \, d{\bar \nu} = \int g \, d {\bar \nu}.
\]
This amounts to proving that, for any ${\mc F}_{\rm inv}^{2}$-measurable 
bounded function~$\rho$,
\[
\nu (g\rho) = \nu\Big( \nu\left(\left.g\right|{\mc F}_{\rm inv}^{2}\right)\rho\Big)
= \nu\Big( \nu\left(\left.g\circ T^{x,x+1}\right|{\mc F}_{\rm inv}^{2}\right)\rho\Big)
= \nu \left( g\circ T^{x,x+1}\rho \right).
\]
By exchangeability of $\nu$, it is therefore sufficient to show that any 
${\mc F}_{\rm inv}^{2}$-measurable bounded function $\rho$ 
is invariant by $T^{x,x+1}$. 

To prove this result, we write $\rho$ as an ergodic limit of local functions.
Observe first that $\rho = \lim_{k \to \infty} \nu( \rho_k | {\mc F}_{\rm inv}^2 )$, $\nu$ a.s. and in ${L}^1 (\nu)$, where the local functions $\rho_k$ are defined as 
$\rho_k = \nu (\rho \, | \, {\mc F}_{\Lambda_k})$, with ${\mc F}_{\Lambda_k}$ 
the $\sigma$-algebra generated by $\{ \eta_x \, ; \, x \in \Lambda_k \}$.
Besides, 
\[
\nu (\rho_k | {\mc F}_{\rm inv}^2) = \lim_{\ell \to \infty} 
\frac{1}{2\ell +1} \sum_{j=-\ell}^{\ell} \theta_{2j} \rho_k,
\]
so that
\begin{equation}
  \label{eq:approximation_rho}
  \rho = \lim_{k \to \infty} \lim_{\ell \to \infty} \frac{1}{2\ell +1} 
  \sum_{j=-\ell}^{\ell} \theta_{2j} \rho_k.
\end{equation}
Now, since $\rho_k$ is a local function, $(\theta_{2j} \rho_k) \circ T^{x,x+1} 
= \theta_{2j} \rho_k$ for $|j|$ large enough. Therefore, 
\[
\left[ \frac{1}{2\ell +1} \sum_{j=-\ell}^{\ell} \theta_{2j} \rho_k \right] \circ T^{x,x+1} 
= \frac{1}{2\ell +1} \sum_{j=-\ell}^{\ell} \theta_{2j} \rho_k +{\mc O} (\ell^{-1}),
\]
which, together with~(\ref{eq:approximation_rho}), gives indeed $\rho \circ T^{x,x+1} = \rho$.

We turn now to the second part of the lemma. Let $\phi^0$ (resp. $\phi^1$) be 
a bounded local measurable function depending only on the 
even (resp. odd) sites. 
We have to show that
\begin{equation}
\label{eq:inde}
\nu \left( \phi^0 \phi^1 | {\mc F}_{\rm inv}^2\right)
=\nu \left( \phi^0 | {\mc F}_{\rm inv}^2\right)\nu \left( \phi^1 | {\mc F}_{\rm inv}^2\right).
\end{equation}
Fix $k, j \in \ZZ$ and consider a local transformation $T \in {\mc T}$ 
such that $\phi^0 \circ T = \theta_{2k} \phi^0$ and $\phi^1 \circ T = \theta_{2j} \phi^1$. 
This is possible because $\phi^0$ (resp. $\phi^1$) depends only on the even (resp. odd) sites.
Now, for any ${\mc F}_{\rm inv}^{2}$-measurable positive bounded function $\rho$,
\[
\fl \nu \Big( \nu \left( \phi^0 \phi^1 | {\mc F}_{\rm inv}^2\right) \rho \Big)
= \nu \left( \phi^0 \phi^1 \rho \right)
= \nu \left(( \phi^0 \circ T) \; (\phi^1 \circ T) \; (\rho \circ T) \right)
= \nu \left( \theta_{2k} \phi^0 \; \theta_{2j} \phi^1 \; \rho  \right),
\]
where the second equality is a consequence of the exchangeability of $\nu$, 
and the third one is obtained thanks to the invariance of $\rho$ by $T$ proved above. 
It follows that
\begin{equation*}
\nu \Big( \nu \left( \phi^0 \phi^1 | {\mc F}_{\rm inv}^2\right) \rho \Big)
= \nu \left( \left( \frac{1}{2\ell +1} \sum_{k=-\ell}^\ell \theta_{2k} \phi^0\right) 
\left( \frac{1}{2m +1} \sum_{j=-m}^m \theta_{2j} \phi^1\right) \rho \right).
\end{equation*}
Taking first the limit $\ell \to \infty$ and then the limit $m \to \infty$, 
we obtain, by the ergodic theorem,
\[
\nu \Big( \nu \left( \phi^0 \phi^1 | {\mc F}_{\rm inv}^2\right) \rho \Big) =
\nu \Big( \nu \left( \phi^0 | {\mc F}_{\rm inv}^2\right)
\nu \left( \phi^1 | {\mc F}_{\rm inv}^2\right) \rho \Big),
\]
which is indeed~(\ref{eq:inde}).
\end{proof}

\begin{lemma}
\label{lem:22}
Let $\nu$ a probability measure invariant by translation, invariant 
by ${\mc A}$ and with finite entropy density. Let $f \in C_0^1 (\Omega)$ 
and $\rho$ be a ${\mc F}_{\rm inv}^2$ bounded measurable function. Then
\begin{equation}
  \label{eq:A_inv_cond_meas}
  \int ({\mc A} f)\,  \rho \, d\nu = \nu\Big( \nu({\mc A} f | {\mc F}_{\rm inv}^2) \rho \Big) = 0.
\end{equation}
In other words, $\nu(\cdot| {\mc F}_{\rm inv}^2)$ is ${\mc A}$-invariant.
\end{lemma}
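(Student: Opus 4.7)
Given $f \in C_0^1(\Omega)$ and a bounded $\mc{F}_{\rm inv}^2$-measurable $\rho$, the plan is to approximate $\rho$ by smooth local Ces\`aro averages in $\theta_2$ and then exploit the $\mc{A}$-invariance of $\nu$ via a Leibniz rule. Starting from any bounded $g \in C_0^1(\Omega)$, set
\[
\rho_\ell := \frac{1}{2\ell+1}\sum_{|j| \le \ell} \theta_{2j}\, g \in C_0^1(\Omega).
\]
The $L^2$ mean ergodic theorem applied to $\theta_2$ gives $\rho_\ell \to \nu(g \mid \mc{F}_{\rm inv}^2)$ in $L^2(\nu)$, and every bounded $\mc{F}_{\rm inv}^2$-measurable function is an $L^2(\nu)$-limit of such conditional expectations, so it is enough to treat $\rho$ of this form.

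For each $\ell$, the product $f\rho_\ell$ lies in $C_0^1(\Omega)$. The $\mc{A}$-invariance of $\nu$ combined with the Leibniz identity $\mc{A}(f\rho_\ell) = (\mc{A}f)\rho_\ell + f(\mc{A}\rho_\ell)$ yields
\[
\int (\mc{A}f)\, \rho_\ell \, d\nu = -\int f\, (\mc{A}\rho_\ell)\, d\nu.
\]
The left-hand side converges to $\int (\mc{A}f)\, \nu(g \mid \mc{F}_{\rm inv}^2)\, d\nu$ by dominated convergence, using that $\mc{A}f \in L^1(\nu)$ (a consequence of Assumption~\ref{ass:V}, the entropy inequality~(\ref{eq:enti}) and the finite entropy density hypothesis) and $\|\rho_\ell\|_\infty \le \|g\|_\infty$. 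For the right-hand side, translation invariance of $\nu$ together with the commutation of $\mc{A}$ with the spatial shifts lets us rewrite
\[
\int f(\mc{A}\rho_\ell)\, d\nu = \int \bar f_\ell\, (\mc{A}g)\, d\nu, \qquad \bar f_\ell := \frac{1}{2\ell+1} \sum_{|j| \le \ell} \theta_{-2j}\, f,
\]
and the $L^2$ convergence $\bar f_\ell \to \nu(f \mid \mc{F}_{\rm inv}^2)$, combined with $\mc{A}g \in L^2(\nu)$, produces the intermediate identity
\[
\int (\mc{A}f)\, \nu(g \mid \mc{F}_{\rm inv}^2)\, d\nu = -\int \nu(f \mid \mc{F}_{\rm inv}^2)\, (\mc{A}g)\, d\nu
\qquad \text{for all } f,g \in C_0^1(\Omega).
\]

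The hard part will be upgrading this skew-symmetric relation to the full claim $\int (\mc{A}f)\rho\, d\nu = 0$: the specialization $g = f$ only yields $\int (\mc{A}f)\, \nu(f \mid \mc{F}_{\rm inv}^2)\, d\nu = 0$, which is not sufficient on its own. To close the argument I would return to the identity $\int (\mc{A}f)\rho_\ell\, d\nu = -\int f(\mc{A}\rho_\ell)\, d\nu$ and show directly that the right-hand side vanishes as $\ell\to\infty$. The motivating observation is that a smooth $\theta_2$-invariant function $F$, whose partial derivative $\partial_{\eta_x}F$ depends only on the parity of $x$, automatically belongs to $\ker \mc{A}$: both sums $\sum_{x \text{ even}}(V'(\eta_{x+1}) - V'(\eta_{x-1}))$ and $\sum_{x \text{ odd}}(V'(\eta_{x+1}) - V'(\eta_{x-1}))$ telescope to zero. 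Along the approximants, the derivatives $\partial_{\eta_x}\rho_\ell$ are uniformly of order $O(1/\ell)$ and become asymptotically parity-dependent, so a summation by parts first in the spatial index $x$ (exploiting the cancellation $V'(\eta_{x+1})-V'(\eta_{x-1})$) and then in the averaging index $j$ should reduce $\mc{A}\rho_\ell$ to boundary terms that vanish in $L^1(\nu)$. The finite entropy density hypothesis enters crucially at this step, since it provides the necessary $L^p(\nu)$-control on $V'(\eta_0)$ and on the discrete gradients through the entropy inequality~(\ref{eq:enti}), justifying the various dominated-convergence arguments required to pass to the limit.
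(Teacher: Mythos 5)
Your setup (reduction to $\rho=\nu(g\mid{\mc F}_{\rm inv}^2)$, the Leibniz/stationarity identity $\int({\mc A}f)\rho_\ell\,d\nu=-\int f({\mc A}\rho_\ell)\,d\nu$, and the integrability of ${\mc A}f$ via the entropy inequality) is fine, but the step you yourself flag as "the hard part" does not close, and the mechanism you propose for it fails. Since ${\mc A}$ commutes with the shifts, ${\mc A}\rho_\ell=\frac{1}{2\ell+1}\sum_{|j|\le\ell}\theta_{2j}({\mc A}g)$ is exactly the Ces\`aro ergodic average of ${\mc A}g$: it does not reduce to boundary terms, and by the ergodic theorem it converges in $L^1(\nu)$ to $\nu({\mc A}g\mid{\mc F}_{\rm inv}^2)$, not to $0$. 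Concretely, after your shift manipulation one gets
\begin{equation*}
\int f\,({\mc A}\rho_\ell)\,d\nu \;\longrightarrow\; \nu\Big(\nu(f\mid{\mc F}_{\rm inv}^2)\,\nu({\mc A}g\mid{\mc F}_{\rm inv}^2)\Big),
\end{equation*}
and showing this vanishes for all $f$ is equivalent to $\nu({\mc A}g\mid{\mc F}_{\rm inv}^2)=0$ $\nu$-a.s., which is precisely the lemma. The telescoping heuristic does not rescue this: the derivatives $\partial_{\eta_x}\rho_\ell=\frac{1}{2\ell+1}\sum_j(\partial_{\eta_{x-2j}}g)(\theta_{2j}\eta)$ are $O(1/\ell)$ in sup norm but are genuinely $\eta$- and $x$-dependent, not asymptotically functions of the parity of $x$ alone, so the sums $\sum_x(V'(\eta_{x+1})-V'(\eta_{x-1}))\partial_{\eta_x}\rho_\ell$ do not telescope. (The antisymmetric identity you derive only yields the diagonal case $g=f$, as you note.)

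The missing idea — and the one the paper uses — is that any ${\mc F}_{\rm inv}^2$-measurable function coincides $\nu$-a.s.\ with a \emph{tail}-measurable function: this follows from the representation~(\ref{eq:approximation_rho}) of $\rho$ as a double limit of Ces\`aro averages of local functions, each of which is eventually independent of any fixed finite set of coordinates. Hence $\partial_{\eta_k}\rho=0$ $\nu$-a.s.\ for every $k$, the first-order operator ${\mc A}$ satisfies ${\mc A}(\rho f)=\rho\,{\mc A}f$, and applying the stationarity relation~(\ref{eq:s}) to $\rho f$ gives $\int({\mc A}f)\rho\,d\nu=\int{\mc A}(\rho f)\,d\nu=0$ directly, with no passage through approximants of $\rho$ inside ${\mc A}$. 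In other words, the conditional expectation must be handled through the structure of the invariant $\sigma$-field (invariant $=$ tail mod $\nu$), not through the ergodic averages alone; without that input your argument is circular.
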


\begin{proof}
The tail $\sigma$-field ${\mc F}_{\rm{tail}}$ is defined by ${\mc F}_{\rm{tail}}= \cap_{k \ge 1} {\mc F}_{\Lambda_k^c}$ where ${\mc F}_{\Lambda_k^c}$ is the $\sigma$-field generated by $\{\eta_x\,;\, |x|>k\}$. Any ${\mc F}_{\rm{inv}}^2$-measurable function $\rho$ coincides $\nu$ almost surely with a ${\mc F}_{\rm{tail}}$-measurable function (this can be seen for example as a consequence of (\ref{eq:approximation_rho})). In particular, for any $k \in \ZZ$, $\partial_{\eta_k} \rho =0$ $\nu$ a.s. . Thus, if $f \in C_0^1 (\Omega)$, then $\rho f \in C_0^1 (\Omega)$ and by (\ref{eq:s}) applied to the latter, we obtain $ 0= \int {\mc A} (\rho f) \, d\nu= \int ({\mc A} f)\,  \rho \, d\nu$. 
\end{proof}

\begin{lemma}
\label{lem:alpha}
Let $\alpha= \nu \left( (\eta_0 -{\mc V})V' (\eta_0) | {\mc F}_{\rm inv}\right)$ 
where ${\mc V} =\nu (\eta_0 | {\mc F}_{\rm inv})$. Then, 
\begin{equation}
\nu (\{ \alpha \le 0 \}) = 0.
\end{equation}
\end{lemma}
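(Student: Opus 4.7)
The plan is to first rewrite $\alpha$ in a symmetric form using the fact that $\mathcal{V}$ is by construction the conditional mean of $\eta_0$. Since $\nu(\eta_0-\mathcal{V}\mid\mathcal{F}_{\rm inv})=0$ and $V'(\mathcal{V})$ is $\mathcal{F}_{\rm inv}$-measurable, we may subtract $V'(\mathcal{V})$ inside the conditional expectation without changing its value, yielding
\[
\alpha \;=\; \nu\!\left[\,(\eta_0-\mathcal{V})\bigl(V'(\eta_0)-V'(\mathcal{V})\bigr)\,\big|\,\mathcal{F}_{\rm inv}\,\right].
\]
By Assumption~\ref{ass:V}, the potential satisfies $V''>0$, so $V'$ is strictly increasing on $\RR$. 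Consequently the integrand is pointwise non-negative, and strictly positive whenever $\eta_0\neq\mathcal{V}$. This already gives $\alpha\ge 0$ $\nu$-almost surely, and reduces the lemma to excluding the event $B=\{\alpha=0\}$.

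Assume, for contradiction, that $\nu(B)>0$. Because the integrand above vanishes only on $\{\eta_0=\mathcal{V}\}$, the regular conditional probability $\bar\nu(\cdot):=\nu(\cdot\mid\mathcal{F}_{\rm inv})$ must satisfy $\bar\nu(\eta_0=\mathcal{V})=1$ on $B$. Now $B$ and $\mathcal{V}$ are both $\mathcal{F}_{\rm inv}$-measurable, hence $\theta_x$-invariant for every $x\in\ZZ$; combined with the translation invariance of $\nu$, this implies that for any $x$,
\[
\nu(B\cap\{\eta_x\neq \mathcal{V}\})
\;=\;\nu\bigl((B\cap\{\eta_0\neq \mathcal{V}\})\circ\theta_{-x}\bigr)
\;=\;\nu(B\cap\{\eta_0\neq \mathcal{V}\})\;=\;0.
\]
Applying this with $x=1$ and intersecting with $x=0$, we obtain $\nu(B\cap\{\eta_0=\eta_1=\mathcal{V}\})=\nu(B)$, and in particular $\nu(\{\eta_0=\eta_1\})\ge\nu(B)>0$.

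The main obstacle, and the step where the hypotheses of the theorem come into play, is turning this into a contradiction. I would use the finite entropy density assumption: the bound $H_{\{0,1\}}(\nu\mid\mu_{1,0})<+\infty$ forces the two-site marginal $\nu|_{\{0,1\}}$ to be absolutely continuous with respect to $\mu_{1,0}|_{\{0,1\}}$, which itself has a strictly positive density with respect to two-dimensional Lebesgue measure on $\RR^2$ (being a product of one-dimensional Gibbs factors with smooth potential $V$). Hence $\nu|_{\{0,1\}}\ll \mathrm{Leb}_{\RR^2}$. Since the diagonal $\{\eta_0=\eta_1\}$ has zero Lebesgue measure in $\RR^2$, it must be $\nu$-negligible, contradicting $\nu(\{\eta_0=\eta_1\})>0$. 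Therefore $\nu(B)=0$, which proves $\nu(\{\alpha\le 0\})=0$.
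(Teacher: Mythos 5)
Your proof is correct, and while it shares the paper's overall skeleton (strict convexity of $V$ gives $\alpha\ge 0$, and the finite entropy density rules out the degenerate event $\{\alpha=0\}$), the two key steps are executed differently. The paper first symmetrizes, writing $\alpha=\frac12\,\tilde\nu\left((\eta_0-\eta_1)(V'(\eta_0)-V'(\eta_1))\right)$ with $\tilde\nu=\nu(\cdot\,|\,\mathcal{F}_{\rm inv})$; note that already the identity $\alpha=\tilde\nu\left((\eta_0-\eta_1)V'(\eta_0)\right)$ requires $\eta_1$ and $V'(\eta_0)$ to decorrelate under $\tilde\nu$, i.e.\ it implicitly uses the exchangeability of $\nu$ (via the de Finetti/Hewitt--Savage product structure of the conditional law), which is available in the context where the lemma is applied. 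Your recentering at $V'(\mathcal{V})$ uses only that $\mathcal{V}$ is $\mathcal{F}_{\rm inv}$-measurable, and you recover the two-site degeneracy $\eta_0=\eta_1$ on $B$ from translation invariance of $\nu$ together with the $\theta_1$-invariance of $B$ and $\mathcal{V}$; so your argument needs no exchangeability at all. Second, the paper goes to the trouble of proving that the \emph{conditional} marginal $\tilde\nu|_{\{0,1\}}$ has finite relative entropy with respect to $\mu_{1,0}|_{\{0,1\}}$ (testing the variational formula against $\mathcal{F}_{\rm inv}$-measurable weights $\rho$), whereas you only invoke absolute continuity of the \emph{unconditional} marginal $\nu|_{\{0,1\}}$, which follows in one line from $H_{\{0,1\}}(\nu\,|\,\mu_{1,0})<+\infty$ and suffices since $\nu(\{\eta_0=\eta_1\})\ge\nu(B)>0$ already contradicts it. Your route is therefore a bit shorter and slightly more general; the only point treated at the same informal level as the paper is the integrability needed to pull $V'(\mathcal{V})$ out of the conditional expectation, which is harmless here since $V''\le C$ gives $V'$ linear growth and the entropy inequality controls the relevant moments.
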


\begin{proof}
Let ${\tilde \nu}=\nu (\cdot |{\mc F}_{\rm{inv}})$ and remark first that $\alpha = {\tilde \nu} \left((\eta_0 -\eta_1) V' (\eta_0)\right)$,
so that the exchangeability of $\nu$ implies $\alpha = {\tilde \nu} \left((\eta_1 -\eta_0) V' (\eta_1) \right)$,
and
\[
\alpha=\frac{1}{2} \, {\tilde \nu} \left((\eta_0 -\eta_1) (V' (\eta_0) -V' (\eta_1))\right).
\]
The convexity of $V$ already gives $\alpha \ge 0$.  It remains to show that $\nu (\{ \alpha = 0 \}) = 0$. To prove this, it is sufficient to show that the restriction of ${\tilde \nu}|_{\Lambda}$ to the box $\Lambda=\{0,1\}$, has ($\nu$ a.s.) a density with respect to the Lebesgue measure. In fact, we claim that the relative entropy of ${\tilde \nu}|_{\Lambda}$ w.r.t. $\mu_{0,1}|_{\Lambda}$ is finite, which implies the existence of the desired density. Indeed, consider any non-negative ${\mc F}_{\rm{inv}}$-measurable function $\rho$ and any positive bounded ${\mc F}_{\Lambda}$-measurable function $\varphi$. By the same argument as in the proof of Lemma \ref{lem:22}, $\rho$ is ${\mc F}_{\rm{tail}}$-measurable. Consequently, by using this and the definition of the conditional expectation, we have
\begin{equation*}
\nu (\rho\,  {\tilde \nu}|_{\Lambda} (\varphi))= \nu(\rho\,  {\tilde \nu} (\varphi))=\nu (\rho \varphi) = \nu( \rho \, \nu|_{\Lambda} (\varphi)).
\end{equation*}
By the variational definition (\ref{eq:ent009}) of the relative entropy and the previous equality, we get
\begin{equation*}
\nu (\rho\,  {\tilde \nu}|_{\Lambda} (\varphi)) \le \nu \left( \rho \left[ H \left( \nu|_{\Lambda} \, \Big| \, \mu_{0,1} |_{\Lambda} \right) + \log \left(\int e^{\varphi} d\mu_{0,1} \right) \right]\right).
\end{equation*}    
By assumption, $\nu$ has a finite entropy density so that  $H\left( \nu|_{\Lambda} \, \Big| \, \mu_{0,1} |_{\Lambda} \right) <+\infty$. Since the previous inequality is valid for any ${\mc F}_{\rm{inv}}$-measurable function $\rho \ge 0$, the claim $H\left( {\tilde \nu}|_{\Lambda} \, \Big| \, \mu_{0,1} |_{\Lambda} \right) <+\infty$ follows by the variational formula of the relative entropy (\ref{eq:ent009}).

\end{proof}

\begin{lemma}
\label{lem:Gibbscar}
Let  $V:\RR \to \RR$ be a potential satisfying Assumption \ref{ass:V} and $\beta>0$ a constant.  Let $\mu$ be a probability measure on $\Omega$ satisfying, for any $j \in \ZZ$ and any $\varphi \in C_0^1 (\Omega)$,
\begin{equation*}
\int (V' (\eta_{j+1}) -V' (\eta_{j}) ) \,\varphi \, d{\mu} + {\beta}^{-1} \int \left( \partial_{\eta_{j}} \varphi  -\partial_{\eta_{j+1}} \varphi \right) \, d{\mu} \, = \,0.
\end{equation*} 
Then, $\mu$ is a product probability measure whose marginals are given by $\left[Z (\beta, \lambda)\right]^{-1}  \exp (-\beta V (\eta_j) - \lambda \eta_j) \, d\eta_j$ where $\lambda$ is such that $v(\beta,\lambda)$ defined in (\ref{eq:tr}) coincides with $\mu (\eta_j)$.
\end{lemma}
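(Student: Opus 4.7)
The plan is to first use the integration-by-parts identity to determine the form of every finite-dimensional marginal of $\mu$, then to exploit the consistency between marginals to pin down the product Gibbs form.

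I would fix a finite box $\Lambda \subset \ZZ$ containing at least two consecutive integers and apply the identity with $\varphi \in C_0^1(\Omega)$ depending only on coordinates in $\Lambda$. This reduces to a distributional PDE for the marginal $\mu_\Lambda$ on $\RR^\Lambda$: for each adjacent pair $j, j+1 \in \Lambda$,
\[
(\partial_{\eta_j} - \partial_{\eta_{j+1}})\mu_\Lambda = \beta\bigl(V'(\eta_{j+1}) - V'(\eta_j)\bigr)\mu_\Lambda.
\]
Integrating this first-order system along its characteristics (which span the hyperplane $\sum_k \eta_k = \mathrm{const}$) shows that $\mu_\Lambda$ admits a smooth positive density $\rho_\Lambda$, and the PDE becomes $\partial_{\eta_j}[\log\rho_\Lambda + \beta V(\eta_j)] = \partial_{\eta_{j+1}}[\log\rho_\Lambda + \beta V(\eta_{j+1})]$, forcing all partial derivatives of $\psi_\Lambda := \log\rho_\Lambda + \beta\sum_{k \in \Lambda}V(\eta_k)$ to agree. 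Hence $\psi_\Lambda$ depends only on $s = \sum_{k \in \Lambda}\eta_k$, which yields
\[
\rho_\Lambda(\eta) = \exp\bigl\{F_\Lambda(s) - \beta\textstyle\sum_{k \in \Lambda}V(\eta_k)\bigr\}
\]
for some smooth function $F_\Lambda$ of one variable.

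To conclude that $F_\Lambda$ is affine, $F_\Lambda(s) = -\lambda s - \log Z_\Lambda$ with the same $\lambda$ for all $\Lambda$, I would invoke the consistency of marginals across boxes: $\rho_\Lambda$ equals the partial integral of $\rho_{\Lambda'}$ over the extra coordinates for every $\Lambda \subset \Lambda'$, giving the recursion
\[
e^{F_\Lambda(s)} = \int e^{F_{\Lambda'}(s + \tilde s) - \beta\sum_{k \in \Lambda'\setminus\Lambda} V(\eta_k)}\,\prod_{k \in \Lambda'\setminus\Lambda} d\eta_k.
\]
Combined with the normalization $\int\rho_\Lambda = 1$ for all $\Lambda$, this rigidity forces the exponential form. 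Finally, $\lambda$ is identified by matching $v(\beta,\lambda) = \mu(\eta_0)$, uniquely solvable thanks to strict monotonicity of $\lambda \mapsto v(\beta,\lambda)$ implied by Assumption~\ref{ass:V}.

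The main obstacle is the linearity step, since the integration-by-parts identity is invariant under arbitrary mixtures of product Gibbs measures sharing the given inverse temperature $\beta$. The affine character of $F_\Lambda$ therefore requires extracting rigidity from the consistent marginals on arbitrarily large boxes together with their integrability; in the application of the lemma in the proof of Theorem~\ref{th:1}, this rigidity is automatically ensured by $\bar\nu$ being a regular conditional distribution with respect to a tail $\sigma$-field.
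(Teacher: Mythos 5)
Your first step is, in substance, the paper's own argument written in differential form. The paper tests the identity against $\varphi=\chi\,\psi_\ell$ with $\psi_\ell=\exp\{\sum_{i=1}^{\ell}(\beta V(\eta_i)+\lambda\eta_i)\}$, which cancels the $V'$ terms and leaves $\int(\partial_{\eta_j}\chi-\partial_{\eta_{j+1}}\chi)\,\psi_\ell\,d\mu=0$; localizing $\chi$ by a function of $\sum_{i=1}^{\ell}\eta_i$ and of the exterior configuration, it concludes by translation invariance on the hyperplane $\{\sum_k\eta_k=\ell u\}$ (Theorem~2.20 of \cite{Ru}) that the conditional law of $(\eta_1,\dots,\eta_\ell)$ given the sum and the exterior is proportional to $e^{-\beta\sum_i V(\eta_i)}$ times Lebesgue measure on that hyperplane. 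That is exactly your ``Gibbs along characteristics'' statement. One caveat on your version: the vector fields $\partial_{\eta_j}-\partial_{\eta_{j+1}}$ only span the tangent space of the hyperplanes, so the relations say nothing about the law of $s=\sum_{k\in\Lambda}\eta_k$, and you cannot deduce from them that $\mu_\Lambda$ has a smooth positive density with respect to Lebesgue measure on $\RR^{\Lambda}$. What one actually gets at this stage is that $\mu_\Lambda$ disintegrates as $e^{-\beta\sum_k V(\eta_k)}$ times Lebesgue measure on each level set of $s$, mixed over an \emph{arbitrary} measure in the $s$ variable; your factor $e^{F_\Lambda(s)}\,ds$ must be allowed to be such a measure, possibly singular.

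The real issue is the step you flag yourself: consistency of the marginals plus normalization does \emph{not} force $F_\Lambda$ to be affine. As you observe, any mixture $\int\mu_{\beta,\lambda'}\,dm(\lambda')$ over $\lambda'$ at fixed $\beta$ satisfies the hypothesis of the lemma (the $\lambda'$-dependent terms cancel in the difference $\partial_{\eta_j}-\partial_{\eta_{j+1}}$), is consistent and normalized, and is not a product measure unless $m$ is a point mass; matching the single moment $\mu(\eta_0)=v(\beta,\lambda)$ does not exclude such mixtures either. So your proposal does not close the proof of the lemma as stated --- but note that the paper's own proof is thin at exactly the same point, where the sentence ``this multiplicative constant is fixed by ensuring that the correct value of $\mu(\eta_0)$ is recovered'' silently discards the dependence of the constant on $u$ and on the exterior configuration. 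What rescues the argument where the lemma is used (Theorems~\ref{th:1} and~\ref{th:thgeneral}) is, as you say, that it is applied to $\bar\nu=\nu(\cdot\,|\,{\mc F}_{\rm inv}^2)$, which is almost surely ergodic for the shift, whereas a nontrivial mixture of distinct product measures never is; equivalently, the lemma should carry an extremality (tail-triviality) hypothesis, which holds in the application. Your diagnosis of the obstruction is correct and is the most valuable part of the proposal, but as written the argument is incomplete without that extra input.
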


\begin{proof}
Define $ \psi_\ell (\eta) = e^{\sum_{i=1}^{\ell} (\beta V(\eta_i) + \lambda \eta_i)}$, where the value of $\lambda$ is chosen so that 
\[
v(\beta,\lambda)=\frac{\displaystyle \int_{\RR} \eta_0 \, e^{-\beta V(\eta_0) - \lambda \eta_0} \, d\eta_0}
     {\displaystyle \int_{\RR} e^{-\beta V(\eta_0) - \lambda \eta_0} \, d\eta_0} = \mu (\eta_0). 
\]
Choosing $\phi(\eta) = \chi(\eta) \psi_\ell (\eta)$ with $\chi(\eta)$
a local compactly supported smooth function, we obtain, for any $j=1, \dots, \ell-1$,
\begin{equation*}
\beta^{-1} \int \left(\partial_{\eta_{j}} \chi
     - \partial_{\eta_{j+1}} \chi \right)  \psi_\ell (\eta) \; d\mu (\eta) = 0 \ .
\end{equation*}
We now consider 
$$
\chi(\eta) = \chi_b(\eta) g\left(\sum_{i=1}^{\ell} \eta_i\right) \chi_0 (\eta_{1}, \dots, \eta_{\ell})
$$
 where $\chi_b$ is a local function not depending on $\eta_{1}, \dots, \eta_{\ell}$, and $g$ is a smooth function on $\RR$. Since 
$$ \partial_{\eta_{j}} \chi(\eta) - \partial_{\eta_{j+1}} \chi(\eta) = \chi_b(\eta) g\left(\sum_{i=1}^{\ell} \eta_i\right) \left(\partial_{\eta_{j}} \chi_0 (\eta)  - \partial_{\eta_{j+1}} \chi_0 (\eta)\right)$$ if $j=1,\dots, \ell-1$, we can further condition on $\sum_{k=1}^{\ell} \eta_k = \ell u$ and on the
exterior configuration $\{\eta_i, i\neq {1},\dots,{\ell}\}$, and obtain, for all $j=1, \dots, \ell-1$, 
\begin{equation}
\label{eq:transleb}
\fl \int\left[\partial_{\eta_{j}} \chi_0 (\eta) - \partial_{\eta_{j+1}} \chi_0 (\eta)\right] \psi_\ell (\eta) \; \mu\left(d\eta_{1}, \dots, d\eta_{\ell}\left| \sum_{k=1}^{\ell} \eta_k = \ell u, \eta_i, \ i\neq {1}, \dots, {\ell}  \right.\right) = 0.
\end{equation}
These relations allow to show that the Borel measure
\begin{equation*}
\psi_\ell (\eta)\mu\left(d\eta_{1}, \dots, d\eta_{\ell}\left|\sum_{k=1}^{\ell} \eta_k = \ell u, \eta_i,\ i\neq {1},\dots, {\ell}\right.\right), 
\end{equation*}
which has support on the hyperplane ${\mc H}_\ell=\{ (\eta_{1}, \dots, \eta_{\ell}) : \sum_{k=1}^{\ell} \eta_k = \ell u\}$, is invariant by any translation of $\mathcal{H}_\ell$.
By Theorem~2.20 in~\cite{Ru}, this measure is therefore the Lebesgue measure on ${\mc H}_\ell$ 
up to a multiplicative constant. This multiplicative constant is fixed by ensuring that
the correct value of $\mu(\eta_0)$ is recovered, so that finally
\begin{equation*}
  \mu(d\eta_{1}, \dots, d\eta_{\ell}| \eta_i,\ i\neq {1},\dots,{\ell}) =  \frac{e^{-\sum_{i=1}^{\ell} 
    (\beta V(\eta_i) + \lambda \eta_i)}}{Z(\beta, \lambda)^\ell} 
  d\eta_{1} \dots d\eta_{\ell}.
\end{equation*}
This concludes the proof of the lemma.
\end{proof}

\subsection{Derivation of a triplet of compressible Euler equations from a perturbed Hamilton dynamics}
\label{sec:secgeneral}

It is also natural to consider the Hamiltonian dynamics (\ref{eq:generaldynamics}) perturbed by the random exchange of neighboring velocities $p_{x}$ and $p_{x+1}$, so that $\sum_x p_x$, $\sum r_x$ and $\sum_x (U(p_x) +V(r_x))$ are preserved during the time evolution. This has been considered in \cite{FFL} in the case of a quadratic kinetic energy $U(p)=p^2/2$. In the case of a non-quadratic kinetic energy $U$, it is possible to adapt the arguments of \cite{FFL} to show that the corresponding dynamics is ergodic and that the triplet of compressible Euler equations (\ref{eq:syslim0}) is obtained as a hydrodynamic limit by Yau's method. We give only a sketch of the proof of the ergodicity of the perturbed dynamics.

Let $U$ and $V$ be two convex potentials satisfying Assumptions \ref{ass:V}. We state the dependence of the partition function with respect to the potential $V$ (resp. $U$) by the notation $Z_{V}$ (resp. $Z_{U})$. 

We consider the set ${\widetilde \Omega}=\Omega \times \Omega$ equipped with its natural product topology and its Borel $\sigma$-field. A typical configuration $\omega \in {\widetilde \Omega}$ is denoted by $\omega=(\bm{r} , \bm{p})$.  Let $\{\omega (t)\}_{t \ge 0}= \{ \bm{r} (t), \bm{p} (t) \}_{t \ge 0}$ be the the dynamics generated by ${\mc L} = {\mc A} + \gamma {\mc S}$, $\gamma >0$, with
\begin{equation*}
{\mc A} = \sum_{x \in \ZZ} (V' (r_x) -V' (r_{x-1}) \partial_{p_x} + \sum_{x \in \ZZ} (U' (p_{x+1}) -U' (p_x) ) \partial_{r_x}
\end{equation*}
and ${\mc S}$ the momenta exchange noise operator, acting on test 
functions~$f \, : \, \Omega \times \Omega \to \RR$ as
\begin{equation*}
({\mc S} f)(\omega)=\sum_{x \in \ZZ} \left [ f(\bm{r}, {\bm{p}}^{x,x+1}) - f(\bm{r}, \bm{p}) \right],
\end{equation*}
where ${\bm{p}}^{x,x+1}$ is the configuration obtained from $\bm{p}$ by exchanging the momentum $p_x$ with $p_{x+1}$. Observe that ${\mc L}, {\mc A}$ and ${\mc S}$ conserve the energy, the momentum and the deformation. All the equilibrium Gibbs measures $\{ \nu_{\beta, \lambda, \lambda'} \, , \beta>0, \lambda, \lambda' \in \RR\}$ are invariant. 

\begin{theo}
\label{th:thgeneral}
Assume that the potentials $U,V$ satisfy Assumptions \ref{ass:V}. The infinite volume dynamics generated by ${\mc L}$ is ergodic.
\end{theo}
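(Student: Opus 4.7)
The plan is to follow the blueprint of the proof of Theorem~\ref{th:1} (and of~\cite{FFL}), viewing the momenta $(p_x)_{x \in \ZZ}$ as the variables on which we have a direct exchange symmetry through~$\mathcal{S}$, and treating the deformations $(r_x)_{x \in \ZZ}$ via the Hamiltonian part $\mathcal{A}$. Let $\nu$ be a translation invariant, $\mathcal{L}$-invariant probability measure on $\widetilde\Omega$ with finite entropy density w.r.t.\ $\nu_{1,0,0}$. First, a standard entropy production argument (identical to the one sketched right after Theorem~\ref{th:1}) shows that $\nu$ is separately invariant under $\mathcal{A}$ and $\mathcal{S}$, and the $\mathcal{S}$-invariance gives the exchangeability of the momenta under $\nu$. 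Letting $\bar\nu = \nu(\cdot \mid \mathcal{F}_{\rm inv})$, I would then adapt Lemma~\ref{lem:ind} (using only momentum transpositions) to show that, under $\bar\nu$, the families $(p_x)_x$ and $(r_x)_x$ are independent and the $p_x$'s remain exchangeable. The multi-dimensional ergodic theorem also provides the $\mathcal{F}_{\rm inv}$-measurable parameters $\pi = \bar\nu(p_0)$, $\varrho = \bar\nu(r_0)$ and $\alpha = \bar\nu((p_0 - \pi)U'(p_0))$, the latter being strictly positive almost surely by the analog of Lemma~\ref{lem:alpha}.

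Next, I would test the $\mathcal{A}$-invariance of $\bar\nu$ (analog of Lemma~\ref{lem:22}) against $f = (p_{x+1} - \pi)\phi(\bm{r})$, with $\phi$ local in the deformations. Expanding $\int \mathcal{A} f\, d\bar\nu = 0$, using the exchangeability of the momenta to kill every cross term with momentum index outside $\{x, x+1\}$ exactly as in the argument following~(\ref{eq:proof_Thm2}), and using the independence of $\bm{p}$ and $\bm{r}$ under $\bar\nu$ to decouple the remaining integrals, one arrives at
\begin{equation*}
\int [V'(r_{x+1}) - V'(r_x)]\,\phi\, d\bar\nu + \alpha \int [\partial_{r_x}\phi - \partial_{r_{x+1}}\phi]\, d\bar\nu = 0,
\end{equation*}
which is exactly the hypothesis of Lemma~\ref{lem:Gibbscar} with inverse temperature $\beta := 1/\alpha$. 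That lemma then identifies the $r$-marginal of $\bar\nu$ with the product Gibbs measure $Z_V(\beta,\lambda')^{-1}\exp(-\beta V(r) - \lambda' r)\, dr$, with $\lambda'$ fixed by the prescribed value~$\varrho$. A fully symmetric argument, now testing against $g = (r_{x+1} - \varrho)\psi(\bm{p})$ with $\psi$ local in the momenta, yields the analogous identification of the $p$-marginal as $Z_U(\tilde\beta, \lambda)^{-1}\exp(-\tilde\beta U(p) - \lambda p)\, dp$.

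To close the argument, I would test $\mathcal{A}$-invariance once more with the mixed function $f = (p_{x+1}-\pi)(r_x-\varrho)$: a direct computation using independence of $\bm{p}$ and $\bm{r}$ under $\bar\nu$, exchangeability of the momenta, and the product-Gibbs forms of the two marginals already obtained, yields
\begin{equation*}
\bar\nu\bigl((r_0 - \varrho)V'(r_0)\bigr) = \bar\nu\bigl((p_0-\pi)U'(p_0)\bigr),
\end{equation*}
which forces $\tilde\beta = \beta$ and thus $\bar\nu = \nu_{\beta(\omega),\lambda(\omega),\lambda'(\omega)}$; disintegrating over the law of this $\mathcal{F}_{\rm inv}$-measurable triple yields the claimed mixture representation of~$\nu$. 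The main obstacle, compared to Theorem~\ref{th:1}, is precisely the asymmetry between the two sectors: since $\mathcal{S}$ exchanges only momenta, the deformations are \emph{not} exchangeable, which prevents a naive symmetric treatment of the $g$-test functions. The resolution, which is already present in~\cite{FFL}, is that exchangeability of the momenta alone is enough: when the $\partial_{p_x}$-part of $\mathcal{A}$ acts on $\psi(\bm{p})$, the resulting cross terms factor (by $\bm{p} \perp \bm{r}$) into a purely deformation contribution and a purely momentum expectation, the latter being collapsible via exchangeability of the $p_y$'s.
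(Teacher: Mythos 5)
Your proposal is correct and follows essentially the same route as the paper's proof: separate $\mathcal{A}$- and $\mathcal{S}$-invariance by entropy arguments, exchangeability of the momenta, independence of $\bm{p}$ and $\bm{r}$ under $\bar\nu = \nu(\cdot\,|\,\mathcal{F}_{\rm inv})$, identification of the $r$-marginal first via Lemma~\ref{lem:Gibbscar} (using momentum exchangeability to kill the cross terms), and only then of the $p$-marginal (using the product structure of the $r$-marginal just obtained to handle the non-exchangeable deformation cross terms). The only divergence is the closing step: the paper obtains $\tilde\beta = \beta$ simply by substituting the already-identified Gibbs form of the $p$-marginal into the definition $\hat\beta^{-1} = \bar\nu\bigl((p_0-\pi)U'(p_0)\bigr)$, so your extra test with the mixed function $(p_{x+1}-\pi)(r_x-\varrho)$, while valid, is redundant.
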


\begin{proof}
Let $\nu$ be a translation invariant probability measure with finite entropy density invariant for ${\mc L}$. We have to show this is a mixture of the $\nu_{\beta, \lambda,\lambda'}$. As stated before, it can be shown by entropy arguments that $\nu$ is separately invariant for ${\mc A}$ and for ${\mc S}$. The invariance w.r.t. ${\mc S}$ implies that the law of $\bm p$ is exchangeable under $\nu$. Thus, we are reduced to proving that if $\nu$ is invariant for ${\mc A}$, shift invariant with finite entropy density and such that the law of $\bm{p}$ is exchangeable, then $\nu$ is a mixture of the $\nu_{\beta,\lambda,\lambda'}$'s. 

Let $\nu$ as above and call ${\mc F}_{\rm{inv}}$ the $\sigma$-field of invariant events for the shift. As in Lemma~\ref{lem:ind} or in the proof of Theorem~2.1 of \cite{FFL}, one can show that $\bm p$ and $\bm r$ are independent under ${\bar \nu}(\cdot)  = \nu(\cdot | {\mc F}_{\rm{inv}})$. 

Moreover, since an invariant set is, up to a $\nu$-negligible set, a tail invariant set, as argued in Lemma \ref{lem:22}, it can be proved that ${\bar \nu}$ is invariant for ${\mc A}$. Observe also that the law of ${\bm p}$ under ${\bar \nu}$ is exchangeable. Now, we can identify the distribution of ${\bm r}$ as follows. The thermodynamic relations give a one to one correspondence between the averages of the momentum and kinetic energy, $(\pi, T)$, and the chemical potentials, $\beta>0, \lambda \in \RR$, through the relations
\begin{equation*}
\pi (\beta,\lambda) = -\partial_{\lambda} \log Z_U (\beta,\lambda), \quad T(\beta, \lambda) = - \partial_{\beta} \log Z_U (\beta, \lambda).
\end{equation*}
Let $\beta, \lambda$ be the ${\mc F}_{{\rm{inv}}}$-measurable functions defined by the relations 
\begin{equation}
\label{eq:tr111}
\pi (\beta, \lambda)= {\bar \nu} (p_k), \quad T (\beta, \lambda) = {\bar \nu} (U(p_k)).
\end{equation}

Consider a test function of the form $\psi({\bm r},{\bm p})= \varphi ({\bm r}) (p_j -\pi)$ for a smooth function $\varphi$. Since $\int {\mc A} \psi \, d{\bar \nu} =0$, by using the independence of $\bm r$ and $\bm p$ under ${\bar \nu}$ and the exchangeability of the law of $\bm p$ under $\bar \nu$, we obtain easily that
\begin{equation}
\label{eq:eq789}
\int (V' (r_j) -V' (r_{j-1}) ) \varphi \, d{\bar \nu} + {\hat \beta}^{-1} \int \left( \partial_{r_{j-1}} \varphi  -\partial_{r_j} \varphi \right) d{\bar \nu} =0
\end{equation}
where ${\hat \beta}^{-1}= \int U' (p_j) (p_j -{\pi}) d{\bar \nu}$. The fact that ${\hat \beta}^{-1}$ is well defined, i.e. ${\hat \beta}$ is stricly positive $\nu$ a.s., can be proved as in Lemma \ref{lem:alpha}.

By Lemma \ref{lem:Gibbscar}, the equation (\ref{eq:eq789}) is sufficient to identify the distribution of ${\bm r}$ under ${\bar \nu}$ as a product measure $\prod_{x \in \ZZ} Z_V ({\hat \beta}, {\hat \lambda}') \, \exp\{ -{\hat \beta} V(r_x) -{\hat \lambda}' r_x  \} \, dr_x $. Moreover, ${\hat \beta}$ and ${\hat \lambda}'$ satisfy
\begin{equation*}
-\partial_{\lambda} \log Z_V ({\hat \beta}, {\hat \lambda}') ={\bar \nu} (r_k), \quad -\partial_{\beta} \log Z_V ({\hat \beta}, {\hat \lambda}') ={\bar \nu} ( V(r_k)).
\end{equation*}  

Setting $u={\bar \nu} (r_k)$ and using a test function of the form $\varphi ({\bm p}) (r_j -u)$, a similar computation shows that the law of ${\bm p}$ under ${\bar \nu}$ is in the form $\prod_{x \in \ZZ} Z_U ({\widetilde \beta}, {\widetilde \lambda}) \, \exp\{ -{\widetilde \beta} U(p_x) -{\widetilde \lambda} p_x  \} \, dp_x$ where ${\widetilde \beta}^{-1} = \int V' (r_j) (r_j -u) d{\bar \nu}$ and
\begin{equation*}
-\partial_{\lambda} \log Z_U ({\widetilde \beta}, {\widetilde \lambda}) ={\bar \nu} (p_k), \quad -\partial_{\beta} \log Z_U ({\widetilde \beta}, {\widetilde \lambda}) ={\bar \nu} ( U(p_k)).
\end{equation*}  
As above, ${\widetilde \beta}$ is $\nu$ a.s. strictly positive. In view of (\ref{eq:tr111}), we have ${\widetilde \beta}=\beta$ and ${\widetilde \lambda}=\lambda$. Injecting these informations in the definition of ${\hat \beta}^{-1}$, we obtain ${\hat \beta}=\beta$. 

We have therefore proved that 
$\bar \nu = \nu_{\beta, \lambda, {\hat \lambda}'}$ and the conclusion easily follows.   
\end{proof}

\section{(Sub)Diffusive scaling}
\label{sec:diffusive}

The hydrodynamic limit is nothing but a law of large numbers. The second step of the study consists in looking at the fluctuations. As explained in the introduction, we expect that fluctuations appear at a shorter time-scale than the diffusive one. The study of the fluctuation field~(\ref{eq:696}) for the deterministic system is very difficult. A more tractable quantity which allows to decide whether $\alpha=1$ or $\alpha<1$ is the diffusivity ${\mc D}_{\beta,\lambda,\lambda'} (t)$. 

To define this quantity, we introduce some notation. The equilibrium compressibility matrix~${\widetilde \chi}$ is the symmetric matrix defined by:
\begin{equation}
\label{eq:chi0}
{\widetilde \chi}= \left( 
\begin{array}{ccc}
\mu_{\beta,\lambda,\lambda'} ({\mc E}_0 ; {\mc E}_0)& \mu_{\beta,\lambda,\lambda'} ({\mc E}_0 ; {p}_0)& \mu_{\beta,\lambda,\lambda'} ({\mc E}_0 ; {r}_0)\\
\mu_{\beta,\lambda,\lambda'} ({p}_0 ; {\mc E}_0)& \mu_{\beta,\lambda,\lambda'} ({p}_0 ; {p}_0)& \mu_{\beta,\lambda,\lambda'} ({p}_0 ; {r}_0)\\
\mu_{\beta,\lambda,\lambda'} ({r}_0 ; {\mc E}_0)& \mu_{\beta,\lambda,\lambda'} ({r}_0 ; {p}_0)& \mu_{\beta,\lambda,\lambda'} ({r}_0 ; {r}_0)\\
\end{array}
\right).
\end{equation}
Here, $\mu_{\beta,\lambda,\lambda'} (f;g)$ denotes the covariance of the two functions $f$ and $g$ w.r.t. $\mu_{\beta, \lambda,\lambda'}$. Let $\widehat{I}_{x.x+1}$ be the normalized current associated to the three conservation laws. We do not give a precise definition of the latter for the deterministic dynamics but the reader will translate easily in this case the one we give in~\eref{eq:normalized_current} below for the energy-volume conserving dynamics. Roughly speaking, ${\widehat I}_{x,x+1}$ is obtained from the usual microscopic currents of the three conserved quantities by a change of frame prescribed by the linearized flow (\ref{eq:lf00})  of the hydrodynamic equations. Let ${\widetilde C}_{\beta,\lambda,\lambda'}$ be the current-current correlation function defined as
\[
{\widetilde C}_{\beta,\lambda,\lambda'} (t)= \sum_{x \in \ZZ} \left \langle \,  \widehat{ {I}}_{x,x+1} (t)   \, \widehat{ {I}}_{0,1} (0)^* \, \right \rangle_{\beta,\lambda}.
\]
Then, the diffusivity is given by
\begin{equation}
\label{eq:diff00}
{\widetilde {\mc D}}_{\beta,\lambda,\lambda'} (t)= {\widetilde \chi}^{-1} \int_0^{\infty} \left( 1-\frac{s}{t} \right)^+ {\widetilde C}_{\beta,\lambda,\lambda'} (s) ds.
\end{equation}  

When the current-current correlation function is integrable, the limit 
when $t\to +\infty$ of the diffusivity is well defined, and the following Green-Kubo formula
is obtained:
\begin{equation}
\label{eq:diff222}
{\widetilde {\mc D}}_{\beta,\lambda,\lambda'}^\infty = 
\lim_{t \to \infty} {\widetilde{\mc D}}_{\beta,\lambda,\lambda'} (t) =  {\widetilde{\chi}}^{-1} \int_0^{\infty} {\widetilde C}_{\beta,\lambda,\lambda'} (s) \, ds.
\end{equation}
The existence of the above limit depends on the time decay of ${\widetilde C}_{\beta,\lambda,\lambda'}$. For a diffusive (resp. super-diffusive) behavior, ${\widetilde{\mc D}}_{\beta,\lambda}  (t)$ is of order ${\mc O} (1)$ (resp. of order ${\mc O} (t^{1-\alpha})$). 

The super-diffusive behavior of the deterministic dynamics ($\gamma = 0$) can be proved easily for the linear dynamics, \emph{i.e.} $V(r)=r^2/2$, and the Kac-van-Moerbecke dynamics, \emph{i.e.} $V(r)=e^{-r}+r-1$. In these cases, the dynamics is actually ballistic since the diffusivity is of order ${\mc O} (t)$. For the linear dynamics, this follows from the fact that the total current is a constant of motion. For the Kac-van-Moerbecke dynamics, an application of the Mazur inequality (see~\cite{M}), as done in~\cite{Z} and \cite{BO0} for the Toda lattice, shows that the diffusivity is also of order ${\mc O} (t)$. Apart from these two cases, showing a super-diffusive behavior for general potentials remains challenging.

We now turn to the energy-volume conserving dynamics. In this case, we also do not know, in general, what the large time behavior of the diffusivity is. Nevertheless, when $V(r)=r^2/2$, we can compute explicitly the value of the current-current correlation function and deduce the behavior of the diffusivity. However, before stating precisely the result, we first need to modify the definition~(\ref{eq:diff00}) of the diffusivity to account for the stochastic perturbation. We have now only two conserved quantities. The local energy-volume conservation is expressed by the formulas
\begin{equation*}
\eqalign{
V (\eta_x (t)) -V(\eta_x (0)) = - \nabla \left[ \int_{0}^t j_{x-1,x}^{e, \gamma} (s) ds + M_{x-1}^{e,\gamma} (t) \right], \cr
\eta_x (t) - \eta_x (0) = - \nabla \left[ \int_{0}^t j_{x-1,x}^{v, \gamma} (s) ds + M_{x-1}^{v,\gamma} (t) \right],
}
\end{equation*}
where the instantaneous currents $j_{x,x+1}^{e, \gamma}$, $j_{x,x+1}^{v,\gamma}$ are
\begin{equation}
\label{eq:current_gamma}
j_{x,x+1}^{e,\gamma} = j_{x,x+1}^{e} -\gamma \nabla \left[ V(\eta_x) \right],
\qquad
j_{x,x+1}^{v,\gamma}=  j_{x,x+1}^{v} -\gamma \nabla \left[ \eta_x \right],
\end{equation}
while the local martingales $M_{x}^{e,\gamma} (t) , M_{x}^{v,\gamma} (t)$ read
\[
\eqalign{
M_{x}^{e,\gamma}(t) = \int_0^t \left[V(\eta_{x+1}(s^-))-V(\eta_{x}(s^-))\right] \, d\left[
N_{x-1,x}(s)-\gamma s\right], \\
M_{x}^{v,\gamma}(t) = \int_0^t \left[\eta_{x+1}(s^-)-\eta_{x}(s^-)\right] \, d\left[
N_{x-1,x}(s)-\gamma s\right],
}
\]
where $(N_{x-1,x}(t))_{x \in \mathbb{Z}}$ are independent Poisson processes of intensity~$\gamma$.
Denoting by $[X,Y]_t$ the quadratic variation at time $t$ between two adapted processes $X$ and 
$Y$, it is easy to see that the martingales $M_x^{e,\gamma}$, $M_x^{v,\gamma}$ are such that
\begin{equation}
\label{eq:martqv}
\eqalign{
\left[ M_y^{a,\gamma} , M_z^{b,\gamma}  \right]_t =0, \qquad a,b \in \{e,v\}, \; y \ne z, \cr
\left[ M_x^{e,\gamma}, M_x^{e,\gamma} \right]_t = \gamma \int_0^t \left[ V(\eta_{x+1} (s)) - V(\eta_x (s)) \right]^2 ds,\cr
\left[ M_x^{v,\gamma}, M_x^{v,\gamma} \right]_t = \gamma \int_0^t \left[ \eta_{x+1} (s) -\eta_x (s)  \right]^2 ds,\cr
\left[ M_x^{e,\gamma}, M_x^{v,\gamma} \right]_t = \gamma \int_0^t \left[ V(\eta_{x+1} (s)) -V(\eta_x (s)) \right] \left[ \eta_{x+1} (s) -\eta_x (s)  \right] ds.
}
\end{equation}

Let us also introduce the equilibrium compressibility matrix~$\chi$:
\begin{equation}
\label{eq:chi}
\chi=
 \left( 
\begin{array}{cc}
\mu_{\beta,\lambda} (V(\eta_0); V(\eta_0)) & \mu_{\beta,\lambda} (V(\eta_0) ; \eta_0)\\
\mu_{\beta,\lambda} (\eta_0; V(\eta_0)) & \mu_{\beta,\lambda} (\eta_0 ; \eta_0)
\end{array}
\right)
=
 \left( 
\begin{array}{cc}
-\partial_{\beta} e & -\partial_{\beta} v\\
-\partial_{\lambda} e & -\partial_{\lambda} v
\end{array}
\right).
\end{equation}
Observe that $\chi$ is symmetric since $\partial_{\beta} v = \partial_{\lambda} e$. 
We define the normalized current $\widehat{J}_{x.x+1}$ as
\begin{equation}
\label{eq:normalized_current}
\widehat{ {J}}_{x,x+1}  = J_{x,x+1}  - {\mf J}(\bar \xi) - D{\mf J}({\bar \xi}) \left( \xi_x -{\bar \xi} \right), 
\qquad 
J_{x,x+1}= 
\left(
\begin{array}{c}
j_{x,x+1}^e\\
j_{x,x+1}^v
\end{array}
\right),
\end{equation}
where the term $D{\mf J}({\bar \xi}) \left( \xi_x -{\bar \xi} \right)$ has been subtracted in order to study fluctuations in the transport frame. We introduce now
\begin{equation*}
\fl
{\mc W}_{x,x+1} (t) =\int_{0}^{t} \left[ \widehat{ J}_{x,x+1} (s)  
- \gamma \nabla
\left(
\begin{array}{c}
V(\eta_x (s)\\
\eta_x (s)
\end{array}
\right) 
\right] ds 
+
M_x^\gamma(t),
\quad
M_{x}^\gamma (t)=
\left(
\begin{array}{c}
M_x^{e,\gamma} (t) \\
M_x^{v,\gamma} (t)
\end{array}
\right). 
\end{equation*}
Then, the diffusivity is defined by
\begin{equation}
\label{eq:diff}
\fl {\mc D}_{\beta,\lambda} (t)  = \lim_{N \to \infty} \frac{\chi^{-1}}{2}\frac{1} {(2N+1)t} \left\langle \left(  \sum_{|x| \le N} {\mc W}_{x,x+1} (t) \, \right) \left(  \sum_{|x| \le N} {\mc W}_{x,x+1} (t) \right)^* \right\rangle_{\beta,\lambda}. 
\end{equation}

Let us rewrite this expression in a more convenient way, introducing he current-current correlation function
\begin{equation}
\label{eq:cbl}
C_{\beta,\lambda} (t)= \sum_{x \in \ZZ} \left \langle \,  \widehat{ {J}}_{x,x+1} (t)   \, \widehat{ {J}}_{0,1} (0)^* \, \right \rangle_{\beta,\lambda}.
\end{equation}
Remark first that the terms in (\ref{eq:diff}) coming from the discrete gradients in ${\mc W}_{x,x+1}$ disappear. Indeed, after summation over $|x| \le N$, only two boundary terms at $x = \pm N$ remain and since they are divided by $2N+1$, their contribution vanishes in the limit $N \to +\infty$. Besides, the cross terms between the martingales and the normalized currents $\widehat{J}_{y,y+1}$ cancel. The argument is based on a time-reversal property of the current and can be read in~\cite{BBO2}. Moreover, by (\ref{eq:martqv}), 
\begin{equation*}
\left\langle{ \left(  \sum_{|x| \le N} M^\gamma_{x} (t)  \right) \left(  \sum_{|x| \le N} M^\gamma_{x} (t) \right)^*}\right\rangle_{\beta,\lambda} = 2t \, (2N +1) \, \gamma \, \chi. 
\end{equation*}
In conclusion, the diffusivity for the stochastic energy-volume conserving dynamics is given by 
\begin{equation}
\label{eq:diff2}
{\mc D}_{\beta,\lambda}^\gamma(t)= \chi^{-1} \int_0^{\infty} \left( 1-\frac{s}{t} \right)^+ C_{\beta,\lambda} (s) ds  + {\gamma} \, {\rm Id}_2,
\end{equation} 
where ${\rm Id}_2$ is the $2 \times 2$ identity matrix. Observe that the long time behavior of ${\mc D}_{\beta,\lambda} (t)$ is clearly driven by the long time behavior of the current-current correlation function. 

\begin{theo}
\label{th:GK}
Define the function $g \, : \, [0,\pi/2] \times (0,+\infty) \to \RR$ by
\[
g(\omega, t)= \frac{4}{\pi}\, e^{-8 \gamma t \sin^2 (\omega)} \cos^2 (\omega),
\]
and consider the infinite volume dynamics generated by ${\mc L}$, with the potential 
$V(r)=r^2 /2$, started at equilibrium under the Gibbs measure~$\mu_{\beta,\lambda}$. Then,  
\begin{equation}
\label{eq:expression_current_current_correlation}
\sum_{x \in \ZZ} \left \langle \,  \widehat{ {J}}_{x,x+1} (t)   \, \widehat{ {J}}_{0,1} (0) ^* \, \right \rangle_{\beta,\lambda} 
=
\left(
\begin{array}{cc}
\displaystyle \beta^{-2} \int_{0}^{\pi/2} g (\omega, t )\, d\omega &0\\[10pt]
0&0\\
\end{array}
\right).
\end{equation}
It follows that the only non-trivial term of this matrix is of order $t^{-1/2}$ as $t$ goes to infinity. In particular ${\mc D}_{\beta,\lambda}^\gamma(t)$ is of order ${\mc O}(t^{1/2})$. 
More precisely,
\[
{\mc D}_{\beta,\lambda}^\gamma(t) \sim \frac{8\sqrt{t}}{3\beta\sqrt{2\pi\gamma}} \,
\left( \begin{array}{cc}
\beta & 0 \\
\lambda & 0
\end{array} \right).
\]
\end{theo}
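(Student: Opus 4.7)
The starting point is that for $V(r) = r^2/2$ the full generator $\mathcal L = \mathcal A + \gamma\mathcal S$ is \emph{linear} in $\eta$, while the initial Gibbs state is Gaussian: all space--time multi-point correlations can therefore in principle be accessed via Wick's theorem combined with a computation on a random linear propagator.

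\textbf{Step 1: reduction to centred variables.}
Writing $\tilde\eta_x = \eta_x - \tau$ with $\tau = -\lambda/\beta$, the constant shift $\eta \mapsto \eta + c$ leaves both the drift $\eta_{x+1} - \eta_{x-1}$ and the exchange $\mathcal S$ invariant, so the tilde variables solve the same equation with i.i.d. centred Gaussian initial marginals of variance $\beta^{-1}$. The normalized currents read
\[
\widehat J^e_{x,x+1} = -\tilde\eta_x\tilde\eta_{x+1} + \tau(\tilde\eta_x - \tilde\eta_{x+1}), \qquad
\widehat J^v_{x,x+1} = \tilde\eta_x - \tilde\eta_{x+1}.
\]
All entries of the matrix in~(\ref{eq:expression_current_current_correlation}) other than the $(1,1)$ one vanish: the cross terms involve odd moments and are killed by the $\tilde\eta \mapsto -\tilde\eta$ symmetry of the law of $(\tilde\eta(t))$; the $\widehat J^v$--$\widehat J^v$ contribution telescopes to zero after summing over~$x$. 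The $(1,1)$ entry is moreover $\lambda$-independent, the $\tau$--piece of $\widehat J^e$ contributing a further $\tau^2$ times a telescoping sum. It thus suffices to compute
\[
\sum_x\langle \tilde\eta_x(t)\tilde\eta_{x+1}(t)\tilde\eta_0(0)\tilde\eta_1(0)\rangle_{\beta,0}.
\]

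\textbf{Step 2: random kernel and Wick's theorem.}
By linearity there is a random linear operator $K(t)$, measurable with respect to the noise alone, with $K(0)=I$ and $\tilde\eta_x(t) = \sum_y K_{xy}(t)\tilde\eta_y(0)$. Conditioning on $K$ and applying Wick to the Gaussian initial data,
\[
\langle\tilde\eta_x(t)\tilde\eta_{x+1}(t)\tilde\eta_0(0)\tilde\eta_1(0)\rangle = \beta^{-2}\bigl(\mathbb E[K_{x,0}(t)K_{x+1,1}(t)] + \mathbb E[K_{x,1}(t)K_{x+1,0}(t)]\bigr).
\]
Introducing
\[
\psi_n(t) := \sum_x \mathbb E\bigl[K_{x,0}(t)K_{x+n,1}(t)\bigr], \qquad \Psi_n(t) := \psi_n(t) + \psi_{-n}(t),
\]
the goal becomes the computation of $\Psi_1(t)$.

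\textbf{Step 3: reduction to a half-line Neumann Laplacian.}
The evolution of $\mathbb E[K_{x,0}K_{y,1}]$ under $\mathcal L$, summed along the diagonal $y = x+n$, exhibits two crucial cancellations. First, the drift $\mathcal A$ contributes zero to $\dot\psi_n$ for every $n$, by a simple shift of summation index. Second, the exchange gives the closed system
\[
\dot\psi_n = 2\gamma(\psi_{n+1}-2\psi_n+\psi_{n-1}),\ |n|>1, \qquad
\dot\psi_{\pm 1} = \gamma(2\psi_{\pm 2}-3\psi_{\pm 1}+\psi_{\mp 1}),
\]
together with $\psi_0(t)\equiv 0$ and $\psi_n(0)=\delta_{n,1}$. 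Adding the equations at $\pm n$ decouples the two sectors, and $\Psi_n$ satisfies the discrete heat equation
\[
\dot\Psi_n = 2\gamma(\Psi_{n-1}-2\Psi_n+\Psi_{n+1}) \ (n\ge 2), \qquad \dot\Psi_1 = 2\gamma(\Psi_2-\Psi_1),
\]
with $\Psi_n(0) = \delta_{n,1}$: this is precisely the discrete Laplacian on $\mathbb N_{\ge 1}$ with reflecting (Neumann) boundary $\Psi_0 = \Psi_1$.

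\textbf{Step 4: diagonalization and asymptotics.}
This Neumann Laplacian diagonalizes in the half-line cosines $\phi_n^{(\theta)} = \cos(\theta(n-1/2))$, $\theta \in [0,\pi]$, with eigenvalue $-8\gamma\sin^2(\theta/2)$ and orthogonality $\int_0^\pi \cos(\theta(n-1/2))\cos(\theta(m-1/2))\,d\theta = (\pi/2)\delta_{nm}$. Expanding $\delta_{n,1}$ gives
\[
\Psi_1(t) = \frac{2}{\pi}\int_0^\pi \cos^2(\theta/2)\,e^{-8\gamma t\sin^2(\theta/2)}\,d\theta,
\]
which is exactly $\int_0^{\pi/2} g(\omega,t)\,d\omega$ after $\omega = \theta/2$. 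Laplace's method at $\omega = 0$ gives $\Psi_1(t) \sim (2\pi\gamma t)^{-1/2}$. Substituting into~(\ref{eq:diff2}) and using the elementary identity $\int_0^t (1-s/t)s^{-1/2}\,ds = (4/3)\sqrt{t}$ together with the explicit Gaussian $\chi^{-1}$, whose first column equals $(2\beta^2, -2\beta^2\tau)$ with $\tau = -\lambda/\beta$, yields the announced equivalent for $\mathcal D_{\beta,\lambda}^\gamma(t)$.

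The main obstacle is Step~3: the careful bookkeeping of the action of the generator on the two-point quantity $\mathbb E[K_{x,0}K_{y,1}]$, that simultaneously (i)~establishes the pointwise drift cancellation for every $\dot\psi_n$ and (ii)~identifies the correct boundary equation at $|n|=1$, which upon symmetrization becomes the Neumann condition. The remaining steps are then essentially routine.
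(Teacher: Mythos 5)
Your plan is correct and reaches the stated formulas, but it travels a genuinely different road from the paper. The paper works in the Hilbert space $\mathcal H$ generated by the semi-inner product $\ll\cdot,\cdot\gg$ (in which discrete gradients vanish), observes that the span of $\{\widehat\eta_{x,x+k}\}$ is stable under $\mathcal L$, and solves the resolvent equation $(z-\mathcal L)u=\widehat\eta_{0,1}$: after a Fourier transform in $x$ this becomes a three-term recursion in the gap variable $k$, solved by a geometric ansatz, and the $(1,1)$ entry of the Laplace transform comes out as $\frac{1}{2\gamma\beta^2}\mathcal T(z/4\gamma)$, which is then matched a posteriori with the Laplace transform of $\beta^{-2}\int_0^{\pi/2}g(\omega,t)\,d\omega$. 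You instead stay in the time domain: conditioning on the exchange noise, applying Wick's theorem to the Gaussian data, and deriving a closed discrete heat equation for $\Psi_n$ on $\NN_{\geq 1}$ with Neumann boundary, diagonalized by the half-line cosines. The two computations are in fact the same linear algebra in disguise --- your ODE system $\dot\Psi_1=2\gamma(\Psi_2-\Psi_1)$, $\dot\Psi_k=2\gamma(\Psi_{k-1}-2\Psi_k+\Psi_{k+1})$ with $\Psi_k(0)=\delta_{k,1}$ is exactly the inverse Laplace transform of the paper's recursion~(\ref{eq:65}) evaluated at $\omega=0$, so your Step~3, which you flag as the main obstacle, is indeed correct (I checked it against the generator computation in items (a)--(b) of the paper's proof: the drift produces only gradients that cancel under $\sum_x$, and the exchange at the contact bond acts trivially on $\widehat\eta_{x,x+1}$, which is precisely the Neumann reflection). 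What your route buys is that $g(\omega,t)$ emerges directly from the spectral decomposition rather than being recognized at the end as an inverse Laplace transform, and the mechanism becomes transparent (the gap $k$ performs a rate-$2\gamma$ random walk reflected at $k=1$, whence the $t^{-1/2}$ tail). What it costs is the need to justify the conditioning/Wick step and the interchange of expectation with the infinite sums defining $\psi_n$ --- bookkeeping that the paper's $\mathcal H$-formalism packages away. Your final substitution into~(\ref{eq:diff2}), including the first column $(2\beta^2,2\beta\lambda)^*$ of $\chi^{-1}$ and the factor $\int_0^t(1-s/t)s^{-1/2}ds=\frac43\sqrt t$, reproduces the stated equivalent exactly.
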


\begin{proof}
To simplify the notation we omit the indices $\beta, \lambda$. We denote by $\ll \cdot , \cdot \gg$ the semi-inner product defined on local integrable functions by 
\[
\fl \qquad \ll f, g \gg = \sum_{x \in \ZZ} \left[ \langle f (\theta_x g) \rangle - \langle f \rangle \langle g \rangle \right]
= \lim_{N \to +\infty} \frac{1}{2N+1} \sum_{|x|,|y| \leq N} \left\langle \theta_x f, \theta_y g 
\right\rangle - \langle f \rangle\langle g \rangle,
\]
and by ${\mc H}$ the Hilbert space obtained by the completion of the vector space of local functions with respect to $\ll \cdot , \cdot \gg$. 
Note that every discrete gradient, \textit{i.e.} a local function of the form
$\theta_1 f -f$, is equal to~0 in~${\mc H}$. 

We introduce the Laplace transform matrix ${\bf L}(z)$ (for $z>0$) 
of the current-current correlation function:
\[
{\bf L} (z) = \int_0^{\infty} e^{-z s} \ll   \widehat{ {J}}_{0,1} (s)   \, , \, \widehat{ {J}}_{0,1} (0) ^* \, \gg  ds 
= \ll (z -{\mc L})^{-1} \widehat{ J}_{0,1} \, , \, \widehat{ J}_{0,1}^*  \gg, 
\]
where we adopted the short notation $(z -{\mc L})^{-1} \widehat{ J}_{0,1}$ for
\begin{equation}
\left(
\begin{array}{c}
(z -{\mc L})^{-1} \widehat{J}_{0,1}^1 \\
(z -{\mc L})^{-1} \widehat{J}_{0,1}^2 
\end{array}.
\right)
\end{equation}
Since $V(r)=r^2/2$, it holds $Z(\beta, \lambda)= (\beta/2\pi)^{-1/2} e^{\lambda^2/ 2\beta}$, $v=\tau= -\lambda/\beta$ and $e= \frac{1}{2\beta} \left( \lambda^2 /\beta + 1 \right)$. It follows that 
\[
\chi= \left( 
\begin{array}{cc}
\displaystyle \frac{1}{2\beta^2} + \frac{\lambda^2}{\beta^3} 
& \quad \displaystyle - \frac{\lambda}{\beta^2}\\[10pt]
\displaystyle -\frac{\lambda}{\beta^2} &\quad  \displaystyle \frac{1}{\beta}
\end{array}
\right),
\]
\[
D{\mf J}({\bar \xi})=
\left(
\begin{array}{cc}
0 & -2 \tau \\
0 & -2
\end{array}
\right), \quad
\widehat{ J}_{0,1} = 
-\left(
\begin{array}{c}
(\eta_0 -\tau) (\eta_1 -\tau) +\tau (\eta_1 -\eta_0)\\
\eta_1 -\eta_0
\end{array}
\right)
\]
Since any discrete gradient is equal to zero in ${\mc H}$, 
the only non-zero entry of ${\bf L} (z)$ is the $(1,1)$ component 
\[
  {\bf L}^{1,1}(z) = \ll (z -{\mc L})^{-1} \widehat{ \eta}_{0,1} \, , \,   \widehat{ \eta}_{0,1}\gg,
  \]
where $\widehat{ \eta}_{x,y} = (\eta_x -\tau) (\eta_y -\tau)$.

The determination of the current-current correlation function therefore
amounts to solving the resolvent equation $(z- {\mc L}) u = \widehat{ \eta}_{0,1}$. 
Consider the vector space 
$\mathcal{V}$ spanned by the orthogonal basis 
$\{\widehat{\eta}_{x,x+k}\}_{x\in\mathbb{Z}, k \geq 1}$ (which is also the space 
spanned by the family $\widehat{\eta}_{x,y}$ for $x \neq y$).
This space is stable by $\mathcal{L}$ since 
\begin{enumerate}[(a)]
\item $\mathcal{A}\widehat{\eta}_{x,x+k} = (\theta_1 - 1)\widehat{\eta}_{x,x+k-1}-
  (\theta_1 - 1)\widehat{\eta}_{x-1,x+k}$ is the difference of two discrete gradients,
  hence is equal to~0 in~$\mathcal{H}$;
\item when $k\geq 2$, $\mathcal{S}\widehat{\eta}_{x,x+k} = (1+\theta_1)\widehat{\eta}_{x,x+k-1}
+ (1+\theta_{-1})\widehat{\eta}_{x,x+k+1} - 4\widehat{\eta}_{x,x+k}$, while
$\mathcal{S}\widehat{\eta}_{x,x+1} = \widehat{\eta}_{x-1,x+1}+\widehat{\eta}_{x,x+2}
-2\widehat{\eta}_{x,x+1}$.
\end{enumerate}
We may thus look for a solution of the form
\begin{equation}
u = \sum_{x\in\mathbb{Z}} \sum_{k\geq1} \rho_k(x) \widehat{\eta}_{x,x+k},
\end{equation}  
with the condition 
\begin{equation}
  \label{eq:normalization_condition_u}
  \sum_{x\in\mathbb{Z}} \sum_{k\geq1} |\rho_k(x)|^2 < +\infty
\end{equation}
since $u$ is sought in~$\mathcal{H}$.
In view of point~(a) above, $u$ should actually be a solution to
\[
(z- \gamma {\mc S}) u = \widehat{ \eta}_{0,1} = 
\sum_{x \in \ZZ} \sum_{k=1}^{\infty} F_{k} (x) \widehat{ \eta}_{x,x+k},
\]
with $F_{1} (x) = (z+2\gamma) \rho_1 (x) - \gamma (\rho_2 (x) + \rho_2 (x-1))$ 
and, for $k\geq2$,
\[
\fl \qquad F_{k} (x) = (z+4\gamma) \rho_{k} (x)
- \gamma \Big(\rho_{k-1} (x) + \rho_{k-1} (x+1) + \rho_{k+1} (x) + \rho_{k+1} (x-1)\Big).
\]
Using the fact that $\{\widehat{\eta}_{x,x+k}\}_{x\in\mathbb{Z}, k \geq 1}$ is an orthogonal 
basis of~$\mathcal{V}$, and identifying the coefficients in front of the different
terms, it follows that
\begin{equation}
\label{eq:64}
F_{k} (x) = {\bf 1}_{\{k=1, x=0\}}.
\end{equation}
Introducing the Fourier transform $\widehat{h}(\omega)$ (for $\omega \in \TT_1$)
of a given function $h \in l^2(\ZZ,\RR)$: 
\[
\widehat{h}(\omega) = \sum_{x \in \ZZ} e^{2 \ri \pi \omega x} h(x),
\]
the conditions~(\ref{eq:64}) can be equivalently reformulated as
\begin{equation}
\label{eq:65}
\fl \qquad \left\{ 
\eqalign{
(z+2\gamma) \widehat{\rho}_1 (\omega) - \gamma (1 +e^{2 \ri \pi \omega}) 
\widehat{\rho}_2(\omega) = 1,\cr
(z+4\gamma) \widehat{\rho}_{k}(\omega) - \gamma (1 +e^{-2 \ri \pi \omega}) 
\widehat{\rho}_{k-1}(\omega) - \gamma (1 +e^{2 \ri \pi \omega})  
\widehat{\rho}_{k+1}(\omega) = 0, \quad k \ge 2.
}\right.
\end{equation}
By Parseval's relation, condition~(\ref{eq:normalization_condition_u}) is equivalent to
\[
\sum_{k \ge 1} \int_{\TT} | \widehat{\rho}_k(\omega)|^2 \, d\omega < +\infty.
\]
It is then easy to show that (\ref{eq:65}) and the above integrability condition lead to
$\widehat{\rho}_k(\omega)= \widehat{\rho}_1(\omega)(X(\omega))^{k-1}$,
with
\[
X(\omega) = \frac{2+z/(2\gamma)}{1+e^{2\ri\pi \omega}} \left( 1- \sqrt{1- \left(\frac{\cos (\pi \omega) }{1+z/(4\gamma)}\right)^2}\right),
\]
and the boundary condition
$(z+2\gamma) \widehat{ \rho}_1 (\omega) - \gamma (1+e^{2 \ri \pi \omega}) 
\widehat{ \rho}_1 (\omega) X (\omega) = 1.$
It follows that 
\[
{\bf L}^{1,1}(z) = \ll u, \widehat{ \eta}_{0,1} \gg = \beta^{-2} \sum_{x \in \ZZ} \rho_{1} (x) 
= \beta^{-2} \widehat{ \rho}_1(0)= \frac{1}{2\gamma\beta^2}\, {\mc T}\left(\frac{z}{4\gamma}
\right), 
\]
where
\[
{\mc T} (y) = \left[y + \sqrt{(1+y)^2- 1 }\right]^{-1}.
\]
A simple computation shows that, for any $y>0$,
\begin{equation*}
\fl \qquad \int_0^{\infty} e^{-y t} \left[ \frac{2}{\pi} \int_0^{\pi/2} e^{-2t\sin^2(x)} \cos^{2} (x) dx \right] dt = \frac{2}{\pi} \int_0^{\pi/2} \frac{\cos^2 (x)}{y +2 \sin^{2} (x)} \, dx = {\mc T} (y).
\end{equation*}
Since the Laplace transform uniquely characterizes the underlying function, 
we deduce that the current-current correlation function is indeed given 
by~(\ref{eq:expression_current_current_correlation}).

Now, by dominated convergence, the following limit holds as $t \to +\infty$:
\begin{eqnarray*}
\sqrt{t} \int_0^{\pi/2} g(\omega,t) \, d\omega & = \frac{4}{\pi} \int_0^{\pi\sqrt{t}/2} 
\exp\left(-8\gamma t \sin^2\left(\frac{\omega}{\sqrt{t}}\right) \right) 
\cos^2\left(\frac{\omega}{\sqrt{t}}\right) \, d\omega \cr
& \longrightarrow \frac{4}{\pi} \int_0^{+\infty} e^{-8\gamma \omega^2} \, d\omega 
= \frac{1}{\sqrt{2\pi\gamma}}.
\end{eqnarray*}
We then obtain the desired result with~(\ref{eq:diff2}).
\end{proof}


\section{Steady-state nonequilibrium systems}
\label{sec:sim}

\subsection{General setting}

The results of the previous section were limited to harmonic potentials (and the 
Kac-van-Moerbecke potential in the special case $\gamma = 0$).
For generic anharmonic potentials, we can only provide numerical evidence of the super-diffusivity.
However, it is difficult to estimate numerically the time autocorrelation functions of the 
currents because of their expected long-time tails, and because statistical errors are
very large (in relative value) when $t$ is large. 
Also, for finite systems (the only ones we can simulate on a computer), 
the autocorrelation is generically exponentially decreasing for anharmonic potentials,
and, to obtain meaningful results, the thermodynamic limit should be taken before 
the long-time limit.

A more tenable approach consists in studying a nonequilibrium system in its steady-state.
We consider a finite system of length $2N+1$ in contact with two 
thermostats which fix the value of the energy at the boundaries. 
The generator of the dynamics is given by
\begin{equation}
  \label{eq:generator_open}
        {\mc L}_{N,{\rm open}} = {\mc A}_{N} + \gamma {\mc S}_N + \lambda_\ell {\mc B}_{-N,T_\ell} + 
        \lambda_{r}  {\mc B}_{N,T_r},
\end{equation}
where ${\mc A}_N$ (resp. ${\mc S}_N$) is defined  by (\ref{eq:A}) (resp. (\ref{eq:S})) 
with $\Lambda_N=\{-N,\ldots,N\}$ and 
${\mc B}_{x,T} = T \partial_{\eta_x}^2 -V' (\eta_x) \partial_{\eta_x}$. 
The positive parameters $\lambda_\ell$ and $\lambda_r$ are the intensities of the thermostats. 

The generator ${\mc B}_{x,T}$ can be seen as a thermostatting mechanism since the semigroup $(S_t )_{t \ge 0}$ generated by ${\mc B}_{x,T}$ has, under suitable assumptions on $V$, a unique (reversible) invariant probability measure on $\RR$ given by
\begin{equation}
\nu_T(d\xi) = Z(\beta,0) e^{-\beta V(\xi)} \, d\xi , \qquad \beta = T^{-1}, 
\end{equation}  
and $S_t f $ converges exponentially fast to $\nu_T (f)$ 
for any observable $f \in L^2(\nu_T)$ (we will however not use these facts in the sequel).
Observe that 
\[
\nu_T ( V)= -\partial_{\beta} (\log Z(\beta,0))= e(\beta,0).
\] 
Hence, in order to fix the energy at site $-N$ (resp. $N$) to the value $e_\ell$ (resp. $e_r$), we have to choose $\beta_\ell= T_{\ell}^{-1}$ (resp. $\beta_r = T_r^{-1}$) such that $e(\beta_\ell,0)=e_\ell$ (resp. $e(\beta_r, 0)=e_r$).

In the special case when $T_\ell = T_r$, the Gibbs 
measure~$\mu_{\beta,0}|_{\Lambda_N}$ is invariant, and it can then be shown
(see \ref{sec:app_A}) that the the law of the stochastic 
process associated to~(\ref{eq:generator_open}) converges exponentially fast
to~$\mu_{\beta,0}|_{\Lambda_N}$. 

The proof of the existence and uniqueness of an invariant measure in the case
when $T_\ell \neq T_r$ is given in Proposition~\ref{prop:rb} for a suitable class of potentials
(see \ref{sec:app_A} for a proof in the case $\gamma = 0$ and \ref{sec:app_B} 
for a proof when $\gamma > 0$). 
We denote by $\langle \cdot \rangle_{\rm ss}$ the unique stationary state for the 
dynamics generated by ${\mc L}_{N,{\rm open}}$.

\begin{prop}
\label{prop:rb}
Assume that the smooth potential $V$ satisfies
\begin{itemize}
\item \emph{[Growth at infinity]} there exist real 
  constants $k \ge 2$, $a_k >0$, $C>0$ such that
\begin{equation}
\label{eq:potpot0}
\eqalign{
\lim_{\lambda \to + \infty} \lambda^{-k} V (\lambda q) = a_k |q|^k, \cr
\lim_{\lambda \to + \infty} \lambda^{1-k} V' (\lambda q) = k a_k |q|^{k-1} {\rm sign} (q),
}
\end{equation}
\begin{equation}
\label{eq:potpot}
\lim_{|q| \to \infty} \frac{V'' (q)}{(V' (q))^2} =0;  
\end{equation}
\item \emph{[Non-degeneracy]} For any $q \in \RR$ there exists $m:=m(q) \ge 2$ such that
 \begin{equation}
 \label{eq:nd}
 V^{(m)} (q) \ne 0.
 \end{equation}
\end{itemize}
Then, there exists a unique stationary probability measure for 
the Markov process generated by ${\mc L}_{N,{\rm open}}$. This stationary 
state has a smooth positive density with respect to the Lebesgue measure. 
\end{prop}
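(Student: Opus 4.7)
\medskip

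The overall plan is to first establish smoothness and positivity of any invariant density through an infinitesimal (H\"ormander-type) analysis of the generator, and then to secure existence and uniqueness via a Lyapunov/Foster-type argument based on the growth assumptions on~$V$. Since the behaviour of the generator in the cases $\gamma=0$ and $\gamma>0$ differ (the latter involving a jump component), the two situations are handled separately, as indicated in the corresponding appendices.

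Step 1 (smooth and positive density). I would show that the generator~$\mathcal{L}_{N,\mathrm{open}}$ is hypoelliptic. For $\gamma=0$, the operator is a second-order differential operator with diffusion vector fields $X_\ell = \sqrt{T_\ell}\,\partial_{\eta_{-N}}$, $X_r = \sqrt{T_r}\,\partial_{\eta_N}$ and a drift that combines the Hamiltonian part $\mathcal{A}_N$ and the friction at the boundaries. Computing the commutators $[\mathcal{A}_N,\partial_{\eta_{-N}}]$ produces a vector field with a factor $V''(\eta_{-N+1})$ along $\partial_{\eta_{-N+1}}$, and iterating, successive brackets $[\mathcal{A}_N,\cdot]$ propagate the diffusion through the chain, producing vector fields of the form $P(\eta)\,\partial_{\eta_x}$ for every interior site $x$, where $P$ is a polynomial combination of derivatives of $V$. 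The non-degeneracy condition~(\ref{eq:nd}) guarantees that at each configuration $\eta$, one can always find brackets of sufficiently high order whose coefficient $P(\eta)$ is non-zero, so the Lie algebra generated by the brackets spans the tangent space at every point. H\"ormander's theorem then gives hypoellipticity, hence any invariant measure admits a smooth density. For $\gamma>0$ one handles the additional jump part of $\mathcal{L}_{N,\mathrm{open}}$ separately: smoothness can still be transferred from the hypoelliptic continuous part, since the exchange generator $\mathcal{S}_N$ is a bounded linear operator on bounded measurable functions and cannot destroy regularity obtained from the diffusion. Positivity of the density follows from Stroock--Varadhan's support theorem: one checks that the control system associated with the SDE part is controllable, which again reduces to a bracket-generating condition that follows from~(\ref{eq:nd}).

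Step 2 (existence of an invariant measure). I would build a Lyapunov function $W\ge 1$ satisfying $\mathcal{L}_{N,\mathrm{open}} W \le -c W + C$ outside a compact set of $\Omega_N$. A natural candidate is $W = \exp(\theta H_N)$ with $H_N(\eta)=\sum_{x=-N}^N V(\eta_x)$ and $\theta>0$ small. Since $\mathcal{A}_N H_N = 0$ and $\mathcal{S}_N H_N = 0$, the only contribution comes from the boundary Langevin parts:
\begin{equation*}
\mathcal{L}_{N,\mathrm{open}} H_N = \lambda_\ell\bigl[T_\ell V''(\eta_{-N}) - (V'(\eta_{-N}))^2\bigr] + \lambda_r\bigl[T_r V''(\eta_N) - (V'(\eta_N))^2\bigr],
\end{equation*}
and assumption~(\ref{eq:potpot}) ensures that $V'' = o((V')^2)$ at infinity, so this quantity tends to $-\infty$ when $|\eta_{\pm N}|\to\infty$. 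Combined with the quadratic term generated by $\mathcal{L}_{N,\mathrm{open}} W / W$ involving $\theta^2 T_\bullet (V'(\eta_{\pm N}))^2$, choosing $\theta$ small enough allows one to retain a negative drift at boundary sites. To control the interior coordinates, one exploits the exchange noise: the operator $\gamma\mathcal{S}_N$ applied to a function like $\sum_x w_x V(\eta_x)$ with appropriate weights $w_x$ creates a linear flux of energy from the interior towards the damped boundaries, providing the missing dissipation. Putting the two pieces together yields a Lyapunov inequality for a modified $W$, which via Krylov--Bogolyubov (or the standard Meyn--Tweedie recipe) gives existence of an invariant probability measure.

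Step 3 (uniqueness). Any two invariant measures have smooth positive densities by Step~1 and thus share the same support $\Omega_N$. Combining the Lyapunov bound of Step~2 with the irreducibility (any open set is reached in finite time with positive probability, again by the control theorem) yields Harris recurrence, from which uniqueness of the invariant probability measure follows.

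The main obstacle is the construction of a genuine Lyapunov function: the Hamiltonian is conserved by both $\mathcal{A}_N$ and $\mathcal{S}_N$, so dissipation comes only through two boundary sites, and one has to show carefully that the combination of the boundary damping (controlled by~(\ref{eq:potpot})) and the diffusive spreading of energy by the exchange noise is strong enough to pull the full configuration back into a compact set. For $\gamma=0$ this forces one to rely on a subtler argument involving the Hamiltonian flow, which explains why the two cases are treated in separate appendices.
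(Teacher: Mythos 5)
Your Steps~1 and~3 are in the spirit of the paper's proof (hypoellipticity by bracket propagation using~(\ref{eq:nd}), controllability for positivity and irreducibility, Harris-type uniqueness), although note that the paper is careful to point out that irreducibility is \emph{not} a consequence of hypoellipticity and invokes a specific controllability theorem for divergence-free systems with a conserved quantity; your claim that positivity ``again reduces to a bracket-generating condition'' glosses over this. The genuine gap, however, is in Step~2. A generator-level inequality $\mathcal{L}_{N,{\rm open}}W\le -cW+C$ outside a compact set is simply false for $W=\exp(\theta H_N)$: since $\mathcal{A}_N H_N=\mathcal{S}_N H_N=0$, the drift $\mathcal{L}_{N,{\rm open}}W/W$ involves only $\eta_{\pm N}$, and it vanishes on the non-compact set of configurations whose energy is concentrated at interior sites. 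Your proposed repair via a weighted energy $\sum_x w_x V(\eta_x)$ does not close this gap: for $\gamma=0$ there is no exchange noise at all, and even for $\gamma>0$ the Hamiltonian part contributes flux terms $(\nabla w)_x\, V'(\eta_x)V'(\eta_{x+1})\sim|\eta|^{2k-2}$, which for $k>2$ dominate the $O(N^{-2})\sum_x V(\eta_x)\sim|\eta|^k$ dissipation produced by $\mathcal{S}_N$ acting on the weights, so the sign of the drift cannot be controlled. You acknowledge this as ``the main obstacle'' but do not supply the argument that overcomes it, which is precisely the hard part of the proposition.

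What the paper actually does is establish a Lyapunov condition at the level of the \emph{semigroup} over a fixed time $t_0$ (inequality~(\ref{eq:TTT})), not of the generator. The mechanism is a high-energy scaling limit: rescaling the process at energy $E$ and letting $E\to\infty$, the growth assumption~(\ref{eq:potpot0}) --- which your proof never uses --- forces convergence to a deterministic system driven by $\theta(q)=a_k|q|^k$, and one shows (inequality~(\ref{eq:diss})) that this limiting flow necessarily transports a positive amount of energy to the damped boundaries on any time interval. Combined with an exponential-martingale estimate controlling $T_{t_0}W/W$ by the time-integrated boundary dissipation, this yields~(\ref{eq:TTT}) and hence existence (and, with irreducibility and the strong Feller property, uniqueness). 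Without this integrated-in-time argument, or an equivalent substitute, your Step~2 does not go through.
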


The energy currents $j^{e,\gamma}_{x,x+1}$, which are such that
${\mc L}_{N,{\rm open}} (V (\eta_x)) = -\nabla j_{x-1,x}^e$ (for $x=-N, \ldots,N+1$), 
are given by the expressions~(\ref{eq:current_gamma})
for $x = -N+1,\dots,N-1$ while
\[
\eqalign{
j^{e,\gamma}_{-N-1,-N} = \lambda_\ell \left[ T_\ell V'' (\eta_{-N}) -(V' (\eta_{-N}))^2 \right],\\
j^{e,\gamma}_{N,N+1}   = -\lambda_r \left[ T_r V'' (\eta_{N}) -(V' (\eta_{N}))^2 \right].
}
\]
Since $\langle {\mc L}_{N,{\rm open}} (V (\eta_x)) \rangle_{\rm ss} = 0$,
it follows that, for any $x = -N,\ldots,N+1$, 
$\langle j_{x,x+1}^{e,\gamma} \rangle_{\rm ss}$ is equal to a 
constant $J_N^\gamma(T_\ell,T_r)$ independent of $x$. In fact,
\begin{equation}
\label{eq:total_current_NESS}
J^{\gamma}_N(T_\ell,T_r) = \left\langle \mathcal{J}_N^\gamma 
\right\rangle_{\rm ss}, 
\qquad
\mathcal{J}_N^\gamma = \frac{1}{2N}\sum_{x=-N}^{N-1} j_{x,x+1}^{e,\gamma}.
\end{equation} 
The latter equation is interesting from a numerical viewpoint since it allows to 
perform some spatial averaging, hence reducing the statistical error of the results.
Finally, let us mention that, for \emph{finite systems}, 
standard results of linear response theory
(see for instance~\cite{BO,RB}) allow to relate the nonequilibrium current $J_N^\gamma$
to the current autocorrelation at equilibrium as
\[
\lim_{T_\ell, T_r \to T} 
\frac{J^{\gamma}_N(T_\ell,T_r)}{T_\ell-T_r} = \frac{2N}{T^2} \int_0^{+\infty} 
\mathbb{E}_{\mu_{T^{-1},0}} \left[
\mathcal{J}_N^\gamma(t) \mathcal{J}_N^\gamma(0) \right ] dt,
\]
where the expectation is over all initial conditions and realizations of the paths.

\subsection{Harmonic potentials}
\label{sec:harmonic_pot_NESS}

The current can be computed explicitly for harmonic potentials when no stochastic perturbation
is present. In this case, a ballistic behavior (no damping of the current with the system size)
is found, as for standard harmonic oscillator chains~\cite{RLL}. It can also be shown that the macroscopic profile of energy $q \in [-1,1] \to \langle V(\eta_{[xq/N]}) \rangle_{\rm ss}$ is flat and equal to $(T_\ell +T_r)/4$.

\begin{prop}
  Consider the system described by the 
  generator~(\ref{eq:generator_open}) with $\gamma = 0$, for the harmonic potential $V(r)=r^2/2$.
  Then the steady-state energy current is 
  \begin{equation}
    \label{eq:harmonic_current}
    J^0_N(T_\ell,T_r) = \frac{T_\ell-T_r}{\lambda_\ell+\lambda_\ell^{-1}+\lambda_r+\lambda_r^{-1}}.
  \end{equation}
\end{prop}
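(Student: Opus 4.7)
The plan is to exploit the fact that for $\gamma = 0$ and $V(r)=r^2/2$ the dynamics generated by $\mathcal{L}_{N,{\rm open}}$ is a linear Ornstein--Uhlenbeck process $d\eta_t = A\eta_t\, dt + \sigma\, dB_t$, with $A = A_0 - \lambda_\ell E_{-N,-N} - \lambda_r E_{N,N}$ (where $A_0$ is the real skew-symmetric tridiagonal matrix associated with $\mathcal{A}_N$ on $\{-N,\dots,N\}$) and $\sigma\sigma^T = 2\lambda_\ell T_\ell E_{-N,-N} + 2\lambda_r T_r E_{N,N}$. By Proposition~\ref{prop:rb}, the unique stationary state is a centred Gaussian, so the analysis reduces to computing the covariance matrix $C_{xy}:=\langle \eta_x \eta_y\rangle_{\rm ss}$, which satisfies the Lyapunov equation $AC+CA^T+\sigma\sigma^T=0$. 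Since $j^{e,0}_{x,x+1}=-\eta_x\eta_{x+1}$ at every interior bond, computing the current amounts to extracting a single entry of $C$, and I plan to do so by closing a small system of low-order moment identities rather than inverting the full Lyapunov equation.

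The argument proceeds by applying the stationarity identity $\langle\mathcal{L}_{N,{\rm open}}(\eta_x\eta_y)\rangle_{\rm ss}=0$ (justified by the finiteness of all moments of the Gaussian) to three types of observables. Taking $y=x\in\{\pm N\}$ and using the carr\'e du champ of the boundary generators yields the boundary identities $J^0_N=\lambda_\ell(T_\ell-C_{-N,-N})=\lambda_r(C_{N,N}-T_r)$. Taking $y=x$ interior recovers the current-conservation relation $C_{x,x+1}=-J^0_N$ for every bond. Applied to each bond observable $\eta_x\eta_{x+1}$ in turn, the same identity produces a linear relation between the next-to-nearest entry $A_x:=C_{x,x+2}$, the two neighbouring diagonal entries, and, at the two boundary bonds only, the current $J^0_N$.

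The key observation is that in the bulk these relations take the exact gradient form $A_x - A_{x-1}=C_{x,x}-C_{x+1,x+1}$ for $x\in\{-N+1,\dots,N-2\}$, while the two boundary bonds give
\[
A_{-N} = C_{-N,-N}-C_{-N+1,-N+1}-\lambda_\ell J^0_N,
\qquad
A_{N-2}= C_{N,N}-C_{N-1,N-1}+\lambda_r J^0_N.
\]
Summing the interior relations from $x=-N+1$ to $x=N-2$ makes the diagonal contributions telescope to $C_{-N+1,-N+1}-C_{N-1,N-1}$, which cancels exactly against the analogous terms in the boundary expressions for $A_{-N}$ and $A_{N-2}$. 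All unknown interior diagonal entries therefore drop out, leaving the clean, $N$-independent relation $C_{-N,-N}-C_{N,N} = (\lambda_\ell+\lambda_r)J^0_N$. Combining this with the two boundary identities for $C_{\pm N,\pm N}$ produces $T_\ell-T_r = J^0_N(\lambda_\ell+\lambda_\ell^{-1}+\lambda_r+\lambda_r^{-1})$, which is~(\ref{eq:harmonic_current}).

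The main obstacle is arranging the algebra so that the telescoping absorbs all unknown interior diagonal entries $C_{k,k}$ without requiring a separate analysis of the full nonequilibrium energy profile. This cancellation is ultimately a consequence of the skew-symmetry of the bulk drift $A_0$, which forces the interior bond equations into an exact gradient form; it is the linear-theory counterpart of the Hamiltonian conservation of the total volume discussed in Section~\ref{subsec:linmodel}, and it is precisely what makes $J^0_N$ ballistic (\textit{i.e.}, independent of~$N$).
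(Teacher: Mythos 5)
Your proof is correct and follows essentially the same route as the paper: the same stationarity identities $\langle\mathcal{L}_{N,{\rm open}}(\eta_x\eta_y)\rangle_{\rm ss}=0$ applied to the boundary diagonal, interior diagonal, and bond observables, with the same telescoping of the next-to-nearest-neighbour covariances that eliminates the interior diagonal entries and yields $C_{-N,-N}-C_{N,N}=(\lambda_\ell+\lambda_r)J^0_N$. The Ornstein--Uhlenbeck/Lyapunov-equation framing and the remark on the skew-symmetry of the bulk drift are nice additional context but do not change the substance of the argument.
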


\begin{proof}
  Introduce $C_{x,y} = \langle \eta_x \eta_y \rangle_{\rm ss} = C_{y,x}$
  for $(x,y) \in \Lambda^2$.
  First, note that the stationarity of the current 
  implies $C_{x,x+1} = -J_N^0(T_\ell,T_r)$ for $x=-N,\ldots,N$
  and 
  \begin{equation}
    \label{eq:current_boundaries}
    \lambda_\ell (T_\ell-C_{-N,-N}) = -\lambda_r(T_r-C_{N,N}) = J_N^0(T_\ell,T_r).
  \end{equation}
  Besides, for $x = -N+1,\ldots,N-2$, it holds
  \[
  \langle {\mc L}_{N,{\rm open}} (\eta_x\eta_{x+1}) \rangle_{\rm ss} = 0 = C_{x+1,x+1}+C_{x,x+2}
  -(C_{x,x}+C_{x-1,x+1}).
  \]
  Therefore, $C_{N-1,N-1}+C_{N-2,N} = C_{-N+1,-N+1}+C_{-N,-N+2}$.
  Now, 
  \[
  \fl \qquad \langle {\mc L}_{N,{\rm open}} (\eta_{-N}\eta_{-N+1}) \rangle_{\rm ss}= 0
  = C_{-N+1,-N+1} + C_{-N+2,N} - \lambda_\ell C_{-N,-N+1} - C_{-N,-N},
  \]
  and, similarly, $C_{N,N-2} + C_{N-1,N-1} = C_{N,N} - \lambda_r C_{N,N-1}$.
  In conclusion,
  \[
  \lambda_\ell C_{-N,-N+1} + C_{-N,-N} = C_{N,N} - \lambda_r C_{N,N-1}.
  \]
  The expression~(\ref{eq:harmonic_current}) is then obtained by combining the above equality 
  with~(\ref{eq:current_boundaries}) and the relation 
  $C_{N,N-1} =  C_{-N,-N+1} = -J_N^0(T_\ell,T_r)$.
\end{proof}
 
When $\gamma > 0$, we expect to observe an 
anomalous diffusion with divergence rate $\delta=1/2$, \textit{i.e.} $N J^\gamma_N \sim C_\gamma 
\sqrt{N}$ when $N$~is large enough. 
This is indeed confirmed by numerical simulations, see Figure~\ref{fig:harmonic}.

\subsection{Anharmonic potentials}

The nonlinear case is much more difficult. We estimate the 
exponent $\delta \ge 0$ such that 
\begin{equation}
  \label{eq:def_delta}
  N J_N^{\gamma} \sim N^{\delta}
\end{equation}
using numerical simulations. 
If $\delta=0$, the system is a normal conductor of energy.
If on the other hand $\delta > 0$, it is a superconductor.

\subsubsection{Numerical scheme.}
The time-discretization of the dynamics with generator~(\ref{eq:generator_open})
is done with a standard splitting strategy, decomposing the generator as the sum of
a deterministic part ${\mc A}_{N}$, a thermostat part $\lambda_\ell {\mc B}_{-N,T_\ell} + 
\lambda_{r} {\mc B}_{N,T_r}$, and the stochastic perturbation
$\gamma {\mc S}_N$, and integrating each part in this order.
We denote by $\Delta t$ the time-step. 

A simple numerical scheme for the deterministic part of the dynamics
by relying on the Hamiltonian interpretation of the system.
The longtime integration 
of Hamiltonian system is well understood. The most standard scheme used in practice
is the so-called St\"ormer-Verlet scheme (see again~\cite{HLW}), which,
for separable Hamiltonians (where the total energy is the sum of a kinetic part depending
only on the momenta, and a potential part depending only on the positions) 
can be seen as a Strang approximation of the Hamiltonian evolution with positions and momenta
updated successively.
This amounts here to updating successively variables with odd and even indices.
This corresponds to the Strang 
splitting based on the following decomposition of the generator:
\[
\mathcal{A}_N = \mathcal{A}_N^{\rm even} + \mathcal{A}_N^{\rm odd},  
\]
with, in the case when $N$ is even,
\[
\fl \mathcal{A}_N^{\rm even} = \sum_{x = 1}^{N-1} 
\Big( V'(\eta_{2x+1-N})-V'(\eta_{2x-1-N})\Big) \partial_{\eta_{2x-N}}
+ V'(\eta_{-N+1}) \partial_{\eta_{-N}} - V'(\eta_{N-1}) \partial_{\eta_{N}},
\]
and a similar definition for $\mathcal{A}_N^{\rm odd}$.
This splitting is particularly convenient since the time 
evolutions generated by $\mathcal{A}_N^{\rm even}$ are 
$\mathcal{A}_N^{\rm odd}$ are both analytically integrable.
In conclusion, the numerical scheme used for the deterministic part reads:
\[
\fl \quad \left \{
\eqalign{
  \eta_x^{n+1/2} 
  = \eta_x^n + \frac{\Delta t}{2} \, \Big( V'(\eta^n_{x+1})-V'(\eta^n_{x-1}) \Big), 
  \quad x = -N+1,-N+3,\dots,N-1 \cr
  \eta_y^{n+1} = \eta_y^{n} + \Delta t \, 
  \Big( V'(\eta^{n+1/2}_{y+1})-V'(\eta^{n+1/2}_{y-1}) \Big), 
  \quad y = -N,-N+2,\dots,N \cr
  \eta_x^{n+1} 
  = \eta_x^{n+1/2} + \frac{\Delta t}{2} \, \Big( V'(\eta^{n+1}_{x+1})-V'(\eta^{n+1}_{x-1}) \Big), 
  \quad x = -N+1,\dots,N-1.
}
\right.
\]

The thermostat part is taken care of by a simple Euler-Maruyama discretization:
\[
\left\{
\eqalign{
\eta_{-N}^{n+1} = \eta_{-N}^n - \lambda_\ell
\Delta t \, V'(\eta_{-N}^n) + \sqrt{2 \lambda_\ell T_\ell \Delta t} \, G_{-N}^n,\cr
\eta_{N}^{n+1} = \eta_{N}^n - \lambda_r
\Delta t \, V'(\eta_{N}^n) + \sqrt{2 \lambda_r T_r \Delta t} \, G_{N}^n,
}\right.
\]
where $(G_{\pm N}^n)$ are independent and identically distributed standard Gaussian
random variables.
Finally, the noise term with generator $\gamma \mathcal{S}$ is simulated by 
exchanging $\eta_x$ and~$\eta_{x+1}$ (for $x=-N,\ldots,N-1$) 
at exponentially distributed random times,
with an average time $\gamma^{-1}$ between two such exchanges. More precisely,
we attach to each couple $(x,x+1)$ a random time~$\tau_x^m$, with $\tau_i^0$ 
drawn from an exponential 
law with parameter $\gamma$, and where $(\tau_i^m)_{i,m}$ are independent. 
This time is updated as follows:
If $\tau_x^{m} \geq \Delta t$, then $\tau_x^{m+1} = \tau_x^{m} - \Delta t$, 
otherwise~$\eta_x$ and $\eta_{x+1}$ are exchanged and a new exchange time $\tau_x^{m+1}$ is
resampled from an exponential law of parameter~$\gamma$.  

\subsubsection{Numerical results.}
We considered the following potentials:
\begin{enumerate}
\item harmonic potential $V(r) = r^2/2$;
\item anharmonic FPU-like potential $V(r) = r^2/2+r^4/4$;
\item Kac-van-Moerbecke potential $V(r) = e^{-r}+r-1$;
\item rotor $V(r) = 1 - \cos(r)$.
\end{enumerate}
The time-step $\Delta t$ is chosen to ensure a good longtime preservation of energy
for the deterministic dynamics in the absence of stochastic perturbation.
In order to have a relative error in energy less than~$10^{-5}$,
we used $\Delta t = 0.005$ except for rotors where $\Delta t = 0.025$.
We set $\lambda_\ell = \lambda_r = 1$ (this choice maximizes the observed current in
the harmonic case according to~(\ref{eq:harmonic_current})), 
and considered a small temperature difference $T_\ell = 1.1$, $T_r = 0.9$.
We performed $10^9$ iterations for $N\leq 2^{12}$, $5 \times 10^8$ iterations for
$N = 2^{13}$, $2.5 \times 10^8$ iterations for $N = 2^{14}$ and $1.25 \times 10^8$ iterations for
$N = 2^{15}$. The corresponding system size in the latter case is $2N+1 = 65,537$.

\begin{figure}
\begin{center}
\includegraphics[width=12.cm]{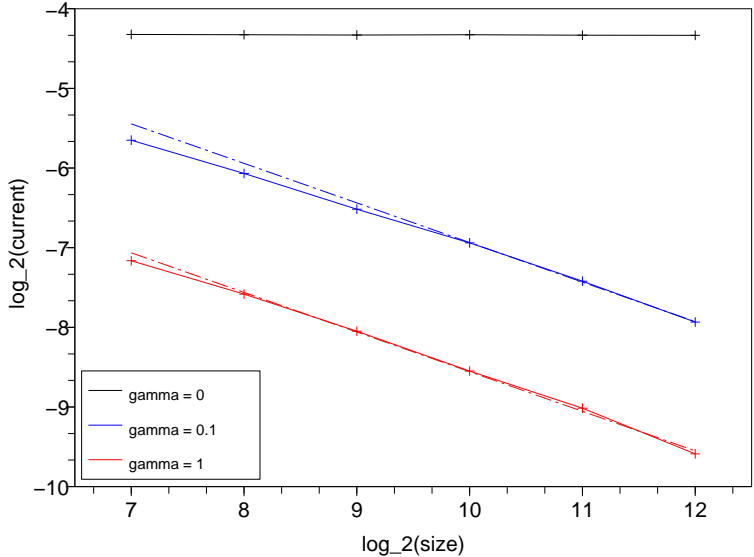}
\caption{\label{fig:harmonic}
  Current $J_N^\gamma$ as a function of the system size $2N+1$ for harmonic potentials
  $V(r) = r^2/2$ (in $\log_2-\log_2$ scale), with three values of $\gamma$: $\gamma=0$ (black), 
  $\gamma = 0.1$ (blue) and $\gamma = 1$ (red). 
}
\end{center}
\end{figure}
\begin{figure}
\begin{center}
\includegraphics[width=12.cm]{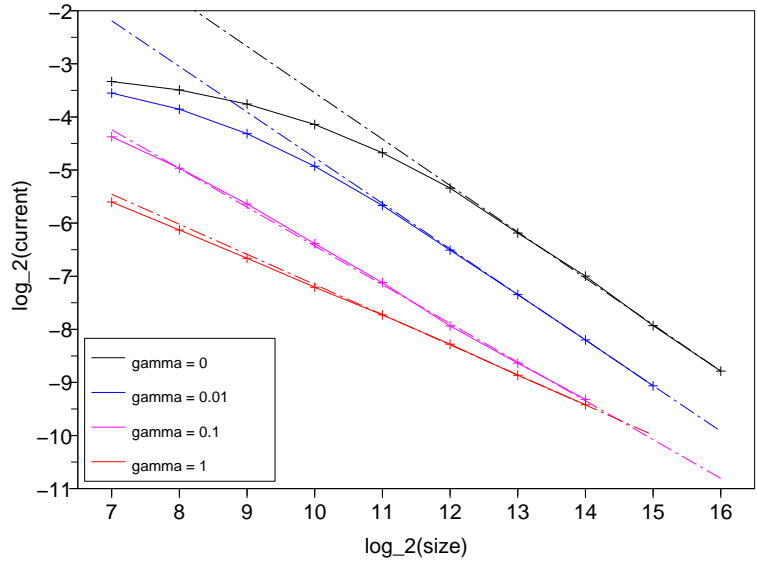}
\caption{\label{fig:R4}
  Current as a function of the system size $2N+1$ for anharmonic potentials
  $V(r) = r^2/2 + r^4/4$ (in $\log_2-\log_2$ scale), 
  with four values of $\gamma$: $\gamma=0$ (black), 
  $\gamma = 0.01$ (blue), $\gamma = 0.1$ (pink) and $\gamma = 1$ (red).
}
\end{center}
\end{figure}
\begin{figure}
\begin{center}
\includegraphics[width=12.cm]{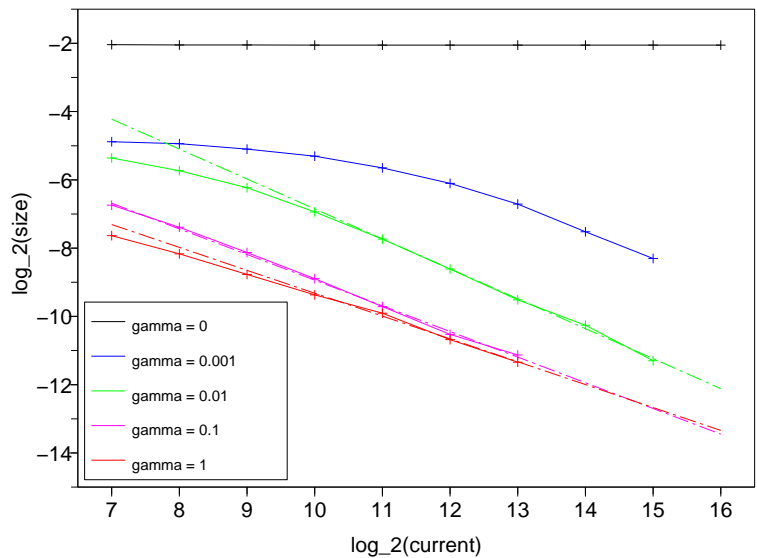}
\caption{\label{fig:Toda}
  Current as a function of the system size $2N+1$ for the Kac-van-Moerbecke
  potential $V(r) = e^{-r}+r-1$ (in $\log_2-\log_2$ scale), 
  with five values of $\gamma$: $\gamma=0$ (black), 
  $\gamma = 0.001$ (blue), $\gamma = 0.01$ (green), 
  $\gamma = 0.1$ (pink) and $\gamma = 1$ (red).
}
\end{center}
\end{figure}
\begin{figure}
\begin{center}
\includegraphics[width=12.cm]{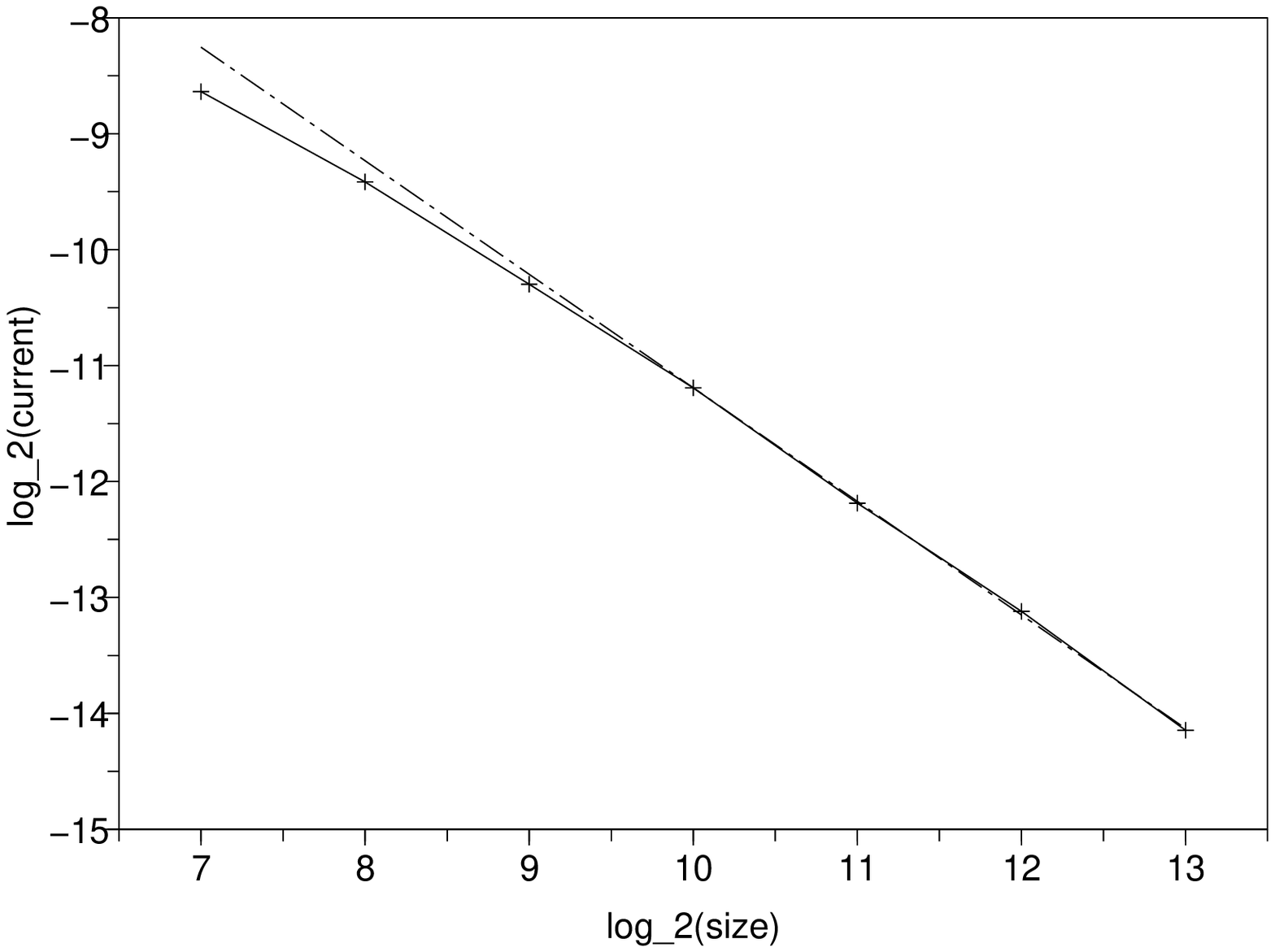}
\caption{\label{fig:rotor} Current as a function of the system size $2N+1$ for
  a chain of rotors: $V(r) = 1 - \cos(r)$. A normal conductivity is observed
  since the estimated conductivity exponent is $\delta \simeq 0.02$.}
\end{center}
\end{figure}

The simulation results are presented in Figures~\ref{fig:harmonic} to~\ref{fig:rotor}.
The conductivity exponents extracted from the numerical simulations
presented in Figures~\ref{fig:harmonic} to~\ref{fig:Toda} 
are reported in Table~\ref{tab:exponents}. 
Exponents in the harmonic case agree with their expected values
(see the discussion in Section~\ref{sec:harmonic_pot_NESS}).
For nonlinear potentials, except for the singular value $\delta = 1$ when 
$\gamma = 0$, the exponents seem to be monotonically increasing with $\gamma$.
We conjecture that these exponents should attain the limiting value~0.5 as $\gamma \to +\infty$.
A similar behavior of the exponents is observed for Toda chains~\cite{I}.
Note also that the value found for $\gamma = 0$ with the anharmonic FPU potential 
$V(r) = r^2/2 + r^4/4$ is smaller than the corresponding value for standard
oscillator chains, which is around 0.33 (see~\cite{MDN07}).
\begin{table}
\caption{\label{tab:exponents} Conductivity exponents $\delta$ (see~(\ref{eq:def_delta}) for
  the definition of~$\delta$).} 
\begin{indented}
\item[]\begin{tabular}{@{}llll}
\br
$\mathbf{\gamma}$ & harmonic & anharmonic & KVM \\
\mr
$0$               & 1     & 0.13 & 1    \\
$0.01$\phantom{0} & --    & 0.14 & 0.12 \\
$0.1$\phantom{00} & 0.50  & 0.27 & 0.25 \\
$1$\phantom{0.00} & 0.50  & 0.43 & 0.33 \\ 
\hline
\end{tabular}
\end{indented}
\end{table}


\appendix

\section{Existence and uniqueness of the stationary state for $\gamma=0$}
\label{sec:app_A}

We adapt in our context the methods introduced in~\cite{EPRB} 
(see~\cite{RB} for a review) for the chains of coupled oscillators, 
following closely the exposition given in~\cite{C}. 
The proof is divided in 3~steps: (i) the aim of \ref{sec:existence_pbm} is to prove 
the Lyapunov condition~(\ref{eq:TTT});
(ii) we then prove the smoothness of the transition probability  
using hypoellipticity arguments (\ref{sec:smoothness_pbm}), and finally (iii) show that the process
is irreducible (\ref{sec:uniqueness_pbm}). 
These three arguments, together with~\cite[Theorem~8.9]{RBa} 
allow to conclude the existence and the uniqueness of an invariant measure
with smooth density, as well as the exponential convergence of the law of the process
to the invariant measure.

\subsection{Definition of the process}

The process generated by ${\mc L}_{N,{\rm open}}$ is 
denoted by $(\eta (t))_{t \ge 0}$ and is the solution of the system 
\[
\rm{(S)}\ \left\{ 
\eqalign{
d\eta_{-N} (t) = V' (\eta_{-N+1}) dt - \lambda_\ell V' (\eta_{-N}) dt + 
\sqrt{2 \lambda_\ell T_\ell } \, dB_{-N} (t),\cr
d \eta_{x} (t) = (V'(\eta_{x+1}) -V' (\eta_{x-1})) dt, \qquad x=-N+1,\ldots,N-1,\cr
d \eta_{N} (t) = -V' (\eta_{N-1}) dt - \lambda_r V' (\eta_{N}) dt 
+ \sqrt{2 \lambda_r T_r } \, dB_{N} (t),
} \right.
\] 
where $B_{-N}, B_N$ are two standard independent Brownian motions. 
Since the drift coefficients are only locally Lipschitz, it is unclear 
whether a solution to $({\rm S})$ exists for any time. 
This problem can however easily be solved by using a suitable Lyapunov function,
as we now show.
For $\alpha>0$, consider
\begin{equation}
  \label{eq:Lyapunov_fct}
  W_{\alpha} (\eta) = \exp \left\{ \alpha \sum_{x=-N}^N V(\eta_x) \right\}.
\end{equation}  
A simple computation shows that
\begin{eqnarray*}
({\mc L}_{N,{\rm open}} W_{\alpha}) (\eta) = \alpha & \Big[ \lambda_r T_r V'' (\eta_N) 
+\lambda_r (T_r \alpha -1) (V' (\eta_N))^2 \cr
& + \lambda_\ell T_\ell V'' (\eta_{-N}) +\lambda_\ell (T_\ell \alpha -1) 
(V' (\eta_{-N}))^2\Big ] W_{\alpha},
\end{eqnarray*}
so that, using the assumption (\ref{eq:potpot}), 
${\mc L}_{N,{\rm open}} W_{\alpha} \leq A W_\alpha$ 
for some positive constant $A$ as soon as $\alpha$ is sufficiently small.
This is sufficient to prove the existence of a unique solution to $({\rm S})$ 
for any initial condition and for any time $t\ge 0$. We denote by~$(T_t)_{t \ge 0}$ 
the corresponding semigroup.

\subsection{Proof of the Lyapunov condition~(\ref{eq:TTT})}
\label{sec:existence_pbm}

Given a solution $(\eta (t))_{t \geq 0}$ 
of~$\rm{(S)}$, we define the scaled process $\eta^E$ (for $E>0$) 
as follows: 
\begin{equation}
  \label{eq:scaled_process}
  \eta^E (t) = E^{-1/k} \eta (t E^{2/k -1}).
\end{equation}
The scaling is chosen so that there is no explicit dependence in~$E$ in the formal limit
$E \to +\infty$ (see the system~${\rm (S_\infty)}$ below).
The process $\eta^E$ is a solution of
\begin{equation}
\fl \rm{(S_E)}\, \left\{
\eqalign{
d\eta^E_{-N} (t) =E^{1/k -1} \Big( V' (E^{1/k} \eta^E_{-N+1}) - \lambda_\ell V' (E^{1/k} \eta^E_{-N}) \Big) dt \cr
\qquad \qquad + \sqrt{\frac{2 \lambda_\ell T_\ell}{E} } \, dB_{-N} (t),\cr
d \eta^E_{x} (t) = E^{1/k -1} \Big(V'(E^{1/k} \eta^E_{x+1}) -V' (E^{1/k} \eta^E_{x-1})\Big) dt, \cr
\qquad \qquad x=-N+1, \ldots,N-1,\cr
d\eta^E_{N} (t) =-E^{1/k -1} \Big( V' (E^{1/k} \eta^E_{N-1}) + 
\lambda_r V' (E^{1/k} \eta^E_{N}) \Big) dt \cr
\qquad \qquad + \sqrt{\frac{2 \lambda_r T_r}{E} } \, dB_{N} (t),
} \right.
\end{equation}
where $B_{-N}, B_N$ are two standard independent Brownian motions. Observe that the scaled process is such that if $H (\eta (0))=E$ then $H_{E} (\eta^{E} (0)) =1$, where the scaled energy 
$H_E$ is
\begin{equation*}
H_E (\eta) = \frac{1}{E} \sum_{x=-N}^N V (E^{1/k} \eta_x).
\end{equation*}
In the limit $E \to +\infty$, the noise disappears 
and, by the scaling property~(\ref{eq:potpot0}) of the potential, 
$\rm{(S_E)}$ reduces formally to the following deterministic system:
\begin{equation*}
\fl \qquad \rm{(S_{\infty})}\; \left\{
\eqalign{
d\eta_{-N} (t) = \Big( \theta' (\eta_{-N+1}) - \lambda_\ell \theta' (\eta_{-N}) \Big) dt,\cr
d \eta_{x} (t) = \Big(\theta' (\eta_{x+1}) -\theta' (\eta_{x-1}) \Big) dt, 
\qquad \qquad x=-N+1, \ldots,N-1,\cr
d \eta_{N} (t) = - \Big( \theta' (\eta_{N-1}) + \lambda_r \theta' (\eta_{N}) \Big) dt, 
} \right.
\end{equation*}
where $\theta (q) = a_k |q|^{k}$ is a $C^1$ function since $k\geq2$. 
We also introduce the corresponding limiting energy function
\begin{equation*}
H_{\infty} (\eta) = \sum_{x=-N}^N \theta (\eta_x).
\end{equation*}

Arguing by contradiction, 
it is easy to show that if $({\widetilde \eta} (t))_{t \geq 0}$ 
is a solution of $(\rm{S_{\infty}})$ starting from an initial condition
${\widetilde \eta} (0)$ such that  $H_{\infty} ({\widetilde \eta} (0)) =1$, then, for any $\tau>0$, 
\begin{equation}
\label{eq:diss}
\int_0^\tau \left[ (\theta' ({\widetilde \eta}_{-N} (s)))^2 + 
(\theta' ({\widetilde \eta}_{N} (s)))^2 \right] \, ds  \; > 0. 
\end{equation} 
By using the continuity of solutions of stochastic differential equations with respect to both parameters and starting points, we can then state the following asymptotic result.

\begin{lemma}
\label{lem:th}
Assume that $((\eta^n(t))_{t \ge 0})_{n \in \mathbb{N}}$ is a sequence 
of solutions of $({\rm S})$ starting 
from $\eta^n(0)$, with $E_n = H(\eta^n(0)) \to + \infty$. 
Then there exists a subsequence $((\eta^{m}(t))_{t \geq 0})_{m\in \mathbb{N} }$ 
such that, for any $C>0$ and $t_0 >0$, 
\begin{equation*}
\lim_{m \to + \infty} {\mathbb E}_{\eta^{m}(0)} \left[ \exp \left( -C \int_0^{t_0}  \left[ (\theta ' (\eta_{-N}^m (s) )^2+ (\theta ' (\eta_N^m (s) )^2 \right] \, ds \right)\right] =0.
\end{equation*} 
\end{lemma}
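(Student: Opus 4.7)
The strategy is to rescale the processes so that their initial energy becomes~$1$ in the appropriate coordinates, and then to exhibit a deterministic limit along a subsequence, controlled by~(\ref{eq:diss}).

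Set $\zeta^n (t) = E_n^{-1/k} \eta^n (t E_n^{2/k-1})$ so that $\zeta^n$ solves $(\rm{S}_{E_n})$ and $H_{E_n}(\zeta^n(0)) = 1$. Since $H_{E_n}(\zeta) \to H_\infty (\zeta)$ locally uniformly (by the scaling~(\ref{eq:potpot0})) and $H_\infty$ is coercive, the initial conditions $\{\zeta^n (0)\}$ are contained in a compact subset of $\RR^{2N+1}$ for~$n$ large enough. After extracting a subsequence (still indexed by $m$), we may assume $\zeta^m (0) \to \widetilde{\eta}(0)$ for some configuration with $H_\infty(\widetilde{\eta}(0)) = 1$. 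Using~(\ref{eq:potpot0}) once more to identify the limiting drift, together with the fact that the noise in $(\rm{S}_{E_m})$ has vanishing diffusion coefficient of order $E_m^{-1/2}$, one shows by a standard Gronwall/coupling argument that $\zeta^m$ converges to the solution $\widetilde{\eta}$ of $(\rm{S}_\infty)$ with initial condition $\widetilde{\eta}(0)$, uniformly on $[0,\tau]$ for every $\tau>0$, in probability. A priori estimates using the Lyapunov function~$W_\alpha$ from~(\ref{eq:Lyapunov_fct}), already shown to satisfy $\mathcal{L}_{N,\mathrm{open}} W_\alpha \le A W_\alpha$, yield the tightness and moment bounds needed to make this convergence rigorous despite the non-globally-Lipschitz drift.

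The key point is now to translate the integral in the statement into one involving $\zeta^m$. By the change of variable $u = s E_m^{1-2/k}$ and the homogeneity $\theta'(E_m^{1/k} z) = E_m^{(k-1)/k} \theta'(z)$, a direct computation gives
\begin{equation*}
\int_0^{t_0} \left[(\theta'(\eta_{-N}^m(s)))^2 + (\theta'(\eta_N^m(s)))^2\right] ds
= E_m \int_0^{t_0 E_m^{1-2/k}} \left[(\theta'(\zeta_{-N}^m(u)))^2 + (\theta'(\zeta_N^m(u)))^2\right] du.
\end{equation*}
Since $k \ge 2$, the upper limit $t_0 E_m^{1-2/k}$ is at least~$t_0$, so we can bound the right-hand side from below by $E_m \int_0^{t_0}[(\theta'(\zeta_{-N}^m(u)))^2 + (\theta'(\zeta_N^m(u)))^2]\, du$.

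By~(\ref{eq:diss}) applied with $\tau = t_0$ to the deterministic trajectory $\widetilde{\eta}$, the quantity $A := \int_0^{t_0}[(\theta'(\widetilde{\eta}_{-N}(u)))^2 + (\theta'(\widetilde{\eta}_N(u)))^2]\, du$ is strictly positive. The convergence $\zeta^m \to \widetilde{\eta}$ uniformly on $[0,t_0]$ in probability, combined with continuity of $\theta'$, gives $\int_0^{t_0}[(\theta'(\zeta_{-N}^m))^2 + (\theta'(\zeta_N^m))^2]\, du \to A$ in probability. Consequently, $C \int_0^{t_0}[(\theta'(\eta_{-N}^m))^2 + (\theta'(\eta_N^m))^2]\, ds \ge C E_m \cdot A/2$ on an event of probability tending to~$1$, hence tends to $+\infty$ in probability. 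Since the integrand in the expectation is bounded by~$1$, dominated convergence yields the claimed limit.

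The main obstacle is the convergence $\zeta^m \to \widetilde{\eta}$: the drift of $(\rm{S}_{E_m})$ is not globally Lipschitz and only converges locally uniformly to the drift of $(\rm{S}_\infty)$, so the argument requires combining the locally uniform convergence supplied by~(\ref{eq:potpot0}) with uniform-in-$m$ a priori bounds on the excursions of $\zeta^m$, which can be obtained through the Lyapunov estimate on $W_\alpha$ rescaled to the $\zeta$-variables. Once this convergence is established, the remaining steps are routine.
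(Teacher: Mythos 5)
Your proposal is correct and follows essentially the same route as the paper: rescale to the processes solving $(\rm{S}_{E_m})$ with unit scaled energy, extract a convergent subsequence of initial data, pass to the deterministic limit $(\rm{S}_\infty)$ by continuity of solutions with respect to parameters and initial conditions, and use the homogeneity of $\theta'$ together with the change of variables to pick up the diverging factor $E_m$ in front of the strictly positive dissipation integral from~(\ref{eq:diss}). The only differences are cosmetic (convergence in probability plus dominated convergence at the end, rather than the paper's bound $\le e^{-A I(t_0)}$ followed by $A \to \infty$).
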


\begin{proof}
Recall that $(\eta^{n,E_n} (t))_{t \ge 0}$ is the solution of $({\rm S}_{E_n})$ starting from 
$\eta^{n,E_n}(0)$, which is such that $H_{E_n} (\eta^{n,E_n}(0))=1$. 
This implies that the sequence $( \eta^{n,E_n}(0))_{n \in \mathbb{N}}$ 
remains in a compact set and we can extract a subsequence, denoted by 
$(\eta^{m,E_{m}}(0))_{m \in \mathbb{N}}$, such that $\eta^{m,E_{m}}(0)$ converges to some
element ${\widetilde \eta}^* \in \mathbb{R}^{\{-N, \ldots, N\}}$. 
By continuity, $H_{\infty} ({\widetilde \eta}^*)=1$. 
Let $({\widetilde \eta}(t))_{t \ge 0}$ be the solution to $({\rm S}_{\infty})$
starting from ${\widetilde \eta}^*$.

We fix $C > 0$ and $t_0 > 0$.
For any $\tau>0$, thanks to the continuity of solutions of 
stochastic differential equations with respect to both parameters and starting points, it holds
\begin{equation*}
\lim_{m \to \infty} {\EE} \left[ \sup_{t \le \tau} \left| {\eta^{m,E_m}}(t) -{\widetilde \eta}(t) 
\right|^2 \right] = 0.
\end{equation*}
For any $\eta \in \RR^{\{-N, \ldots, N\}}$, we denote 
the term $\left[ (\theta ' (\eta_{-N}))^2+ (\theta ' (\eta_N))^2 \right] $ by $K(\eta)$.
Then, for any $a>0$,
\begin{equation*}
\lim_{m \to \infty} \EE_{\eta^{m,E_{m}}(0) } \left[ \exp\left(-a \int_{0}^\tau K \left(\eta^{m,E_{m}}(s) 
\right ) ds\right) \right] = e^{-a I(\tau)},
\end{equation*}
where $I(\tau)= \int_{0}^\tau K \left ({\widetilde \eta}(s) \right) ds > 0$ by (\ref{eq:diss}).
Observe also that, for $E_m \geq 1$, 
\[
\fl \qquad \int_0^{t_0} K (\eta^m(s)) \, ds =  E_m \int_{0}^{E_m^{1-2/k} t_0} K(\eta^{m,E_m} (u)) \, du
\geq E_m \int_{0}^{t_0} K(\eta^{m,E_m} (u)) \, du.
\]
Now, consider some arbitrary constant $A > 0$.
By choosing $m$ sufficiently large so that $A \leq C E_{m}$, it follows
\begin{equation*}
\lim_{m \to + \infty} {\mathbb E}_{\eta^{m}(0)} \left[ \exp\left(-C \int_0^{t_0} 
K(\eta^m(s)) ds \right)\right] \le e^{-A I(t_0)}.
\end{equation*}
The result is then obtained by letting $A$ go to infinity. 
\end{proof}

We now claim that there exist a time $t_0 >0$, finite constants $b_n$ 
and ${\kappa_n} \in (0,1)$ with $\lim_{n \to \infty} \kappa_n = 0$, 
and compact sets $K_n$ such that
\begin{equation}
  \label{eq:TTT}
  \forall \eta \in \RR^{\{-N, \ldots,N\}}, 
  \qquad T_{t_0} W (\eta) \le {\kappa}_n W (\eta) + b_n {\bf 1}_{K_n} (\eta),
\end{equation}
where $W:=W_{\alpha}$ is the Lyapunov function~(\ref{eq:Lyapunov_fct}). 
By choosing compact sets ${K_n}$ of the form 
$\{ \eta \in \RR^{\{-N, \ldots,N\}}\, | \, W (\eta) \le a_n\}$ 
with $a_{n} \to + \infty$, it is enough to show that 
\[
\lim_{n\to+\infty} \sup_{\eta \not \in K_n} \frac{T_{t_0} W (\eta)}{W (\eta)} = 0.
\]
Therefore, (\ref{eq:TTT}) is a consequence of Lemma~\ref{lem:th} and the following 
result.

\begin{lemma}
If $\alpha$ is sufficiently small, there exist $c,C>0$ 
and $q>1$ such that, for all $\eta(0) \in \RR^{\{-N, \ldots,N\}}$, 
\begin{equation*}
\fl \qquad \frac{T_{t_0} W (\eta (0))}{W (\eta (0) ) } \le e^{c t_0}  
\, {\mathbb E}_{\eta(0) } \left[ \exp \left( -C \int_0^{t_0}  
\left[ (\theta ' (\eta_{-N} (s) ))^2+ (\theta ' (\eta_N (s) ))^2 \right] \, ds \right)
\right] ^{1/q}. 
\end{equation*}
\end{lemma}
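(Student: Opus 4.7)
The plan is to combine a H\"older decoupling with a Feynman--Kac (supermartingale) estimate. Fix conjugate exponents $p,q>1$ with $1/p+1/q=1$, and write
\[
W_{\alpha}(\eta(t))
= \Big[ W_{p\alpha}(\eta(t))\, e^{(pC/q)\int_0^t \phi(\eta(s))\,ds}\Big]^{1/p}
   \, e^{-(C/q)\int_0^t \phi(\eta(s))\,ds},
\]
where $\phi(\eta)= (\theta'(\eta_{-N}))^2+(\theta'(\eta_{N}))^2$ and $C>0$ is a constant to be fixed. Applying H\"older's inequality then yields
\[
T_t W_\alpha(\eta(0))
\;\le\;
\Big(\EE_{\eta(0)}\big[W_{p\alpha}(\eta(t))\, e^{(pC/q)\int_0^t \phi\,ds}\big]\Big)^{1/p}
\Big(\EE_{\eta(0)}\big[e^{-C\int_0^t \phi\,ds}\big]\Big)^{1/q}.
\]
Since $W_{p\alpha}^{1/p}=W_\alpha$, it suffices to show that the first factor is controlled by $e^{ct/p}\,W_\alpha(\eta(0))$ for a suitable constant $c$, which would then deliver exactly the claimed inequality (after relabelling $c/p$ as $c$).

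First I would construct the supermartingale controlling the first factor. Consider
\[
Z_t \;=\; W_{p\alpha}(\eta(t))\, \exp\!\left( \int_0^t \big[(pC/q)\,\phi(\eta(s))-c\big]\,ds\right).
\]
By It\^o's formula, $Z_t$ is a local supermartingale provided
\[
\frac{\mathcal{L}_{N,\mathrm{open}}W_{p\alpha}}{W_{p\alpha}}(\eta) \;+\; \frac{pC}{q}\,\phi(\eta)\;\le\; c
\qquad \text{for every } \eta\in\RR^{\{-N,\dots,N\}}.
\]
Since $\mathcal{L}_{N,\mathrm{per}}$ acts only by diffusion at the boundary sites and transport in the bulk preserves $W_{p\alpha}$ up to gradient terms, the explicit computation already displayed in Appendix~A gives
\[
\frac{\mathcal{L}_{N,\mathrm{open}}W_{p\alpha}}{W_{p\alpha}}(\eta)
\;=\; p\alpha\sum_{y\in\{-N,N\}}\!\!\lambda_y\!\left[T_y V''(\eta_y) + (T_y p\alpha-1)\,(V'(\eta_y))^2\right].
\]
Once this upper bound is in hand, a standard localisation argument upgrades $Z_t$ to a genuine supermartingale (using the Lyapunov bound $\mathcal{L}_{N,\mathrm{open}}W_\alpha\le AW_\alpha$ already established to ensure integrability), yielding $\EE_{\eta(0)}Z_t\le Z_0=W_{p\alpha}(\eta(0))$, which is what we need.

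Next I would verify the pointwise inequality on $\mathcal{L}_{N,\mathrm{open}}W_{p\alpha}/W_{p\alpha}$ using Assumption~\ref{ass:V} together with~\eqref{eq:potpot0}--\eqref{eq:potpot}. Choosing $\alpha>0$ small enough so that $T_yp\alpha\le 1/2$ for $y=\pm N$, the coefficient of $(V'(\eta_y))^2$ is bounded above by $-p\alpha\lambda_y/2$. The hypothesis~\eqref{eq:potpot} yields, for any $\varepsilon>0$, a constant $K_\varepsilon$ such that $V''(r)\le \varepsilon(V'(r))^2 + K_\varepsilon$; taking $\varepsilon$ small enough to absorb the $V''$ contribution leaves a bound
\[
\frac{\mathcal{L}_{N,\mathrm{open}}W_{p\alpha}}{W_{p\alpha}}(\eta)\;\le\; K_1 - K_2\,\alpha\big[(V'(\eta_{-N}))^2+(V'(\eta_N))^2\big],
\]
for some $K_1,K_2>0$. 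Finally, the scaling property~\eqref{eq:potpot0} implies $(V'(r))^2/(\theta'(r))^2\to 1$ as $|r|\to\infty$, hence there exist $\kappa,K_3>0$ with $(V'(r))^2\ge \kappa\,(\theta'(r))^2-K_3$ for all $r\in\RR$. Substituting,
\[
\frac{\mathcal{L}_{N,\mathrm{open}}W_{p\alpha}}{W_{p\alpha}}(\eta)\;\le\; c - K_2\kappa\,\alpha\,\phi(\eta).
\]
It now suffices to pick $C>0$ so that $pC/q\le K_2\kappa\,\alpha$; the supermartingale condition is then satisfied, and the proof is complete.

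The main obstacle is the interplay between the three parameters $\alpha$, $p$ and $C$: $\alpha$ must be small enough both for the original Lyapunov bound and to make $T_y p\alpha<1$, while the resulting dissipation rate $K_2\kappa\,\alpha$ must still dominate $pC/q$. The hypothesis~\eqref{eq:potpot} is precisely what lets one absorb the $V''$ term (which would otherwise compete with $(V')^2$ and could destroy the negative sign), and~\eqref{eq:potpot0} is what allows replacing $(V')^2$ by $(\theta')^2$ in the final inequality. Once these two hypotheses are exploited, everything else is a bookkeeping exercise.
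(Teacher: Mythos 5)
Your proposal is correct and follows essentially the same route as the paper: a H\"older split with exponents $p,q$, control of one factor by a (super)martingale bound, and verification of the same pointwise drift inequality using~(\ref{eq:potpot0})--(\ref{eq:potpot}) and the smallness of~$\alpha$. The only difference is packaging: the paper decomposes $\alpha(H(\eta(t))-H(\eta(0)))$ into an exponential martingale $\exp(p\alpha M_t-\tfrac{(p\alpha)^2}{2}[M]_t)$ plus an explicit drift factor, whereas you fold the same It\^o correction into a Feynman--Kac supermartingale for $W_{p\alpha}$; the two are algebraically identical.
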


\begin{proof}
It holds
\[
H(\eta (t))= H (\eta (0)) +\int_0^t {\mc L}_{N,{\rm open}} H(\eta (s)) \, ds + M (t),
\]
where $M(t)$ is a continuous martingale of quadratic variation 
\begin{eqnarray*}
\left[ M \right]_t & = \int_{0}^t \Big( {\mc L}_{N,{\rm open}}(H^2) -2 H {\mc L}_{N,{\rm open}} 
H \Big) (\eta (s)) \, ds \cr
& = 2\int_{0}^t \Big( \lambda_\ell T_\ell [V' (\eta_{-N}(s))]^2 
+ \lambda_r T_r  [V' (\eta_N (s)) ]^2 \Big) ds.
\end{eqnarray*} 
Now, for any constants $p,q \geq 1$ such that $p^{-1} +q^{-1}=1$, 
\[
\eqalign{
\fl \frac{T_t W (\eta (0) )}{W(\eta (0) )}
= {\mathbb E}_{\eta(0)} \left[ e^{ \alpha (H(\eta (t)) -H(\eta (0))}\right] \cr
\fl \qquad = {\mathbb E}_{\eta(0)} \left[ \exp\left(\alpha M (t) + 
\alpha \int_0^{t} ({\mc L}_{N,{\rm open}} H)(\eta (s)) ds \right) \right]\cr
\fl \qquad =  {\mathbb E}_{\eta(0)}\left[ \exp\left(\alpha M (t) - p \frac{\alpha^2}{2} 
\left[ M \right]_t + p \frac{\alpha^2}{2} \left[ M \right]_t +  
\alpha \int_0^{t} ({\mc L}_{N,{\rm open}} H)(\eta (s)) ds\right) \right]\cr
\fl \qquad = {\mathbb E}_{\eta(0)} \left[ X_t Y_t  \right] 
\le {\mathbb E}_{\eta(0)} [X_t^p]^{1/p} {\mathbb E}_{\eta(0)} \left[ Y_t^q \right]^{1/q},
}
\]
where 
\[
X_t^p = \exp\left( p\alpha M (t) -\frac{(p \alpha)^2}{2} \left[ M \right]_t\right)
\] 
is an exponential martingale with constant mean equal to $1$. Moreover, 
\begin{eqnarray*}
Y_t^q & = \exp \left( pq \frac{\alpha^2}{2}  \left[ M \right]_t 
+ \alpha q \int_0^{t} ({\mc L}_{N,{\rm open}} H)(\eta (s)) ds \right) \cr
& = \exp\left( \int_{0}^t \Big(  F( \eta_{-N} (s), \lambda_\ell, T_\ell) + 
F( \eta_{N} (s), \lambda_r, T_r) \Big) ds \right),
\end{eqnarray*}
with
\[
F( r , \lambda, T) = \lambda \alpha q \left[  (p\alpha T -1) (V'(r))^2 + T V'' (r) \right].
\]
Taking $\alpha$ sufficiently small, and using~(\ref{eq:potpot0}) and~(\ref{eq:potpot}), 
we see that there exists two constants $C,c>0$ 
(depending on $\alpha,p, T_\ell, T_r, \lambda_\ell, \lambda_r$ and $V$) such that
\[
\forall (r,\lambda,T) \in \mathbb{R}\times\{ \lambda_\ell, \lambda_r \} \times\{ T_\ell, T_r\}
\qquad 
F(r, \lambda,T) \le qc - C[\theta'(r)]^2.
\]
This completes the proof of the lemma.
\end{proof}

\subsection{Smoothness of the transition probability}
\label{sec:smoothness_pbm}

The generator ${\mc L}_{N,{\rm open}}$ can be written as
\[
{\mc L}_{N,{\rm open}} = {X}_0 +\lambda_\ell T_\ell {X}_{-N}^2 + \lambda_r T_r {X}_{N}^2,
\]
where $X_0,X_{\pm N}$ are first-order differential operators: 
\[
{X}_0 ={\mc A}_N - \lambda_\ell V' (\eta_{-N})\partial_{\eta_{-N}} 
- \lambda_r V' (\eta_{N})\partial_{\eta_{N}}, 
\qquad  
{X}_{\pm N} = \partial_{\eta_{\pm N}}.
\]
If the Lie algebra ${\mf L}$ generated by the vector fields $X_0, X_{\pm N}$, 
\textit{i.e.} the smallest Lie algebra containing 
\begin{equation*}
\fl \qquad \{ X_i\}_{ i \in \{-N,0,N\}}, \quad \{[ X_i, X_j]\}_{i,j \in \{-N,0,N\}}, 
  \quad \{[X_i, [ X_j, X_k]\}_{i,j,k \in \{-N,0,N\}}, \quad \ldots
\end{equation*} 
has full rank at every point $\eta$, then the generator ${\mc L}_{N,{\rm open}}$ is
hypoelliptic, and, by H\"ormander's theorem on hypoelliptic operators (see~\cite{H}), 
the semigroup $(T_t)_{t \geq 0}$ 
generated by ${\mc L}_{N,{\rm open}}$ has a smooth transition probability 
density, is strong Feller and the invariant measures, if they exist, 
also have a smooth density.  

To show that ${\mf L}$ has full rank,
we first observe that $\partial_{\eta_N} \in {\mf L}$. 
We then prove that $\partial_{\eta_{N-1}} \in {\mf L}$. Indeed, 
$[X_0, \eta_N]= -V'' (\eta_N) \partial_{\eta_{N-1}} + \lambda_r V'' (\eta_N)\partial_{\eta_N}$,
so that $-V'' (\eta_N) \partial_{\eta_{N-1}} \in {\mf L}$. 
If $V'' (\eta_N) \ne 0$ we already have $\partial_{\eta_{N-1}} \in {\mf L}$. 
Otherwise we compute the iterated Lie bracket 
$[\ldots,[X_{0}, \partial_{\eta_N} ],\ldots,{\partial_{\eta_N}}]$ and obtain 
$V^{(m)} (\eta_N) \partial_{\eta_{N-1}} \in {\mf L}$ for any $m \ge 2$. 
The non-degeneracy condition (\ref{eq:nd}) on $V$ therefore gives 
$\partial_{\eta_{N-1}} \in {\mf L}$. By iterating the above argument, it follows 
easily that $\partial_{{\eta_x}} \in {\mf L}$ for every $x \in \{-N, \ldots,N\}$. 

\subsection{Irreducibility of the dynamics}
\label{sec:uniqueness_pbm}

We show here that the semigroup $(T_t)_{t \ge 0}$ is strongly irreducible, that is, for every $\eta$, every $t>0$, and every non-empty open set $A$, it holds $T_t (\eta, A) >0$. Recall that, in general, irreducibility is not a consequence of hypoellipticity.  

Irreducibility can be proved by using a well-known relationship between stochastic 
differential equations and control theory (see for instance the discussions 
in~\cite[Section~3]{EPRB} or~\cite[Section 4]{C}). In~\cite{Ha}, a general approach for 
obtaining the desired controllability is presented for divergence-free systems having a 
conserved quantity and satisfying a H\"ormander condition. 
Since our system satisfies these assumptions, we can apply Theorem 2.1 of~\cite{Ha} 
and deduce that the semigroup $(T_t)_{t \ge 0}$ is strongly irreducible. 

\section{Existence and uniqueness of the stationary state for $\gamma>0$}
\label{sec:app_B}

We denote by $(T_t)_{t \ge 0}$ the semigroup generated by ${\mc L}_{N,{\rm open}}$ 
for some given $\gamma>0$, and by $({\widetilde T}_t)_{t \ge 0}$ the semigroup 
corresponding to ${\mathcal L}_{N,{\rm open}}$ for~$\gamma=0$. 
The same arguments as for the case $\gamma=0$ can be used to show that 
$(T_t)_{t \ge 0}$ satisfies some Lyapunov condition similar to~(\ref{eq:TTT}).
Relying on~\cite[Theorem~8.1]{RBa} for instance, 
it is then enough to show that the process is irreducible and that the transition 
probability has a smooth density.  

\subsection{Irreducibility of the dynamics}

We show first that $(T_t)_{t \ge 0}$ is strongly irreducible. 
Let us denote the probability transition of ${\widetilde T}_t$ 
by $\widetilde{p}_t (\eta, \xi)\,d\xi$ and the probability 
transition of ${T}_t$ by $p_t (\eta, d\xi)$.

Let $\sigma_1$ be the stopping time defined as the first time when two variables $\eta_x$ 
and $\eta_{x+1}$ are exchanged. Observe that $\sigma_1$ has an exponential law 
of parameter $2\gamma N$. For every bounded measurable function 
$f\,:\,\RR^{\{-N, \ldots,N\}} \to \RR$, it holds
\begin{equation}
\label{eq:sgi}
\fl \qquad \eqalign{
(T_t f) (\eta) &= {\mathbb E}_{\eta} \left[ f(\eta (t)) {\bf 1}_{\sigma_1  \ge t}\right]
+ {\mathbb E}_{\eta} \left[ f(\eta (t)) {\bf 1}_{\sigma_1 <t}\right]\cr
& = e^{-2N \gamma t} \int_{\xi} {\widetilde p}_{t} (\eta, \xi) f(\xi) d\xi \cr
& \quad + {\gamma} \int_0^t ds \, e^{-2\gamma N s}\;\sum_{x=-N}^{N-1}\, \int_{\xi} d\xi\, {\widetilde p}_s (\eta, \xi) \left( \int_{\xi' } p_{t-s} (\xi^{x,x+1}, d \xi') f(\xi') \right).
}
\end{equation}
Iterating the above argument, we obtain the following formula for $p_t$:
\begin{equation*}
\fl \eqalign{
p_t (\eta, d\xi) & = e^{-2\gamma N t} {\widetilde p}_{t} (\eta, \xi) d\xi \cr
& + \sum_{k=1}^{\infty}  {\gamma^k} \sum_{x_1, \ldots, x_k=-N}^{N-1} \left[ \int_0^\infty \ldots \int_0^\infty  ds_1 \ldots ds_{k+1}  e^{-2 \gamma N (s_1 +\ldots +s_{k+1})} {\bf 1}_{\{ s_1 + \ldots +s_k \le t < s_1 + \ldots s_{k+1}\}} \right.\cr
& \left. \qquad \int_{\xi_1, \ldots ,\xi_k}  {\widetilde p}_{s_1} (\eta, \xi_1) {\widetilde p}_{s_2} (\xi^{x_1, x_1 +1}_1, \xi_2) \ldots \right.\cr
& \left. \phantom{\ldots \ldots \ldots } \ldots {\widetilde p}_{s_k} (\xi_{k-1}^{x_{k-1}, x_{k-1} +1}, \xi_k) {\widetilde p}_{t-(s_1 +\ldots s_k)} (\xi_k^{x_k, x_{k} +1} , \xi) d\xi_1 \ldots d\xi_k \right] d\xi.
}
\end{equation*}
This shows that $p_t (\eta, d\xi) = p_t (\eta,\xi)d\xi $ is absolutely continuous with respect to the Lebesgue measure. Moreover, the semigroup $(T_t)_{t \ge 0}$ is strongly irreducible because ${\widetilde T}_t$ is strongly irreducible.

\begin{lemma}
The semigroup $(T_t)_{t \ge 0}$ is strongly Feller, \emph{i.e.} it maps bounded measurable functions to continuous bounded functions.
\end{lemma}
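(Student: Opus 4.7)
The plan is to exploit the iterative representation~(\ref{eq:sgi}) of $p_t$ in terms of $\widetilde{p}_t$: iterating this identity yields an absolutely convergent series decomposition
\[
T_t f(\eta) = e^{-2\gamma N t}\,\widetilde{T}_t f(\eta) + \sum_{k \ge 1} A_k(\eta),
\]
where $A_k(\eta)$ is the contribution corresponding to exactly $k$ exchanges in the time interval $[0,t]$, namely
\[
A_k(\eta) = \gamma^k \sum_{x_1,\dots,x_k=-N}^{N-1} \int_{s_1+\dots+s_k \le t < s_1+\dots+s_{k+1}} e^{-2\gamma N(s_1+\dots+s_{k+1})}\,\bigl(\widetilde{T}_{s_1} F_k^{\vec{s},\vec{x}}\bigr)(\eta)\,d\vec{s},
\]
with $F_k^{\vec{s},\vec{x}}$ the bounded measurable function of $\xi_1$ obtained by successively applying the shift operators $T^{x_j,x_j+1}$ and the semigroups $\widetilde{T}_{s_j}$ and integrating $f$ at the end. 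Continuity of $T_t f$ will then be obtained term by term with a uniform bound permitting interchange of limit and sum.

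The first step is to check that the series is uniformly dominated. Since $\|F_k^{\vec{s},\vec{x}}\|_\infty \le \|f\|_\infty$ (every $\widetilde{T}_{s_j}$ is a Markov operator and permutations of coordinates preserve the $L^\infty$-norm), and since the full sum $\sum_{k \ge 0} |A_k(\eta)|$ is dominated by $\|f\|_\infty$ times the total mass of the Poisson distribution of parameter $2\gamma N t$, I obtain
\[
\sum_{k \ge 0} \|A_k\|_\infty \le \|f\|_\infty,
\]
so the series converges uniformly in $\eta$.

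The second step is to show that each $A_k$ is continuous in $\eta$. For $k=0$ this is precisely the strong Feller property of $\widetilde{T}_t$, which was established via H\"ormander's theorem in~\ref{sec:smoothness_pbm}. For $k \ge 1$, the integrand $\bigl(\widetilde{T}_{s_1} F_k^{\vec{s},\vec{x}}\bigr)(\eta)$ is, for each fixed $(\vec{s},\vec{x})$ with $s_1 > 0$, a continuous bounded function of $\eta$ (again by strong Feller of $\widetilde{T}_{s_1}$ applied to the bounded measurable function $F_k^{\vec{s},\vec{x}}$), and it is uniformly bounded by $\|f\|_\infty$. Dominated convergence then yields continuity of the time integral defining $A_k$. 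Summing a uniformly convergent series of continuous functions gives the continuity of $T_t f$.

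The main technical point to verify carefully is the very first one: writing $A_k$ in a form where the dependence on the initial condition $\eta$ sits exclusively inside one application of $\widetilde{T}_{s_1}$ acting on a bounded measurable function of $\xi_1$. Once this is done, all the continuity statements reduce to the strong Feller property of the $\gamma=0$ semigroup, which has already been proved. No additional hypoellipticity argument is needed for $\mathcal{L}_{N,\mathrm{open}}$ itself; the jump part is treated by the Duhamel-type expansion above.
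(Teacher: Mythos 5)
Your proof is correct, and it takes a route that differs from the paper's in one substantive respect. The paper stops after a single application of the Duhamel identity~(\ref{eq:sgi}), writes $(T_tf)(\eta')-(T_tf)(\eta)$ as the $\gamma=0$ difference plus an integral of $\bigl(\widetilde{T}_{t-s}\circ F_x\circ T_s f\bigr)(\eta)-\bigl(\widetilde{T}_{t-s}\circ F_x\circ T_s f\bigr)(\eta')$, and controls the latter by the \emph{locally uniform} strong Feller estimate~(\ref{eq:sf}) quoted from Schilling--Wang, i.e. a modulus of continuity of $\eta\mapsto\widetilde{T}_{t-s}(\eta,\cdot)$ that is uniform over the unit ball $\|u\|_\infty\le1$ (this uniformity is what lets them bound a term whose test function $F_x\circ T_sf$ varies with $s$). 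You instead iterate the Duhamel formula to the full jump expansion (which the paper itself writes down, but only to prove absolute continuity of $p_t$), and for each term $A_k$ you only need the \emph{pointwise} strong Feller property of $\widetilde{T}_{s_1}$ applied to the fixed bounded measurable function $F_k^{\vec{s},\vec{x}}$, followed by dominated convergence over the null-complement $\{s_1>0\}$ and uniform summability of the series via the Poisson mass. Your version thus avoids invoking the external uniform-continuity result at the price of a longer expansion; the paper's version is shorter and yields an explicit modulus of continuity. Two cosmetic remarks: the operators $T^{x_j,x_j+1}$ are exchange (permutation) operators, not shifts; and, as in the paper, the joint measurability in $(\vec{s},\eta)$ of the integrand (needed to apply Fubini and dominated convergence) is used tacitly — it follows from stochastic continuity of $\widetilde{T}_s$ but deserves a word.
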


\begin{proof}
The semigroup $({\widetilde T}_t)_{t \ge 0}$ is strongly  Feller. This implies (see \emph{e.g.}~\cite[Corollary~2.4]{SW}) that for every $t>0$ and every compact set $K$,
\begin{equation}
\label{eq:sf}
\lim_{\delta \to 0} \; \sup_{\substack{|\eta - \eta'| \le \delta \cr \eta, \eta' \in K}}\;  
\sup_{\| u \|_{\infty} \le 1} 
\left| \left({\widetilde T}_t u\right)(\eta) -\left({\widetilde T}_t u \right)
(\eta') \right| = 0.
\end{equation}
Let $f$ be a bounded measurable function with $\| f \|_{\infty} \le 1$. We have to show that, for any fixed $t>0$, $T_t f$ is a continuous bounded function. By (\ref{eq:sgi}) we have
\begin{eqnarray*}
&\fl (T_{t} f)(\eta') -(T_t f)(\eta) =e^{-2\gamma N t} \left(  ({\widetilde T}_t f)(\eta')-  ({\widetilde T}_t f)(\eta)  \right)\cr
&\fl \qquad +\gamma \sum_{x=-N}^{N-1} \int_{0}^t  e^{-2 \gamma N (t-s)} \left\{ \left( {\widetilde T}_{t-s} \circ F_x \circ T_s \circ f\right) (\eta) -\left( {\widetilde T}_{t-s} \circ F_x \circ T_s \circ f \right) (\eta')\right\} ds.
\end{eqnarray*}
Here, $F_x$ is the operator acting on functions $f \equiv f (\eta)$ as $(F_x f)(\eta)= f(\eta^{x,x+1})$. Observe that the absolute value of  the second term on the right hand side is bounded above by 
\begin{equation*}
\gamma \sum_{x=-N}^N \int_0^t e^{-2N\gamma (t-s)} \; \sup_{\| g \|_{\infty} \le 1} \left| \left( {\widetilde T}_{t-s} g\right) (\eta) - \left( {\widetilde T}_{t-s} g\right) (\eta')  \right| ds.
\end{equation*} 
By the bounded convergence theorem and (\ref{eq:sf}) we have 
\begin{equation*}
\lim_{\eta' \to \eta} (T_t f)(\eta') - (T_t f)(\eta) = 0, 
\end{equation*}
which concludes the proof.
\end{proof}

These two last properties (irreducibility and strong Feller property) are sufficient to have uniqueness of the invariant measure $\mu_{\rm ss}$. To show that the latter has a density, we observe that for any $t>0$, the condition $\mu_{\rm ss} T_t =T_{t}$ implies that, for any measurable set $A$ of $\RR^{\{-N, \ldots,N\}}$,
\begin{eqnarray*}
  \mu_{\rm ss} (A) 
  & = \int  d\mu_{\rm ss} (\eta) \left( \int {\bf 1}_A (\xi) p_{t} 
  (\eta, \xi) d\xi \right)\cr
  & = \int  {\bf 1}_A (\xi) \left( \int d\mu_{\rm ss} (\eta) p_{t} (\eta, \xi) \right) d\xi,
\end{eqnarray*}
where the second line follows from Fubini's theorem.

\ack
We thank the referees for their stimulating suggestions,
as well as J. Fritz for very relevant technical comments.
We acknowledge the support of the French Ministry of 
Education through the grants ANR-10-BLAN 0108 (SHEPI) and ANR-09-BLAN-0216-01 (MEGAS).

\section*{References}

\end{document}